%% file: main.tex
\documentclass[aps,prx,reprint,superscriptaddress,longbibliography,nofootinbib,amsmath,amssymb]{revtex4-2}

\usepackage{graphicx}
\usepackage{booktabs}
\usepackage{amsthm}
\usepackage{hyperref}
\usepackage{cleveref}
\crefname{figure}{Fig.}{Figs.}
\Crefname{figure}{Fig.}{Figs.}

\theoremstyle{plain}
\newtheorem{theorem}{Theorem}
\newtheorem{proposition}{Proposition}
\newtheorem{lemma}{Lemma}
\newtheorem{remark}{Remark}

\newcommand{\KZ}{\mathcal{K}_Z}
\newcommand{\FZ}{F_Z}
\newcommand{\JZ}{J_Z}
\newcommand{\rH}{r_H}
\newcommand{\Bstar}{B^{\star}}
\newcommand{\sigmin}{\sigma_{\min}}
\newcommand{\etav}{\boldsymbol{\eta}}

\makeatletter
\AtBeginDocument{\immediate\write\@auxout{\string\citation{apsrev42Control}}}
\makeatother

\begin{document}

\title{Absent, Not Faint: Fisher-Information Limits and a Logarithmic Measurement-Design Cure for Passive Characterization of Coherent Qubit Noise}

\author{Yi Pan}
\thanks{Y.P. and M.H.T. contributed equally to this work.}
\affiliation{School of Computing, University of Georgia, Athens, GA 30602, USA}
\affiliation{GyriQAI, Inc., Athens, GA 30602, USA}

\author{Meng Hsiu Tsai}
\thanks{Y.P. and M.H.T. contributed equally to this work.}
\affiliation{Department of Computer Science and Engineering, University of Tennessee at Chattanooga, Chattanooga, TN 37403, USA}

\author{Weihang You}
\affiliation{School of Computing, University of Georgia, Athens, GA 30602, USA}
\affiliation{GyriQAI, Inc., Athens, GA 30602, USA}

\author{Hanqi Jiang}
\affiliation{School of Computing, University of Georgia, Athens, GA 30602, USA}
\affiliation{GyriQAI, Inc., Athens, GA 30602, USA}

\author{Junhao Chen}
\affiliation{School of Computing, University of Georgia, Athens, GA 30602, USA}
\affiliation{GyriQAI, Inc., Athens, GA 30602, USA}

\author{Wei Zhang}
\affiliation{School of Computer and Cyber Sciences, Augusta University, Augusta, GA 30912, USA}

\author{Isaac Lyngaas}
\affiliation{National Center for Computational Sciences, Oak Ridge National Laboratory, Oak Ridge, TN 37830, USA}

\author{Yingfeng Wang}
\email[Corresponding author: ]{yingfeng-wang@utc.edu}
\affiliation{Department of Computer Science and Engineering, University of Tennessee at Chattanooga, Chattanooga, TN 37403, USA}

\author{Tianming Liu}
\email[Corresponding author: ]{tliu@uga.edu}
\affiliation{School of Computing, University of Georgia, Athens, GA 30602, USA}
\affiliation{GyriQAI, Inc., Athens, GA 30602, USA}

\begin{abstract}
Calibrating a quantum processor means estimating error parameters, and estimation theory usually assumes a parameter hard to estimate is faint: its signal is weak but present, so more repetitions or a richer model will recover it. This assumption fails for a leading hardware fault. A coherent over-rotation is a small systematic gate miscalibration. Measured through the cheapest data a device returns---one fixed-basis histogram---it is not faint but absent: to first order it leaves the distribution unchanged, indistinguishable from a compensating stochastic error, exactly as two numbers cannot be separated from their sum. For commuting single- and two-qubit transverse over-rotations, with known support on the canonical input, the histogram's Fisher information is singular along the fault's direction at zero angle, its Cram\'er--Rao bound is infinite, and no finite-variance, locally unbiased estimator recovers it. At a generic nonzero angle the degeneracy partly lifts; beyond four qubits it clears entirely, leaving conditioning, not absence, as the obstruction. The cure is a richer measurement, not a richer model: a fixed, logarithmically small set of extra settings makes every such fault visible. Visibility alone is not enough. The sampling cost is set by conditioning, not coverage, through a floor whose complete-family closed form is exponentially small in the qubit count. We prove the impossibility and cure, confirm both in exact simulation, show conditioning predicts recovery error across hundreds of designs, and observe a $3$--$5\times$ bias gap on IBM Heron hardware as a consistency check. Non-commuting faults and unknown support remain open.
\end{abstract}

\maketitle

\input{sections/0_intro}
\input{sections/1_setup}
\input{sections/2_veil}
\input{sections/3_design}
\input{sections/4_input}
\input{sections/5_experiments}
\input{sections/6_related}
\input{sections/7_conclusion}

\begin{acknowledgments}
This research used resources of the Oak Ridge Leadership Computing Facility, which is a DOE Office of Science User Facility supported under Contract DE-AC05-00OR22725.
\end{acknowledgments}

\input{sections/statements}

\bibliographystyle{apsrev4-2}
\bibliography{references,revtexcontrol}

\appendix
\crefalias{section}{appendix}
\input{appendix}

\end{document}

%% file: sections/0_intro.tex
\section{Introduction}

Every quantum computer must be calibrated, and characterizing the errors in its gates is a problem in estimation: one recovers physical error parameters from measurement data, and the Fisher information of that data sets how well any estimator can do. Estimation usually takes one premise for granted, that a parameter which is hard to estimate is \emph{faint}: its signal is weak, its Fisher information small but nonzero, and the remedy is more statistics, a richer model, or a sharper estimator. A \emph{coherent over-rotation}, a small systematic miscalibration in which a gate rotates the qubit a fixed amount too far the same way on every run, is a leading error on today's superconducting and trapped-ion processors (a few degrees, $\eta\sim 0.01$--$0.1\,$rad~\cite{hashim2024practical}). Meeting it through the cheapest data a device returns, one fixed-basis histogram of outcomes, a practitioner reaches for the textbook fix, adds a parameter for the over-rotation, and fits. The enrichment returns \emph{zero or negative} value, and, against every intuition about small signals, more shots (repeated runs of the circuit) never rescue it.

The failure is exact, not a matter of conditioning or sample size. A small coherent over-rotation leaves the fixed-basis histogram stationary to first order, so its score vanishes, and at a generic small angle a coherent perturbation and a compensating stochastic perturbation move the histogram along the \emph{same} direction. The estimator is not fighting a faint signal. To first order there is no signal in this direction to see. The obstruction is a property of the observation map, not of any estimator: an estimator can recover only the directions its Fisher information exposes, so a direction in the kernel carries an \emph{absent} signal rather than a faint one. This is the classical degeneracy of singular-information estimation~\cite{rothenberg1971identification,stoica2001singular}: if a measurement returns only the sum $a+b$, its Fisher information is rank one with $(1,-1)$ in the kernel, and no model and no sample size separates $a$ from $b$. Standard characterization tools sidestep this by \emph{actively driving} the device, running specially designed circuits rather than reading the device as it runs. Gate-set tomography, randomized benchmarking, and Pauli-channel learning all manufacture identifiability this way, by twirling, replaying, or composing gates~\cite{nielsen2021gateset,proctor2017what,huang2022foundations}, and even forward models of coherent readout expose the coherent contribution without asking whether the passive datum can recover it~\cite{malik2026coherence}. The question we ask is prior: what a single fixed-basis histogram identifies \emph{at rest}. We call the coherent--stochastic exchange subspace it cannot resolve the \emph{coherence veil}. At zero angle each coherent over-rotation lies in the veil, its singular Cram\'er--Rao bound is infinite, and enriching the model cannot help, because one cannot create a Fisher column the measurement does not provide.

The impossibility is exactly and cheaply curable, because the obstruction lives in the measurement rather than the parameterization. A twirl-free product-Pauli \emph{separating code} of only $\lceil\log_2(n+1)\rceil$ fixed settings on $n$ qubits cuts across every exchange direction and recovers the per-generator coherent-angle vector, and for the complete family of all single- and two-qubit generators it is minimal at zero angle among product-Pauli augmentations of $Z^n$. We make this exact for an explicit multi-qubit class, commuting weight-1 and weight-2 (single- and two-qubit) transverse coherent over-rotations with known support on the canonical input, giving the Fisher kernel in closed form as a Walsh--incidence rank formula with an explicit exchange basis, checked against exact density-matrix computation to $n\le 4$ and extended to $n\le 8$. A second, equally simple point sharpens the metrological lesson: restoring rank is necessary but not sufficient, because \emph{conditioning}, not coverage, sets the finite-sample cost. We give a closed-form conditioning floor for the complete family, $c_B=\min\{(1-2p)^n,\sqrt{\lambda_2}\}$, with $p$ the uniform stochastic-error rate. It is exponentially small in $n$ at fixed $p$, so the floor is a fixed-$n$ guarantee. Across the full population of full-rank designs at matched budget, the smallest singular value predicts recovery error, the worst-conditioned designs pay $3.1$--$4.1\times$ the best-conditioned at equal budget, and the separating code sits at the floor with no per-instance search (\Cref{sec:exp}). The obstruction can also live in the \emph{input}: longitudinal $Z$-noise is unidentifiable from $|0^n\rangle$ by any measurement yet recovered from $|{+}^n\rangle$, so the same Fisher-kernel logic delimits which interventions, measurement or input, can help.

This paper makes four contributions. (1)~We prove that at zero angle every coherent angle in this class is non-estimable from the single histogram, with infinite singular Cram\'er--Rao bound, and give the coherent--stochastic Fisher kernel in closed form (\Cref{sec:veil}). (2)~We construct a twirl-free separating code of $\lceil\log_2(n+1)\rceil$ fixed settings that restores identifiability and prove it minimal at zero angle among product-Pauli augmentations of $Z^n$ for the complete family (\Cref{sec:design}). (3)~We show that conditioning, not coverage, governs the finite-sample cost, give the floor $c_B$ in closed form for the complete family, and prove an $\Omega(\eta^{-2})$ sub-coverage divergence, confirmed across the design population (\Cref{sec:design,sec:exp}). (4)~We corroborate the predicted bias gap on IBM Heron hardware as a consistency check (\Cref{sec:exp}). Three objects are new: the closed-form multi-qubit coherent--stochastic Fisher kernel, the floor $c_B$, and the coverage-versus-conditioning rule. The impossibility instantiates the classical singular-Fisher (extended Cram\'er--Rao) principle, and the $\lceil\log_2(n+1)\rceil$ count is inherited from group testing. What we characterize is a rank defect of one fixed data map for a specified coherent parameter, distinct from the gauge freedom of tomography, the twirled target of Pauli-rate learning, and the random breadth of classical shadows (\Cref{sec:related}). The claims are delimited by the theorems: within the proved regime (commuting weight-1 and weight-2 transverse, small angle, known support) the impossibility and its logarithmic cure are exact; the singularity itself is a sharp phase boundary that a generic moderate angle lifts; beyond the regime, non-commuting generators and unknown support are open problems (\Cref{sec:concl}). The same diagnostic applies outside quantum noise: when a parameter resists estimation, first ask whether it is faint or absent, and if absent, design the measurement that exposes it, then condition that design rather than merely cover the unknowns.

%% file: sections/1_setup.tex
\section{Setup}\label{sec:setup}
A device takes a $Z$-diagonal real input and applies a noise channel that composes commuting weight-1 and weight-2
transverse over-rotations of small angle $\eta_S$ with stochastic Pauli flips of rate $p_S$. It then returns the
computational-basis ($Z$) output histogram. This single fixed-basis histogram is the passive datum a device
returns by default, with no measurement design, no twirling, and no gate replay. It is the cheapest
characterization protocol there is, and the question is what it can and cannot reveal about the noise. The
observation map sends parameters $\theta=(\etav,p)$ to this histogram. The operative quantity is then the
\emph{classical Fisher information} $\FZ$ of this one fixed $Z$ measurement, which governs the achievable
precision of \emph{any} estimator built from the passive histogram. Equivalently, $\FZ$ is the Gram matrix of
the histogram Jacobian $\JZ=\partial(\text{parities})/\partial\theta$, with $\KZ=\ker\FZ=\ker\JZ$ under
positive support. A direction in $\KZ$ is a parameter combination the passive protocol cannot see at all, at
any sample size. The quantum Fisher information upper-bounds $\FZ$ and is approached only by \emph{varying the
measurement}. Closing the gap between the two therefore requires a measurement-design choice, taken up in
\Cref{sec:design}; no estimator built on the passive histogram can recover what the histogram does not encode.

Write each generator's $X$-support as a binary incidence vector. Then $B_\Omega$ collects
the stochastic-block columns and $H$ the coherent-block columns. Here $P_B$ is the Euclidean projector onto
$\mathrm{col}(B_\Omega)$, and $\lambda_2$ is the second eigenvalue of the readout-restricted information, the
stochastic-block conditioning that sets the floor in the regime where readout binds the estimate and angle
damping does not. The
conditioning of a design is its smallest singular value $\sigmin(\JZ)$. The rank defect
$\rH=2|G|-\mathrm{rank}(\JZ)$ is the number of directions one histogram cannot resolve.

\paragraph{Reduction to $X$-type generators.} By a conjugation lemma (\Cref{app:separation}), general
weight-1 and weight-2 transverse families ($X$-type, $Y$-type, and $XY$-mixed) reduce to $X$-type, so statements are
made without loss of generality for the $X$-type case. Concretely, a pairwise-commuting transverse
family of distinct weight-1 and weight-2 supports admits a global single-qubit Clifford gauge that maps each generator
to an $X$-type Pauli. This gauge fixes every $Z$-diagonal input and the $Z^n$ measurement. Under it the data
map is unchanged, so rank, kernel dimension, singular values, and the conditioning floor $c_B$ are all preserved
(see \Cref{app:separation} for the full argument).

Unless stated otherwise, all claims
are exact at $\eta=0$ and in the small-angle neighborhood, for weight-1 and weight-2 transverse generators with known
support. The closed-form kernel matches exact density-matrix computation to machine precision at $n\le4$, and
the rank and conditioning summaries extend by exact computation to $n\le8$.

\paragraph{Terminology.} The \emph{coherence veil} is the coherent--stochastic exchange subspace inside
$\KZ$ that a single $Z$ histogram cannot resolve. A design has \emph{coverage} when it reaches full rank
($\rH=0$), so every direction is in principle identifiable. Its \emph{conditioning} is the smallest singular
value $\sigmin$, which sets the finite-sample cost. The \emph{separating code} $\Bstar$ is the
$\lceil\log_2(n+1)\rceil$-setting product-Pauli design that lifts the veil.

The scope of every result below is summarized in \Cref{tab:scope}, listing each result, the regime in which it
holds, and its status. Beyond this regime, two obstructions remain open: noncommuting
weight-1 and weight-2 generators, and unknown support. We return to both in \Cref{sec:concl}.

\begin{table}[t]
\centering
\caption{Scope of the exact theory: each result, its regime of validity, and its status. The exact theory is confined to the
small-angle, commuting, weight-1 and weight-2, known-support, $Z$-diagonal regime. Within it, the per-coordinate impossibility is
proved at $\eta=0$, the coherent-fraction version is checked numerically at small nonzero angles, and the non-quantum
and hardware items are stated as illustrations without proof.}
\label{tab:scope}
\small
\setlength{\tabcolsep}{4pt}
\begin{tabular}{p{0.28\linewidth}p{0.33\linewidth}p{0.24\linewidth}}
\toprule
Result & Scope of validity & Status \\
\midrule
Exchange kernel, rank formula, basis (\Cref{thm:kernel}) & $Z$-diagonal input, full-rank $B_\Omega$, commuting weight-1 and weight-2, known support, generic small angle & exact; checked to $n\le4$ \\
$\eta_S$ non-estimable (\Cref{thm:nonest}) & each coherent angle, at $\eta=0$ & proved \\
\quad coherent fraction & small nonzero angles & checked numerically \\
Minimal separating code (\Cref{thm:separatrix}) & complete family; canonical input; among product-Pauli augmentations of $Z^n$, at $\eta=0$ & proved \\
Floor $c_B=\min\{(1-2p)^n,\sqrt{\lambda_2}\}$ (\Cref{thm:threshold}) & complete family; canonical input; uniform $p$, parity coordinates; fixed $n$ & proved; closed form \\
$\Omega(\eta^{-2})$ below coverage & as $\eta\to0$; finite-variance, locally unbiased & proved (asymptotic rate) \\
Generic-angle defect $\rH^\star(n,K)$ (\Cref{app:separation}) & generic input and angle, $K\le2$ & closed form; checked $n\le8$ \\
Input-side separation (\Cref{thm:separation}) & full product measurement & proved \\
Non-quantum instances; hardware & classical learning; $n=3$ injected errors & illustration; consistency check \\
\bottomrule
\end{tabular}
\end{table}

%% file: sections/2_veil.tex
\section{The Coherence Veil}\label{sec:veil}

A single computational-basis ($Z$) histogram is the most economical readout of a noisy circuit: one
measurement setting, no basis changes, the native output of essentially every device. We now show that this
economy hides a structural blind spot. At the small-angle reference where a near-ideal device sits, the
single-histogram Fisher information is \emph{singular} along a specific subspace of the noise parameters, and
that subspace is exactly where the coherent (unitary) over-rotations live. The blindness is a property of the
measurement itself: it survives unlimited samples and every unbiased estimator, so no inference procedure can
remove it. We call this subspace the \emph{coherence veil}.

The geometry is the one drawn in \Cref{fig:veil}. Fix the local parameter plane spanned by a coherent
over-rotation and the stochastic channel it mimics. For a $Z$-only histogram the iso-histogram contours run
along the coherent--stochastic \emph{exchange direction}: pushing population a little coherently and a little
stochastically leaves the measured outcome distribution unchanged. Motion along that valley costs the data
nothing, so the histogram cannot resolve it. One setting from the separating code $\Bstar$, introduced in
\Cref{sec:design}, lifts the valley. It lays down a second family of iso-response contours that cross the
exchange direction, so the same coherent motion now changes a measured outcome.

\begin{figure*}[t]
\centering
\includegraphics[width=\textwidth]{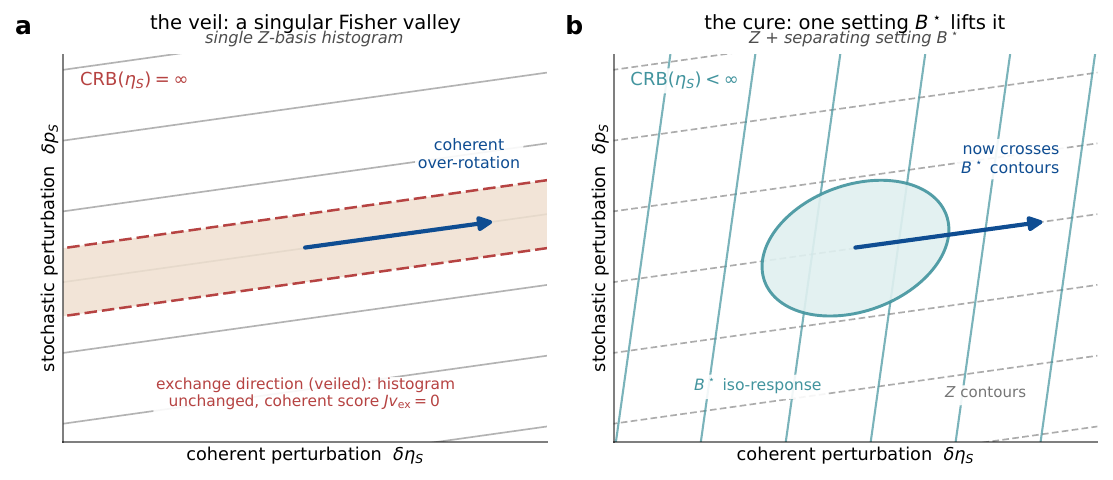}
\caption{Fisher geometry of the mechanism in the local parameter plane (schematic). \textbf{(a)} For a single
$Z$-basis histogram the iso-histogram contours run along the coherent--stochastic exchange direction, so a
coherent over-rotation moves along a contour ($\JZ v_{\mathrm{ex}}=0$): the information is singular and
$\mathrm{CRB}(\eta_S)=\infty$. \textbf{(b)} One setting from the separating code $\Bstar$ adds a second family of
contours crossing the exchange direction, so the same motion changes a measured outcome, the information is
full rank, and $\mathrm{CRB}(\eta_S)$ is finite.}
\label{fig:veil}
\end{figure*}

\subsection{The exchange kernel}

We first characterize the veiled subspace exactly. Write $\JZ$ for the Jacobian of the single-setting $Z$
histogram with respect to the noise parameters and $\KZ=\ker\JZ$ for its kernel: the locally
non-identifiable directions. The kernel has a closed form, with an explicit basis that makes the
coherent--stochastic trade-off manifest.

\begin{theorem}[Exchange kernel]\label{thm:kernel}
At a generic small-angle reference with $B_\Omega$ of full column rank (automatic for the canonical input
$|0^n\rangle$), $\KZ$ admits a closed form: $\mathrm{rank}(\JZ)=\mathrm{rank}(B_\Omega)
+\mathrm{rank}((I-P_B)H)$, a Walsh--incidence rank formula, with an explicit coherent--stochastic exchange
basis $v(u)=(u,\tfrac12\,d\odot c(u))$ where $c(u)=(B_\Omega^\top B_\Omega)^{-1}B_\Omega^\top Hu$. The rank
defect $\rH$ counts the veiled directions. At $n=3$ for weight-1 and weight-2, all six coherent generators are veiled at
$\eta=0$ ($\rH=6$), and a generic small angle lifts one combination, leaving $\rH=5$.
\end{theorem}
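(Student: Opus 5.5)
We will work in parity coordinates: replace the $Z$ histogram by its Walsh transform, the vector of parities $\langle Z^{s}\rangle$ for $s\in\{0,1\}^n$. This is an invertible linear change of coordinates, so it preserves $\ker\JZ$. By the conjugation lemma of \Cref{app:separation} we may take every generator to be $X$-type with incidence vector $a_g\in\{0,1\}^n$. The noise then factorizes into commuting pieces.

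A stochastic flip of rate $p_g$ multiplies the parity $\langle Z^{s}\rangle$ by $1-2p_g$ when $s\cdot a_g=1$, and leaves it unchanged otherwise. A coherent rotation $e^{-i\eta_g X^{a_g}/2}$ applied to a $Z$-diagonal input contributes a factor $\cos\eta_g$ to that parity, plus off-diagonal terms that vanish on $Z$-diagonal readout. Its effect on the parity therefore depends only on $\cos\eta_g$. Combining the two factors, each parity has the form $\prod_g (1-2p_g)^{\,s\cdot a_g}(\cos\eta_g)^{\,s\cdot a_g}$, times the input parity. Taking logarithms and differentiating, the $p$-columns of $\JZ$ are the incidence columns (this is $B_\Omega$, weighted by the input parities $d$). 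The $\eta$-columns have the same support pattern and are scaled by $\tan\eta_g$; this is $H$ after rescaling, which vanishes at $\eta=0$. A direct computation makes this exact: the first step is to write $\JZ=D\,[\,H\;\;B_\Omega\,]$ with $D$ an invertible diagonal of parities. This is justified under positive support, where all parities are nonzero. The diagonal $d$ records the $\tfrac12$-scaled per-generator weights coming from $\partial_p(1-2p)=-2$.

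With this factorization, the rank formula is linear algebra. Split $\mathrm{col}(H)$ into its part in $\mathrm{col}(B_\Omega)$ and its part in the orthogonal complement. This gives $\mathrm{rank}[H\;B_\Omega]=\mathrm{rank}(B_\Omega)+\mathrm{rank}((I-P_B)H)$. Full column rank of $B_\Omega$ lets us solve $B_\Omega c=-Hu$ exactly when $(I-P_B)Hu=0$, with the least-squares solution $c(u)=(B_\Omega^\top B_\Omega)^{-1}B_\Omega^\top Hu$ up to sign. Undoing the diagonal rescalings gives the basis vectors $v(u)=(u,\tfrac12 d\odot c(u))$ for $u\in\ker(I-P_B)H$. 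For the canonical input, full rank of $B_\Omega$ holds because the weight-$\le2$ incidence patterns are distinct. Restricting the parity map to $s$ of weight $\le2$ then gives a triangular system in the inclusion order of $\{s\}$ versus $\{a_g\}$.

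The main obstacle is getting the coherent columns right beyond leading order. At $\eta=0$ the coherent derivative is exactly zero, since $\partial_\eta\cos\eta=0$. So $H=0$, $\mathrm{rank}((I-P_B)H)=0$, and all $|G|$ coherent directions lie in the kernel, giving the $\eta=0$ non-estimability. For a generic small angle the leading coherent column is proportional to the stochastic one, which would give no lift at all. The claimed lift must therefore come from the next-order structure: cross terms among overlapping generators, for instance weight-2 generators sharing a qubit, where the commuting product yields parities with $\cos\eta_g\cos\eta_{g'}$ versus $\sin$ products. I would first compute these corrections symbolically for the complete family at $n=3$, which has $|G|=6$ generators and 12 parameters. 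For the counts at $n=3$ I would rely on exact computation. At $\eta=0$ the Jacobian is $[0\;\;B_\Omega]$ with $\mathrm{rank}=6$, so $\rH=6$. At a random small angle, the rank of $(I-P_B)H$ on the exact Jacobian is checked numerically to be $1$, giving $\rH=5$. This is consistent with the paper's statement that the formula is checked against density-matrix computation at $n\le4$.
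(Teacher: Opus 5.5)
\textbf{Where the proposal matches the paper.} Your linear-algebra core is the paper's argument. Rescale each retained row of $\JZ$ by $(z_TD_TC_T)^{-1}$ to get $[\,H\mid B_\Omega R\,]$, with $R=\mathrm{diag}(-2/d_a)$ and $d_a=1-2p_a$. Then read off the rank formula, and solve $R\,\delta p=-c(u)$ to get $\delta p_a=\tfrac12 d_ac_a(u)$. Here $d$ is not an input parity: it enters only through the column scaling $R$, which your factorization $\JZ=D[\,H\;B_\Omega\,]$ leaves out.

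\textbf{The gap: your parity formula is wrong for $n=3$, $K=2$.} You base both the factorization and the $n=3$ counts on an explicit parity formula, and that formula is incorrect. A single rotation does contribute only $\cos\eta_g$ on $Z$-diagonal readout, but this fails for a commuting product. We have $U^\dagger Z_TU=\prod_{a\in A_T}(\cos\eta_a\,I+i\sin\eta_a X_{S_a})\,Z_T$. Every subset $L\subseteq A_T$ with $\bigoplus_{a\in L}S_a=\emptyset$ multiplies out to $X_\emptyset=I$, a diagonal term that survives on $|0^n\rangle$. The exact form is $E_T=z_TD_T(p)C_T(\etav)$, with $C_T$ the zero-xor sum of the paper's closed-form $Z$-expectation lemma. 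Your product of cosines forces $H=B\,\mathrm{diag}(-\tan\eta_a)\in\mathrm{col}(B)$, hence $\rH=|G|$ at every angle. That is exactly the contradiction you hit. The product formula is correct only when no $A_T$ contains a zero-xor subset, i.e.\ $K=1$ or $n=2$.

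\textbf{What is missing for the $n=3$ claim.} You need to locate these loops and pair them with a dimension count, instead of deferring to a floating-point rank check.
\begin{itemize}
\item Every $a\in A_T$ has odd overlap with $T$, so a loop has even size, and distinct masks force $|L|\ge4$.
\item At $n=3$, $K=2$ the only loops are the four-generator ones in the weight-2 rows. For example $A_{12}=\{1,2,13,23\}$ has empty XOR.
\item Hence $H=B\,\mathrm{diag}(-\tan\eta_a)+\Delta$, with $\Delta$ supported on rows $12,13,23$. There $\Delta_{T,a}=\gamma_T/(\sin\eta_a\cos\eta_a)$ for $a\in A_T$, where $\gamma_T=\tau_T/(1+\tau_T)$ and $\tau_T$ is the product of the four loop tangents.
\item There are seven nontrivial rows and $\mathrm{rank}\,B=6$. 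So $\mathrm{col}(B)^\perp$ is spanned by the single left null vector $\ell=(1,1,1,-1,-1,-1,1)^\top$, with rows ordered $1,2,3,12,13,23,123$.
\item Therefore $(I-P_B)H$ has rank at most $1$ at every angle, so $\rH\ge5$ always. The rank is exactly $1$ iff the form $u\mapsto\ell^\top\Delta u$ is nonzero.
\item This form is nonzero at generic small angles: at all-positive angles its $u_3$ coefficient is $-(\gamma_{13}+\gamma_{23})/(\sin\eta_3\cos\eta_3)\neq0$.
\item It vanishes at $\eta=0$, since every loop derivative keeps three factors $\sin\eta_b=0$. So your $\rH=6$ conclusion survives.
\end{itemize}
Without this structure, a numerical rank at one random point leaves the theorem's $n=3$ clause unproved.

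\textbf{A smaller error in the full-rank argument.} Your triangular argument for full column rank of $B_\Omega$ cannot work. Since $b_{T,\{i\}}+b_{T,\{j\}}+b_{T,\{i,j\}}\equiv0\pmod 2$ on every row, any $|G|\times|G|$ row-selection is singular mod $2$ and so is never unit-triangular. At $n=2$, for instance, the determinant is $-2$. The correct reason is real linear independence of the Walsh characters, with all $2^n$ rows retained.
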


The proof (\Cref{app:kernel}) is constructive. Up to invertible row and column scalings, the
single-histogram Jacobian factors as $[\,H \mid B_\Omega R\,]$. The stochastic columns span
$\mathrm{col}(B_\Omega)$, so the only identifiable coherent directions are the part of $H$ outside that span. Reading off the kernel gives
the exchange basis $v(u)$ above: each veiled direction pairs a coherent move $u$ against the precise
stochastic move $\tfrac12\,d\odot c(u)$ that cancels its effect on the histogram. This is exactly the valley
of \Cref{fig:veil}(a), the exchange direction the histogram cannot see, and the rank defect $\rH$ is its
dimension. The Walsh--incidence rank formula localizes the blindness. The stochastic block fills
$\mathrm{col}(B_\Omega)$, so coherent generators are veiled precisely when they fall inside that column space.
At $n=3$ for weight-1 and weight-2, this is the entire coherent block at $\eta=0$.

\paragraph{The exchange mechanism in the two-qubit case.} The smallest instructive case is $n=2$ with generators
$X_1,X_2,X_1X_2$. On $|00\rangle$ the three nontrivial $Z$-expectations are $E_1=d_1d_{12}\cos\eta_1\cos\eta_{12}$,
$E_2=d_2d_{12}\cos\eta_2\cos\eta_{12}$, and $E_{12}=d_1d_2\cos\eta_1\cos\eta_2$, with $d_a=1-2p_a$ the stochastic
survival factors. Each coherent angle enters only through a $\cos\eta$, so at $\eta=0$ every derivative
$\partial_{\eta_a}E_T$ vanishes: the histogram measures the three stochastic rates and nothing coherent, and
$\rH=|G|=3$. At a generic small angle the kernel becomes the exchange space, in which moving an angle $\eta_a$ by
$u_a$ and its flip rate $p_a$ by $-\tfrac12 d_a\tan(\eta_a)u_a$ leaves all three histograms unchanged: each
coherent angle trades against its own stochastic partner, the valley of \Cref{fig:veil}(a). Two added settings
sensitive to $Y_1$ and $Y_2$ ($\lceil\log_2 3\rceil=2$) supply the missing coherent quadratures and lift $\rH$
to $0$.

\subsection{Non-estimability}

The kernel characterization has a sharp operational consequence. Because at $\eta=0$ every coherent angle's gradient lies
inside $\KZ$, no finite-variance, locally unbiased estimator can recover it from the single histogram, at any
number of shots.

\begin{theorem}[Non-estimability]\label{thm:nonest}
At $\eta=0$, every coherent angle $\eta_S$ has infinite singular
Cram\'er--Rao bound, because its gradient lies in $\KZ$ exactly, so no finite-variance, locally unbiased,
first-order estimator recovers it from a single histogram. Any functional with a $\KZ$-component gradient is
likewise non-estimable.
\end{theorem}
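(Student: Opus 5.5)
The plan is to first show that every coherent column of $\JZ$ vanishes at $\eta=0$, and then feed this into the singular (extended) Cram\'er--Rao principle. For the first step I would use the reduction to $X$-type generators (\Cref{app:separation}). After that reduction, each $Z$-parity expectation $E_T=\langle Z^T\rangle$ of the output factors as a product of stochastic survival factors $d_a=1-2p_a$ and coherent factors. Each transverse over-rotation $e^{-i\eta_S X^S/2}$ acts on a $Z$-diagonal real input in one of two ways. If $X^S$ commutes with $Z^T$, it leaves $\langle Z^T\rangle$ unchanged. If $X^S$ anticommutes with $Z^T$, it multiplies $\langle Z^T\rangle$ by $\cos\eta_S$. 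The $\sin\eta_S$ cross term has the form $\langle\psi|\,iX^S Z^T|\psi\rangle$, which vanishes for a real $Z$-diagonal state, since $X^S Z^T$ is off-diagonal. Because the generators commute, the factors multiply, exactly as in the two-qubit example preceding the theorem. Hence $E_T$ depends on each $\eta_S$ only through $\cos\eta_S$, and $\partial_{\eta_S}E_T=\tfrac{d}{d\eta}\cos\eta\,|_{\eta=0}\cdot(\cdots)=0$ for every $T$.

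Since the histogram probabilities are an invertible linear (Walsh) transform of the parities, it follows that $\JZ e_{\eta_S}=0$ for every coherent coordinate. With positive support, $\FZ=\JZ^\top W\JZ$ with $W$ positive definite, so the coherent standard basis vectors $e_{\eta_S}$ lie in $\KZ=\ker\FZ$. This is the rank-defect statement $\rH\ge|G|$ of \Cref{thm:kernel} at $\eta=0$. In fact the kernel is exactly the coherent block, since $B_\Omega$ has full rank.

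The second step invokes the singular Cram\'er--Rao bound (Rothenberg; Stoica--Marzetta). Take a scalar functional $g(\theta)$ that some estimator $\hat g$ estimates unbiasedly to first order at $\theta_0$, so that $\nabla_\theta\,\mathbb{E}_\theta[\hat g]=\nabla g$ at $\theta_0$. Differentiating under the expectation gives
\begin{equation}
\nabla g=\mathbb{E}[(\hat g-g)\,s],
\end{equation}
where $s$ is the score. For $v\in\KZ$ we have $v^\top s=0$ almost surely, because the score lies in the row space of $\JZ$. Hence $v^\top\nabla g=0$. Contrapositively, if $\nabla g$ has a nonzero component in $\KZ$, no locally unbiased estimator with finite variance exists. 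Equivalently, the bound $\nabla g^\top \FZ^{+}\nabla g$ must be replaced by $+\infty$ whenever $\nabla g\notin\mathrm{range}(\FZ)$. Taking $g=\eta_S$, with $\nabla g=e_{\eta_S}\in\KZ$ entirely, gives the first claim. The general statement about functionals with a $\KZ$ component follows from the same argument. Finite-variance estimators are needed here so that Cauchy--Schwarz and differentiation under the integral are justified; the sample space is finite, so this is routine.

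I expect the main obstacle to be the careful justification of the cosine-only dependence for weight-2 generators acting together with the stochastic channel. One has to check that commuting rotations and Pauli flips can be reordered, and that no $\sin\eta_S\sin\eta_{S'}$ cross terms survive in $\langle Z^T\rangle$. I would handle this by working in the Heisenberg picture. Each $X^S$-conjugation maps $Z^T$ to $\cos\eta_S Z^T+\sin\eta_S\,(iX^SZ^T)$. Products of such terms are $X$-type times $Z$-type Paulis, and any such product with a nontrivial $X$ part has zero expectation on a $Z$-diagonal state. The one exception is the case where the $X$ parts cancel. That cancellation yields a $\sin\eta_S\sin\eta_{S'}$ term, which is still second order and therefore has zero derivative at $\eta=0$, so the conclusion is unaffected.
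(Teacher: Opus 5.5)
Your proposal is correct and follows essentially the paper's route: show that every coherent column of $\JZ$ vanishes at $\eta=0$, so $e_{\eta_S}\in\KZ$, then use local unbiasedness and the score identity $g=\mathbb{E}[(\hat\phi-\phi)s]$ to force the gradient into $\mathrm{Range}(\FZ)$. Two small differences from the paper. The paper reads the vanishing off the closed form $E_T=z_TD_TC_T$ (\Cref{lem:Eclosed}), where the surviving sine terms come from zero-xor loops of even size at least four; this confirms your point that every surviving sine product is at least second order, though a two-sine term $\sin\eta_S\sin\eta_{S'}$ cannot actually cancel for distinct masks. Second, you obtain $v^\top s=0$ pointwise because the score lies in the row space of $\JZ$, whereas the paper obtains it almost surely from $\mathrm{Var}(v^\top s)=v^\top\FZ v=0$.
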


\paragraph{The mechanism: an extended singular Cram\'er--Rao bound.}
The core argument is short, and we give it here. Let $p_\theta(x)$ be the single-setting $Z$
histogram, $\FZ=\JZ^\top\JZ$ (up to the outcome-weight metric) its Fisher matrix, $\phi$ a differentiable
scalar functional of the noise parameters, and $g=\nabla\phi(\theta_0)$ its gradient at the reference. The
correct singular bound is the \emph{extended} one,
\begin{equation}\label{eq:extcrb}
N\,\mathrm{Var}_{\theta_0}(\hat\phi)\ \ge\ C_F(g)=\begin{cases} g^\top\FZ^+g, & g\in\mathrm{Range}(\FZ),\\[2pt] +\infty, & P_{\ker\FZ}\,g\ne0,\end{cases}
\end{equation}
where $\FZ^+$ is the Moore--Penrose pseudoinverse. The two branches come from a single fact about the score
$s=\nabla_\theta\log p_\theta$, whose covariance is $\FZ$. For any $a\in\ker\FZ$ one has
$0=a^\top\FZ a=\mathrm{Var}(a^\top s)$ together with $\mathbb{E}[a^\top s]=0$, so $a^\top s=0$ almost surely.
Local unbiasedness with finite variance forces $g=\mathbb{E}_{\theta_0}[(\hat\phi-\phi)\,s]$, and pairing
against any kernel direction gives $a^\top g=\mathbb{E}[(\hat\phi-\phi)\,a^\top s]=0$. Hence finite-variance
local unbiasedness \emph{forces the estimator gradient into $\mathrm{Range}(\FZ)$}. A gradient with any
component in $\KZ=\ker\FZ$ is therefore unattainable, and that is the $+\infty$ branch of \eqref{eq:extcrb}. On
the estimable subspace the pseudoinverse value is achievable up to Cauchy--Schwarz, so the bound is the right
object on both sides. We give the two lemmas formally in \Cref{app:nonest}.

The subtlety, and the reason the extended bound is needed, is that the naive number
$g^\top\FZ^+g$ is \emph{finite} for a veiled gradient. The pseudoinverse sets the $\KZ$ component to zero, returning
the variance one would pay if the kernel direction did not exist. Reporting it would mislead, because it
certifies an accuracy that no estimator can deliver. The $+\infty$ branch is precisely the correction: when
$P_{\KZ}g\ne0$ the parameter is unrecoverable from this measurement, beyond merely hard to estimate.
Applied to a coherent angle $\eta_S$, \Cref{thm:kernel} gives $e_{\eta_S}\in\KZ$ for every generator $S$ at
$\eta=0$, so each coherent angle sits on the $+\infty$ branch. The normalized coherent fraction inherits the
obstruction wherever its gradient keeps a coherent component, a corollary checked numerically at the
small-angle references (see \Cref{app:nonest}).

This obstruction is confined to a sharp phase boundary. Exactly at $\eta=0$ the veil is full
($\rH=|G|$), a generic small angle leaves $\rH=5$ ($n=3,4$), and at generic moderate angles a single
histogram becomes full rank for $n\ge5$. Enriching a $Z$-only model with the coherent parameter therefore
cannot recover it: the veiled direction carries an irreducible error that more samples do not remove.

\paragraph{The singular point as the operative regime.} The exact singularity lives only at
$\eta=0$, and a generic angle restores full rank for $n\ge5$. Three facts nonetheless make $\eta=0$ the
operative point rather than a measure-zero curiosity. First, it is
exactly where a calibrated device sits: calibration \emph{drives} the coherent angles toward zero, so the
regime of vanishing $\eta$ is the working regime, not an idealization. Second, the singularity is not isolated
but anchors a quantitative neighborhood, in which the worst-direction sample cost of a sub-coverage measurement
diverges as $\Omega(\eta^{-2})$ as $\eta\to0$ (\Cref{thm:threshold}). The closer a device is to calibration, the
worse a blind measurement performs. Third, the practically relevant object is therefore not the isolated null
but the \emph{conditioning threshold} it anchors: once coverage is restored, the shot cost of recovery is set by
the design's conditioning (\Cref{sec:design}), and the singular point is what makes conditioning, rather than
mere rank, the right figure of merit. The impossibility is the $\eta\to0$ limit of a smooth and practically
central cost.

\subsection{The bias--variance signature}

The non-estimability of \Cref{thm:nonest} shows up directly in mean-squared
error, and \Cref{tab:biasvar} measures the cost across the small-angle regime. The enriched single-histogram
estimator adds the coherent parameter to the $Z$-only model but adds no new measurement. It pays an
order of magnitude or more in mean-squared error, irreducible because the veiled direction is unrecoverable,
while the separating-code design recovers the angles. More samples do not close the gap. At the smallest
angles the error is bias, since the veiled angles pin near zero. It acquires variance as the angle grows, the
two faces of the singular Cram\'er--Rao bound.

This comparison isolates measurement from enrichment. The enriched estimator adds parameters but no new
measurement, precisely the setting addressed by \Cref{thm:nonest}. The separating code \emph{is} the minimal
product-Pauli added-measurement design at zero angle (\Cref{thm:separatrix}), so the contrast shows that the recovery comes from a richer measurement, not
a richer model. Designs that all add measurement are compared against one another by conditioning in
\Cref{sec:design} and \Cref{fig:sweep}.

\begin{table}[t]
\centering
\caption{Mean-squared error of the enriched single-histogram estimator against the separating-code
design ($n=3$, weight-1 and weight-2, small angle, $40$ seeds, $8192$ shots per design split equally across its settings). The enriched model does not recover the coherent error, whereas the separating-code design does. The enriched $Z$-only estimator
pays $12$--$67\times$ the mean-squared error of the code across the regime ($95\%$ bootstrap CIs over the
$40$ seeds: $67\times\,[53,86]$, $31\times\,[22,41]$, $22\times\,[16,31]$, $12\times\,[7,17]$, all well above
$1\times$). At the smallest angles the error is
bias (the veiled angles pin near zero), acquiring variance as the angle grows (\Cref{thm:nonest}). At the tested references each coherent-angle gradient keeps a Fisher-kernel component, so the table illustrates the finite-sample mean-squared-error consequence of the kernel obstruction. More samples do not close the gap.}
\label{tab:biasvar}
\resizebox{\columnwidth}{!}{%
\begin{tabular}{lccccc}
\toprule
& \multicolumn{2}{c}{enriched $Z$-only model} & \multicolumn{2}{c}{separating-code design} & MSE \\
\cmidrule(lr){2-3}\cmidrule(lr){4-5}
$\theta_{\mathrm{coh}}$ & bias$^2$ & variance & bias$^2$ & variance & ratio \\
\midrule
$0.05$ & $2.4\times10^{-2}$ & $1.0\times10^{-2}$ & $4.3\times10^{-5}$ & $4.7\times10^{-4}$ & $67\times$ \\
$0.10$ & $1.2\times10^{-2}$ & $9.5\times10^{-3}$ & $5.1\times10^{-5}$ & $6.4\times10^{-4}$ & $31\times$ \\
$0.15$ & $4.2\times10^{-3}$ & $9.4\times10^{-3}$ & $4.1\times10^{-5}$ & $5.6\times10^{-4}$ & $22\times$ \\
$0.20$ & $1.0\times10^{-3}$ & $6.3\times10^{-3}$ & $3.1\times10^{-5}$ & $6.1\times10^{-4}$ & $12\times$ \\
\bottomrule
\end{tabular}}
\end{table}

%% file: sections/3_design.tex
\section{Measurement Design, Conditioning, and Threshold}\label{sec:design}

The veil of \Cref{sec:veil} is a property of the measurement, not of the noise, which is fully present: the
single fixed-basis histogram fails on the coherent angle because its Fisher information has a column missing,
not because the signal is weak. The remedy is to change \emph{what is measured} so the missing column
reappears. A twirl-free product-Pauli \emph{separating code} of only $\lceil\log_2(n+1)\rceil$ settings does
this for the whole family at once (\Cref{thm:separatrix}), and a fixed resource count follows: cover the noise
with those settings and the conditioning that governs recovery is already at a closed-form floor
(\Cref{thm:threshold}).

\begin{theorem}[Minimal separating code]\label{thm:separatrix}
A twirl-free product-Pauli code of $\lceil\log_2(n+1)\rceil$ settings, added to $Z^n$ on the canonical input
$|0^n\rangle$, lifts the veil
($\rH=0$) for every known-support weight-1 and weight-2 transverse family, and for the complete family (all
singletons and pairs) it is minimal at $\eta=0$: no product-Pauli augmentation of $Z^n$ uses fewer settings.
\end{theorem}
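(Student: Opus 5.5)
We work at $\eta=0$, on $|0^n\rangle$, after the $X$-type reduction of \Cref{app:separation}. The coherent columns of the augmented Jacobian then have an explicit form. For an $X$-type generator $X_S$, the derivative $\partial_{\eta_S}$ of the state at $\eta=0$ is $-i[X_S,\rho]/2$. On $|0^n\rangle\langle0^n|$ this produces the Pauli $Y$-type operators $Z_{\bar{}}\cdots$ supported on $S$. Concretely, $\partial_{\eta_S}\langle P\rangle\neq0$ only for the product Pauli $P$ equal to $X_S\!\cdot\!Z$-string with an odd number of $Y$'s on $S$, up to $Z$'s elsewhere. For a singleton this is $Y_i$, and for a pair $\{i,j\}$ it is $X_iY_j$ or $Y_iX_j$.

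A product-Pauli setting $\sigma\in\{X,Y,Z\}^n$ reveals the marginal expectations of all Paulis that agree with $\sigma$ on their support. So $\eta_S$ contributes a nonzero column in setting $\sigma$ iff $\sigma|_S$ is $Y$ on a singleton, or $XY/YX$ on a pair. The stochastic columns already have full rank $|G|$ from $Z^n$ by \Cref{thm:kernel}. The coherent derivatives land on non-$Z$-diagonal Paulis, which are orthogonal to everything the $Z^n$ block sees. The augmented Jacobian is therefore block triangular. Full rank then reduces to each $\eta_S$ having its own visible coordinate, with distinct generators hitting distinct Pauli coordinates. The latter is automatic, since the Paulis above are distinct for distinct $S$.

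\textbf{Construction.} Label qubits by distinct nonzero binary strings $b(i)\in\{0,1\}^m$, with $m=\lceil\log_2(n+1)\rceil$. In setting $k$, measure qubit $i$ in $Y$ if $b_k(i)=1$ and in $X$ otherwise. Any singleton $i$ has some $k$ with $b_k(i)=1$, which gives $Y_i$. For a pair $i\neq j$, the strings differ in some bit $k$, so setting $k$ shows $X_iY_j$ or $Y_iX_j$. Hence every generator of every known-support weight-1/2 family is visible in some added setting. Block triangularity then yields $\rH=0$ at $\eta=0$. Extending to a generic small angle needs only lower semicontinuity of rank.

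\textbf{Minimality.} Consider an arbitrary augmentation by settings $\sigma^{(1)},\dots,\sigma^{(t)}$ for the complete family. Encode each qubit $i$ by the vector $c(i)\in\{X,Y,Z\}^t$. A $Z$ entry hides every generator touching $i$ in that setting. Singletons force $c(i)$ to contain a $Y$. A pair $\{i,j\}$ needs some $k$ with $\{c_k(i),c_k(j)\}=\{X,Y\}$. Map $c(i)$ to a binary word $\beta(i)$ by $Y\mapsto1$ and $X,Z\mapsto0$. Then $\beta(i)\neq0$ by the singleton condition. The pair condition forces $\beta(i)\neq\beta(j)$, since a separating coordinate has exactly one $Y$. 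So $n$ distinct nonzero words of length $t$ exist, giving $2^t-1\ge n$ and $t\ge m$.

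\textbf{Main obstacle.} The main obstacle is justifying the visibility criterion rigorously. First, that a coherent column which vanishes on every setting really lies in the kernel. Second, conversely, that the nonzero coherent entries cannot be cancelled by stochastic columns. This needs the orthogonality and block-triangular argument: stochastic derivatives at $\eta=0$ only touch $Z$-type Paulis, and a $Z$-type Pauli is also visible in an $X/Y$ setting only when trivial. I would also double-check how $Z$ entries in added settings interact. They only weaken visibility, so the lower bound is unaffected.
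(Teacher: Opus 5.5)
Your proposal is correct and follows essentially the paper's route: the $\eta=0$ injection of $\partial_{\eta_S}$ onto rows of the form $iX_SZ_T$, a block structure with the stochastic block carried by $Z^n$ and a column-orthogonal coherent block, continuity to carry full rank to small angles, and the $Y$-signature count $n\le 2^{|B|}-1$ for minimality. Your characterization of every Pauli row on which $\partial_{\eta_S}$ is nonzero, including settings with $Z$ entries, makes the necessity direction (an uncovered generator has an identically zero column) slightly more explicit than the paper, which states its row lemma only for the canonical choices of $T$.
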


\noindent\emph{Proof sketch.} The construction is constructive and twirl-free. Assign qubit $i$ a distinct
nonzero binary codeword, and read bit $\ell$ of that codeword in the $X/Y$ setting $b^{(\ell)}$. Lifting the
veil is equivalent to making the Jacobian injective on the exchange kernel, because the residual rank is
exactly $\rH(B)=\dim(\KZ\cap\ker J_B)$ (\Cref{thm:kernel}). A design completes the veil iff its Jacobian has
no kernel direction inside $\KZ$. At $\eta=0$ a coherent generator $S$ becomes visible only on a measured
Pauli row that overlaps its support, so the code lifts the veil precisely when every singleton and every pair
in the support is \emph{covered} by some setting. The minimality is a counting argument on the $Y$-signatures
$\sigma(i)=\{b\in B:b_i=Y\}$. Singleton coverage forces each $\sigma(i)$ nonempty and pair coverage forces
them distinct, so the $n$ nonzero signatures must be distinct subsets of a $|B|$-element index set. This gives
$n\le 2^{|B|}-1$, i.e.\ $|B|\ge\lceil\log_2(n+1)\rceil$, with equality attained by the binary code. The full
argument is \Cref{app:separatrix}.\hfill$\square$

\medskip
The count $\lceil\log_2(n+1)\rceil$ is the binary separating-code size familiar from group testing and
nonrandom superimposed codes~\cite{kautz1964superimposed}. What is new is the object the code resolves;
the logarithmic count is inherited from group testing. That object is
the commuting-Pauli-channel coherent--stochastic kernel of \Cref{thm:kernel}, together with
the conditioning floor below. The classical coverage bound does not address this object. A logarithmic,
fixed-setting design (twirl-free, no gate replay, no depth sequence) recovers what an active protocol over many circuits is built to recover, and it does so
by construction, with no per-instance optimization. Whether an entangled or adaptive measurement could lift the veil at $\eta=0$ with
fewer than $\lceil\log_2(n+1)\rceil$ settings is open; the minimality holds at zero angle within the product-Pauli class, for the complete family.

\paragraph{Coverage versus conditioning.} Lifting the rank is necessary but not sufficient
(\Cref{fig:cond}a). Two designs that both reach full rank, and so both identify the noise, can still differ
by an order of magnitude in how many shots they need to recover the angle. The reason is that the
finite-sample cost is set by the design's smallest singular value once coverage is achieved. We establish
this first against specific comparators, then across the whole design population.

In matched-budget simulations (equal cardinality and total shots), coverage is easy to come by. Equal-size
random product-Pauli designs reach $\rH=0$ in $54\%$ of trials at $n=3$ and $\approx100\%$ at $n=4$--$6$. But
full-rank covering designs can be badly conditioned. A stratified covering design leaves the smallest
singular value near zero ($\sigmin\approx7\times10^{-4}$) and pays up to $6\times$ the finite-sample cost
(RMSE $0.140$ versus $0.023$ at $n=3$). The separating code, by contrast, is well-conditioned by construction
with no search. It attains the floor, whose $(1-2p)^n$ branch gives $c_B=0.73$ at $\eta=0$ (here $n=3$, $p=0.05$, so $0.9^3\approx0.73$), and stays there up to the small-angle correction
($\sigmin\approx0.72$ at the generic reference, tabulated in \Cref{tab:coords}). It matches a greedy log-det $D$-optimal search at $n=4$ and
exceeds it at $n=3$, where the greedy design reaches only $\sigmin=0.097$. Covering designs should therefore be
compared by their conditioning once coverage is held fixed.

The ordering extends to the whole population of full-rank designs at
the matched budget, $250$ designs at $n=3$ and $120$ at $n=4$. Across that population the smallest singular
value $\sigmin$ tracks finite-sample RMSE (pooled-population Spearman $\rho=-0.63$, bootstrap $95\%$ CI $[-0.73,-0.50]$, at $n=3$ and $\rho=-0.85$, CI $[-0.91,-0.77]$, at $n=4$; permutation $p<10^{-4}$ for each),
and the worst-conditioned designs pay $3.1$--$4.1\times$ the RMSE of the best-conditioned at equal budget and
shots (\Cref{fig:sweep}). The $\sigmin$ values quoted here are closed-form parity-coordinate singular values, and at the peaked
$|0^n\rangle$ reference they track the exact multinomial Fisher conditioning: across this population the parity
$\sigmin$ and the exact-Fisher $\sigmin$ ($\sqrt{\lambda_{\min}}$ of $F_Z=J_p^\top\mathrm{diag}(1/p_x)J_p$)
rank the designs almost identically (Spearman $\rho=0.98$ and $0.99$ at $n=3$ and $4$), and the exact-Fisher
$\sigmin$ predicts RMSE comparably well ($\rho=-0.65$ and $-0.85$; \Cref{app:threshold}).

\begin{figure*}[t]
\centering
\includegraphics[width=\textwidth]{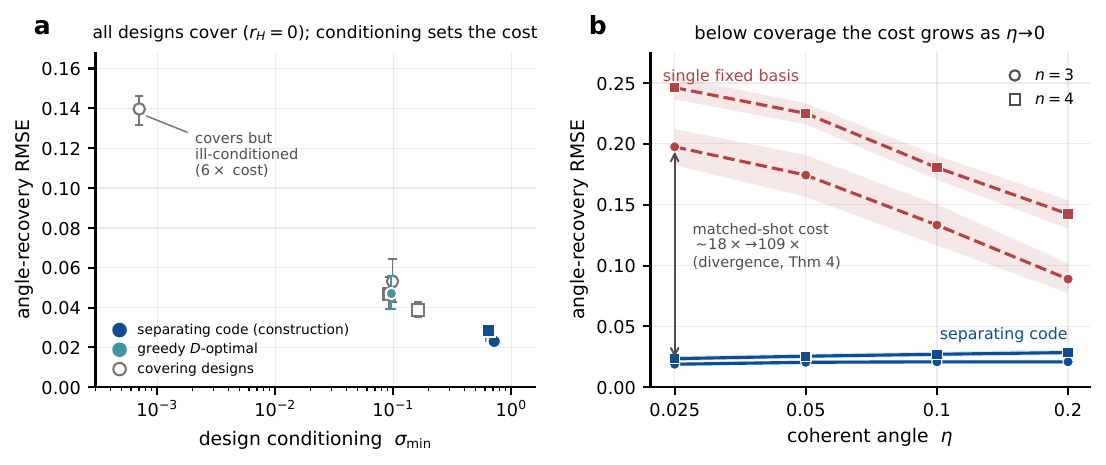}
\caption{Coverage versus conditioning at matched budget ($50$--$60$ seeds, $95\%$ bootstrap
CIs). \textbf{(a)} Every plotted design reaches full rank ($\rH=0$), yet finite-sample angle-recovery RMSE is
set by the design's smallest singular value $\sigmin$ once coverage is held fixed. The separating code (construction,
no search) sits at the well-conditioned corner, matching a greedy $D$-optimal search at $n=4$ and exceeding it
at $n=3$, while a covering but ill-conditioned design pays $\sim6\times$. Circles $n=3$, squares $n=4$, with
$\sigmin$ in parity coordinates (the ordering is coordinate-invariant).
\textbf{(b)} Below coverage, a single fixed basis pays a finite-sample cost that grows as the coherent angle
$\eta$ shrinks, while the separating code stays $\eta$-flat. The implied matched-shot cost
$(\mathrm{RMSE}_{\text{single}}/\mathrm{RMSE}_{\text{code}})^2$ grows $\sim18\times\!\to\!109\times$ ($n=3$) as
$\eta$ falls $0.2\!\to\!0.025$, consistent with the sub-coverage divergence of \Cref{thm:threshold}. The
asymptotic rate there is $\Omega(\eta^{-2})$, and the finite-angle data exhibit divergence while leaving the precise
exponent unresolved.}
\label{fig:cond}
\end{figure*}

\begin{theorem}[Conditioning floor and threshold]\label{thm:threshold}
For the complete weight-1 and weight-2 family on the canonical input at uniform rate $p$, the separating
code's conditioning floor
in parity coordinates is $c_B=\min\{(1-2p)^n,\sqrt{\lambda_2}\}$. Below coverage the
worst-direction sample complexity diverges as $\Omega(\eta^{-2})$ as $\eta\to0$, for finite-variance, locally
unbiased estimators; above coverage $\sigmin$ is bounded away from zero uniformly in small $\|\etav\|$.
\end{theorem}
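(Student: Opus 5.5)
The plan has three parts: compute the floor in parity coordinates at $\eta=0$, prove the sub-coverage divergence, and prove the uniform lower bound above coverage.

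\textbf{The floor.} I will write the stacked Jacobian $J_B$ of $Z^n$ together with the separating-code settings $\Bstar$ in parity coordinates at $\eta=0$. The parity rows are the expectations $E_T=\prod_a d_a^{|T\cap \mathrm{supp}\,a|\bmod 2}\cdot(\cdots)$, where $d_a=1-2p$, on the canonical input $|0^n\rangle$.

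The first step is to show that $J_B$ is block-structured. The $Z^n$ rows depend only on the stochastic parameters, because the coherent derivatives vanish at $\eta=0$ (this is the content of \Cref{thm:kernel}). The code rows carry the coherent derivatives. At $\eta=0$ a $Y$-containing parity row $b$ responds to $\eta_S$ with a first-order term $\pm\prod(1-2p)$ over the qubits it touches. The $X/Y$ rows have vanishing stochastic derivatives, since the expectation of $X$ on $|0\rangle$ is zero to all orders in $p$.

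So $J_B^\top J_B=\mathrm{diag}(M_{\mathrm{coh}},M_{\mathrm{stoch}})$, and $\sigmin(J_B)=\min\{\sigmin(M_{\mathrm{coh}})^{1/2},\sqrt{\lambda_2}\}$. Here $\lambda_2$ is defined in \Cref{sec:setup} as the relevant eigenvalue of the readout-restricted block $M_{\mathrm{stoch}}=B_\Omega^\top D^2B_\Omega$.

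For the coherent block, the binary-codeword construction of \Cref{thm:separatrix} gives each generator a row in which it is the unique first-order contributor, up to the survival factor. I will show that every coherent column has a component of norm at least $(1-2p)^n$ orthogonal to the others. The smallest survival product over an $n$-qubit parity is $(1-2p)^n$, and the code's distinct signatures make the coherent block diagonal-dominant with that minimal entry. This yields $\sigmin(M_{\mathrm{coh}})^{1/2}=(1-2p)^n$, and hence $c_B=\min\{(1-2p)^n,\sqrt{\lambda_2}\}$.

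\textbf{Main obstacle.} The step I expect to be hardest is showing exact equality rather than a bound for the coherent block. That requires the parity rows of distinct codewords to be orthogonal on the coherent columns. I would first try to verify this through Walsh orthogonality of the code's parity rows, treating the rows as characters indexed by distinct subsets.

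\textbf{Sub-coverage divergence.} Below coverage, \Cref{thm:kernel} supplies an exchange vector $v(u)$ in the kernel of the $\eta=0$ Jacobian. Its coherent derivatives are $O(\eta)$, since $\partial_\eta\cos\eta=-\sin\eta$. Hence the Fisher information along the unit coherent direction, after projecting out the stochastic span, satisfies $u^\top F u=O(\eta^2)$.

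Applying the extended bound \eqref{eq:extcrb} to $\phi=u^\top\etav$, restricted to the Schur complement, gives $N\,\mathrm{Var}\ge 1/(c\,\eta^2)$. The worst-direction sample complexity is therefore $\Omega(\eta^{-2})$. The explicit two-qubit computation in \Cref{sec:veil} serves as the template for this step.

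\textbf{Above coverage.} Once coverage holds, $\sigmin(J_B(\etav))$ is continuous in $\etav$, because the entries are trigonometric polynomials. It is also positive at $\etav=0$ by the floor computed above. By continuity it is bounded below by $c_B/2$ on a neighborhood $\|\etav\|\le\delta$ that is uniform, since $n$ is fixed.
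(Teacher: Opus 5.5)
Your architecture matches the paper's. You block-diagonalize $J_{Z\cup\Bstar}$ at $\eta=0$ (stochastic columns on $Z$-rows, coherent columns on the added $X/Y$ rows) and take the minimum of the two block floors. You extend off $\eta=0$ by continuity; the paper uses Weyl with an explicit Lipschitz radius $\eta_0=s_0/(2L_M)$, but plain continuity suffices for the qualitative claim. You get the sub-coverage rate from an $O(\|\etav\|)$ coherent column.

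\textbf{The gap is in the step that produces the constant.} You attribute $(1-2p)^n$ to a product of $(1-2p)$ ``over the qubits it touches,'' minimized over an $n$-qubit parity. That is not the mechanism, and taken literally it gives the wrong number. At $\eta=0$ a measured Pauli $P$ carries an $\eta_S$-derivative only if $iX_SP$ is $Z$-type. For a setting in $\{X,Y\}^n$ this forces $\mathrm{supp}\,P=S$ with an odd number of $Y$'s. So the only coherent readout rows are $Y_i$ for $S=\{i\}$, and $Y_iX_j$ or $X_iY_j$ for $S=\{i,j\}$ (\Cref{lem:Q}). These rows touch one or two qubits, so your rule would give $(1-2p)$ or $(1-2p)^2$.

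The correct damping is the product of $d_a$ over the \emph{generators anticommuting with} $P$, the same rule you already wrote for the $Z$-rows. For $Y_i$, and likewise $Y_iX_j$, these generators are $X_i$ and the $n-1$ pairs $X_iX_k$. That gives the qubit-star factor $D_i=d_i\prod_{k\ne i}d_{ik}=(1-2p)^n$. Nor is $(1-2p)^n$ the smallest survival product over parities: $Z_{\{1,2\}}$ at $n=4$ carries six factors. Such rows simply carry no coherent derivative at $\eta=0$.

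\textbf{Orthogonality.} The exact orthogonality you flag as the main obstacle does not come from Walsh orthogonality, which only relates histogram and parity coordinates. It comes from the same support rule. Distinct generators have disjoint sets of readout rows, so the coherent columns are disjointly supported. Hence $M_{\mathrm{coh}}$ is exactly diagonal, not merely diagonally dominant, with entries $m_S(1-2p)^{2n}$ at uniform rate. Here $m_S$ is the number of code settings that read $S$: the codeword weight for a singleton, the number of separating bits for a pair. Thus $\sigma_{\mathrm{coh}}=\min_S\sqrt{m_S}\,(1-2p)^n$. Equality with $(1-2p)^n$ needs some generator read exactly once, e.g.\ the qubit with a weight-one codeword, and your ``at least $(1-2p)^n$'' bound does not supply that.

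\textbf{Two smaller scope issues.} Below coverage, you argue from the $Z$-only kernel and $\partial_\eta\cos\eta=-\sin\eta$, which controls only the $Z$-rows. For a sub-coverage \emph{augmented} design you need the uncovered $\eta_S$-column to vanish at $\eta=0$ on every measured row, again by the support rule. It is then $O(\|\etav\|)$ by analyticity, so $e_{\eta_S}^\top F e_{\eta_S}=O(\|\etav\|^2)$, and your Schur-complement step goes through. Above coverage, positivity at $\eta=0$ for a general covering design follows from coverage itself (disjoint nonzero coherent columns plus full-rank $B_\Omega$), not from the code's floor.
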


\noindent\emph{Proof sketch.} At $\eta=0$ the Jacobian $J_{Z\cup\Bstar}$ is block-diagonal across Pauli-row
subspaces. The stochastic derivatives occupy the $Z$-rows. Each covered coherent generator $S$ is read on a
distinct measured Pauli row with zero overlap on the other generators' rows, so the coherent block is
column-orthogonal with column norm $D_S$, the qubit-star damping (the product of the flip factors $d_a=1-2p_a$
on $S$'s measured row). A block-diagonal matrix has $\sigmin=\min$ of the block smallest singular values, so
$\sigmin=\min\{\sigma_{\mathrm{coh}},\sigma_{\mathrm{stoch}}\}$ with $\sigma_{\mathrm{coh}}=\min_S D_S$ and
$\sigma_{\mathrm{stoch}}=\sqrt{\lambda_2}$, the smallest nonzero singular value of the $Z^n$ incidence Gram
($\lambda_2$ the second eigenvalue of the readout-restricted information). At a uniform rate $p$ every
$D_S=(1-2p)^n$, giving $c_B=\min\{(1-2p)^n,\sqrt{\lambda_2}\}$ in parity coordinates. Because $J(\etav,p)$ is
analytic, a Lipschitz/Weyl perturbation argument extends this off $\eta=0$. If $B$ covers, then
$\sigmin(0)=s_0>0$ and $\sigmin(\etav)\ge s_0-L_M\|\etav\|\ge s_0/2$ for $\|\etav\|\le\eta_0=s_0/(2L_M)$, so
its floor $s_0$ (equal to $c_B$ for the separating code) is attained at $\eta=0$ and retained within a factor $2$ on the explicit radius $\eta_0$. If $B$ does not cover, an uncovered coherent column expands as
$\partial_{\eta_S}\langle P\rangle_{\etav}=O(\|\etav\|)$, so $\sigmin(\etav)=O(\|\etav\|)$ and the Fisher
information is $O(\|\etav\|^2)$. The full statement and constants are \Cref{app:threshold}.\hfill$\square$

\medskip
The floor has two regimes. The factor $(1-2p)^n$ is exponentially small in $n$ at fixed $p$. It is
$0.73,\,0.21,\,1.3\times10^{-3}$ at $p=0.05$ for $n=3,15,63$, so the floor stays bounded away from zero at
fixed $n$ and small $p$, but not uniformly in $n$. In other words, $c_B$ is a fixed-$n$, small-$p$ guarantee
that does not hold uniformly in $n$. The $\sqrt{\lambda_2}$ term is the readout-limited regime that dominates when the
stochastic block is the bottleneck and the coherent damping is not (\Cref{app:threshold}). Here $c_B$ is the
conditioning the separating code reaches ($\sigmin=c_B$ for the code at $\eta=0$); other
covering designs can fall far below it (\Cref{fig:cond}).

The two ``below'' and ``above'' regimes refer to the hard $\eta=0$ rank boundary. The
$\Omega(\eta^{-2})$ statement is the separate small-angle \emph{rate} at which a sub-coverage design's cost
diverges (\Cref{fig:cond}b). The link between conditioning and error is a non-asymptotic inequality, established
below. For any locally unbiased estimator built from $N$ samples of each measured setting, the worst
estimable direction obeys $N\cdot\mathrm{MSE}\ge2^np_{\min}\,\sigmin^{-2}$ in parity coordinates, with
$p_{\min}$ the smallest outcome probability; the constant $2^np_{\min}$ equals $1$ at a uniform reference
(the probability-weight sandwich of \Cref{app:threshold}). This is a finite-sample
worst-direction floor read off the design's smallest singular value $\sigmin$.
Ill-conditioning is therefore provably costly. Below coverage $\sigmin(J)=O(\|\etav\|)$ forces
$N\cdot\mathrm{MSE}=\Omega(\eta^{-2})$ (for finite-variance, locally unbiased estimators), so a sub-coverage
design's worst-direction cost diverges as the angle shrinks. The separating code, by contrast, sits at
$\sigmin=c_B$ at $\eta=0$ and within a factor $2$ of it on the small-angle radius, so its worst-direction
parity-coordinate bound is capped at $4c_B^{-2}$, finite at fixed $n$.

The practical content is a closed-form measurement-design target. The code attains the floor $c_B$ by
construction ($\sigmin=c_B$ at $\eta=0$), and in the design sweep it sits at the well-conditioned end,
matching the greedy $D$-optimal search at $n=4$ and exceeding it at $n=3$, while generic covering designs fall
well below $c_B$. Because the code is a closed-form formula, no per-instance experiment-design search is
needed to obtain it: cover the noise with $\lceil\log_2(n+1)\rceil$ settings and its conditioning is already
at or above $c_B$. Measurement design for this family becomes a closed-form target rather than a per-instance search. The construction applies within the proved regime: known-support,
commuting weight-1 and weight-2 transverse faults such as single-qubit angle miscalibration. Always-on $ZZ$ crosstalk
(non-commuting) and unknown support are out of scope (\Cref{sec:concl}).

\begin{figure*}[t]
\centering
\includegraphics[width=\textwidth]{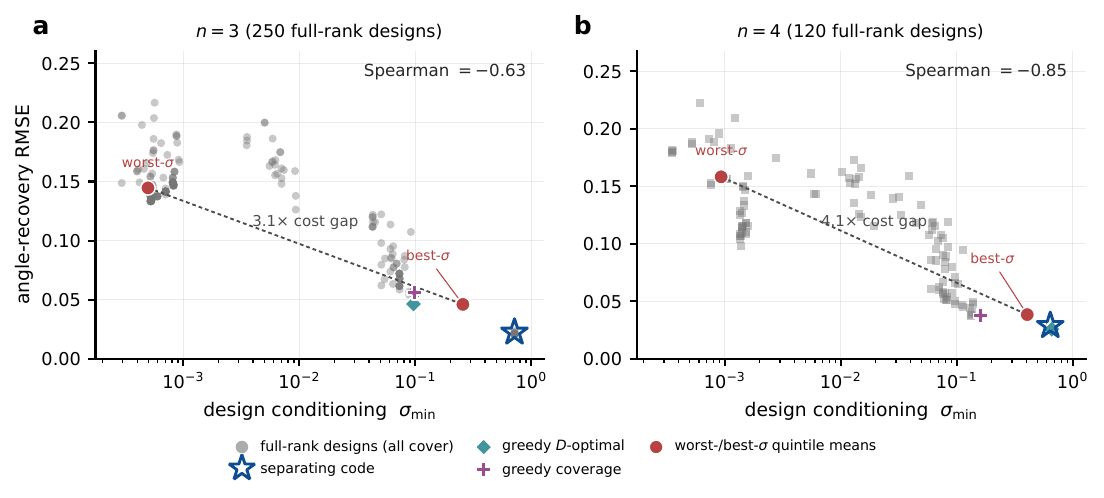}
\caption{Finite-sample recovery cost versus conditioning across the full design population. We plot, for $n=3$
($250$ designs) and $n=4$ ($120$ designs), every random product-Pauli design that reaches full rank
($\rH=0$, so coverage is held fixed and all designs identify the noise) at the matched budget
$L=\lceil\log_2(n+1)\rceil$. The axes are angle-recovery RMSE at matched shots against the design's smallest
singular value $\sigmin$. RMSE tracks $\sigmin$ across the population (pooled-population Spearman $\rho=-0.63$ at
$n=3$ and $-0.85$ at $n=4$), and the worst-conditioned quintile pays $3.1$--$4.1\times$ the RMSE of the
best-conditioned at equal budget and shots. The separating code (blue) sits at the well-conditioned end by
construction, matching the greedy $D$-optimal search at $n=4$.}
\label{fig:sweep}
\end{figure*}

%% file: sections/4_input.tex
\section{An Input-Side Barrier}\label{sec:input}

The veil of \Cref{sec:veil} and the separating code of \Cref{sec:design} settle one
half of a characterization-design question. When the data map is singular, the right
remedy is to redesign the \emph{measurement}. But the experimenter has a second knob.
The probe state is also a design variable, and for some noise families it is the
\emph{only} one that helps. The remaining question is which
intervention can break the veil for a given noise class. The candidates are a richer
measurement, a different input, or, in the limiting case, nothing passive at all. The
answer is a clean dichotomy. The obstruction can live on either side of the data map,
and reading off which side it lives on tells the experimenter where to spend effort.

The transverse families of \Cref{sec:veil} are an instance where the measurement is the
lever. The singularity is a property of the histogram, and the separating code lifts it.
The contrasting instance is a pure longitudinal $Z$-channel on the standard input. Here
no measurement helps at all, because the channel fixes the input state and therefore
returns the input histogram unchanged. The missing column reflects the chosen probe,
not the readout basis. Changing the input from $|0^n\rangle$ to
$|{+}^n\rangle$ restores identifiability. The two cases together show that the data map
$F_Z$ inherits its rank defect from \emph{either} factor, so the cure must be
matched to the factor responsible.

\begin{theorem}[Separation]\label{thm:separation}
A full product measurement (readout in every single-qubit product-Pauli setting) identifies a weight-1 and
weight-2 transverse class iff its $X$-supports are nonzero and pairwise distinct. Longitudinal $Z$-noise, by contrast, is unidentifiable from $|0^n\rangle$ by \emph{any}
measurement, because the channel fixes $|0^n\rangle$. A different input ($|{+}^n\rangle$) recovers it. The
obstruction can be the input, not the measurement.
\end{theorem}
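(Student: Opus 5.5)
The plan splits the statement into three independent claims and treats each separately. Throughout, the conjugation lemma of \Cref{app:separation} reduces every transverse family to $X$-type generators $X^{s}$ with $X$-support $s\in\{0,1\}^n$, and the analysis sits at the $\eta=0$, known-support reference used in \Cref{thm:kernel,thm:separatrix}. Local identifiability means the Jacobian of the full data map (all product-Pauli settings) has trivial kernel. Since the settings together give tomographically complete access to every Pauli expectation, this is equivalent to injectivity of the map $\theta\mapsto\{\langle P\rangle_\theta\}_P$.

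\textbf{Sufficiency in the transverse claim.} Assume the supports are nonzero and pairwise distinct. On $|0^n\rangle$, a generator $X^{s}$ acts on a $Z$-type Pauli by a rotation. The first-order change in $\langle Z^{t}\rangle$ has a coherent part proportional to $\eta_s$ times the expectation of a Pauli with $X$-part $s$ (a $Y$/$X$-mixed string on $s$, by anticommutation whenever $|s\cap t|$ is odd). Its stochastic part is proportional to $p_s$ on $Z$-rows with $|s\cap t|$ odd. The coherent derivative of $\eta_s$ therefore lives on Pauli rows whose $X$-part equals $s$ exactly. Because the supports are distinct and nonzero, these row sets are disjoint for distinct $s$ and disjoint from the $Z$-rows, so the coherent block is column-orthogonal with nonzero norms $D_s$, exactly as in the proof sketch of \Cref{thm:threshold}. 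The stochastic block is $B_\Omega$ with Walsh factors. It has full column rank because the characters $t\mapsto(-1)^{s\cdot t}$ are linearly independent for distinct $s$; this is the Walsh--incidence fact behind \Cref{thm:kernel}. Block-diagonality then gives full rank.

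\textbf{Necessity in the transverse claim.} If $s=0$ the generator is the identity and its parameters have zero gradient. If $s=s'$ for two generators, the channels $e^{-i\eta X^{s}}$ and $X^{s}$-flips coincide as maps for both, so $(e_{\eta_s}-e_{\eta_{s'}})$ and $(e_{p_s}-e_{p_{s'}})$ lie in the kernel of every measurement's Jacobian. This is the sum-versus-parts degeneracy from the introduction, and it holds at any angle because the generators commute.

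\textbf{Input-side claim.} A longitudinal channel is built from unitaries $e^{-i\eta Z^{s}}$ and flips $Z^{s}$. All of these fix $|0^n\rangle\langle0^n|$, since $Z^{s}|0^n\rangle=|0^n\rangle$. Hence the output state is independent of $\theta$, every measurement's outcome distribution is constant, and every Fisher matrix vanishes identically. For $|{+}^n\rangle$, conjugation by $H^{\otimes n}$ maps $Z$-type noise on $|{+}^n\rangle$ to $X$-type noise on $|0^n\rangle$, and the full product measurement set is invariant under this conjugation. The transverse result just proved therefore applies verbatim and recovers the family under the same distinct-nonzero-support condition.

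\textbf{Main obstacle.} I expect the hardest step to be showing that the coherent rows of different generators never overlap at $\eta=0$. This requires checking that the first-order term for $\eta_s$ on $|0^n\rangle$ is supported only on Paulis with $X$-part exactly $s$, even after stochastic damping by the other generators; the damping multiplies rows but cannot move them. I would verify this with the Heisenberg-picture expansion $\partial_{\eta_s}\langle P\rangle|_{0}=i\langle[X^{s},P]\rangle_{\rho_p}$, with $\rho_p$ the stochastic-only output. The argument uses the facts that $\rho_p$ is diagonal in the $X^{s}$-translate basis and that Paulis with nonzero $X$-part vanish on $Z$-diagonal states.
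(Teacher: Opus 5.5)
Your sufficiency argument and the input-side argument are correct and follow the paper's route. The stochastic block is the incidence matrix $B$ on the $Z$-rows, and it has full column rank by independence of the Walsh characters. Your observation that at $\etav=0$ the $\eta_s$-derivative, proportional to $\langle[X^{s},P]\rangle_{\rho_p}$, lives only on Paulis with $X$-part exactly $s$ is the full-measurement version of the paper's \Cref{lem:Q}; it in fact needs no gauge at all. The longitudinal part is the paper's argument: the channel fixes $|0^n\rangle$, and $H^{\otimes n}$ maps the problem to the transverse one while permuting product settings.

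The gap is in necessity. You reduce ``throughout'' to $X$-type generators via the conjugation lemma, but that lemma presupposes pairwise distinct supports. It is therefore unavailable exactly in the case necessity must handle. The theorem's condition is on $X$-supports, and distinct commuting transverse Paulis can share one. For example, $X_1X_2$ and $Y_1Y_2$ commute, both have $X$-support $\{1,2\}$, and admit no gauge $s$ with $b_g=a_g\wedge s$. For the family $\{X_1X_2,Y_1Y_2\}$ on $|00\rangle$ the two channels do not coincide. Since $Y_1Y_2=-X_1X_2$ on $\mathrm{span}\{|00\rangle,|11\rangle\}$, the two coherent columns are negatives of each other, so $e_{\eta_a}-e_{\eta_{a'}}$ is not a kernel vector.

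Even for literal duplicates, your stochastic direction $e_{p_s}-e_{p_{s'}}$ lies in the kernel only when $p_s=p_{s'}$. The output depends on the two rates only through $d_sd_{s'}$, so the correct direction is $d_s e_{p_s}-d_{s'}e_{p_{s'}}$. Only your coherent direction survives in the duplicate case.

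The missing idea is the paper's collision in $B$. At $\etav=0$ on a $Z$-diagonal input, a flip by any Pauli $P$ with $X$-support $s$ acts on diagonal states exactly as the $X^{s}$-flip, since $P\rho P^\dagger=X^{s}\rho X^{s}$. Hence for any two generators $a,a'$ with the same $X$-support, the vector $d_a e_{p_a}-d_{a'}e_{p_{a'}}$ lies in the kernel of every measurement's Jacobian.

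Likewise, an empty $X$-support means a $Z$-type generator, which is not the identity in general. Its columns still vanish because it fixes the diagonal input, which is your own longitudinal argument.
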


\paragraph{Proof sketch.}
The two halves of the statement isolate the two factors of the data map. For a
weight-1 and weight-2 transverse class read out by a full product measurement, identifiability
reduces to a combinatorial condition on the $X$-supports. Equal or empty supports
collide in the incidence matrix $B$ and leave a null direction in the stochastic block.
Distinct nonzero supports instead make $B$ full column rank, and the coverage
construction of \Cref{sec:design} supplies the complementary coherent block. So in the
transverse case the obstruction is read out by the measurement, and a richer measurement
removes it. For the longitudinal case the obstruction is structural in the probe. A pure
$Z$-channel fixes every $Z$-diagonal state for all $(\etav,p)$, so every column of the
$|0^n\rangle$ Jacobian vanishes and no measurement (product, entangled, or adaptive)
can identify it. Conjugating the channel by $H^{\otimes n}$ is equivalent to
preparing the input $|{+}^n\rangle$, and this map carries the longitudinal theory to the
transverse theory, which is identifiable. The obstruction therefore lives on the input
side of the data map. The full argument, including the conjugation reduction, is in
\Cref{app:separation}.

\paragraph{Genericity of the rank defect.}
The rank defect is a generic input invariant, not a peculiarity of the highly symmetric
probe $|0^n\rangle$: a constant
$\rH^\star(n,K)$ holds for Zariski-almost-every input, real or complex, pure or mixed, so generic
perturbations or re-preparations of the probe leave the defect dimension unchanged
(\Cref{app:separation}). Concretely, write $M(\theta)$ for the Pauli-transfer matrix.
Each Jacobian entry
\[
  [\JZ]_{T,k}=\sum_P\bigl[\partial_{\theta_k}M(\theta)\bigr]_{Z_T,P}\,r(\rho_0)_P
\]
is bilinear, that is, linear in $r(\rho_0)$ with trigonometric-polynomial coefficients, so its
rank is constant off a proper algebraic locus. There is thus a generic value
$\rH^\star(n,K)$ attained for Zariski-almost-every input, and for $K\in\{1,2\}$ it
saturates the outcome bound,
$\rH^\star(n,K)=\max\!\bigl(0,\,2|G(n,K)|-(2^n-1)\bigr)$. In particular
$\rH^\star(3,2)=5$, exceptional only at $K=1$. This invariance pins the
input-versus-measurement map to the noise class and the data map, so it survives
re-preparation of the canonical probe. Where it is the input that carries the defect, no
amount of generic re-preparation removes it, and only a structurally different input
(the $|{+}^n\rangle$ rotation above) does.

%% file: sections/5_experiments.tex
\section{Verification and Hardware}\label{sec:exp}

We test the closed-form characterization in three layers. First, the analytic
statements are checked against exact density-matrix evolution. This confirms that the
kernel, the exchange-basis structure, the rank counts, and the conditioning summaries are
exact to machine precision. Second, a matched-budget study holds coverage fixed, so that every
design identifies the noise. In that setting the finite-sample cost of recovering the
coherent angles is set by conditioning, not coverage. Third, a hardware run on two superconducting processors with injected
known errors corroborates, on a conservative prescreened subset, the qualitative conditioning ordering on real devices.

\paragraph{Comparison with exact computation.}
The closed-form statements match exact density-matrix computation to machine precision at
$n\le4$ (kernel and exchange basis) and through $n\le8$ (rank and conditioning
summaries). The rank counts are unchanged across real, complex, pure, and mixed inputs,
where the generic Hamiltonian-rank gap is $\rH^\star(3,2)=5$, exceptional only at $K=1$. The logarithmic
measurement budget of \Cref{thm:separatrix} is confirmed by exact computation through
$n=8$, where the minimal separating set has size $|\Bstar|=2,3,3,3,3,4$ for
$n=3,\dots,8$ (the $\lceil\log_2(n+1)\rceil$ law). The conditioning floor
$c_B=\min\{(1-2p)^n, \sqrt{\lambda_2}\}$ is tabulated analytically to $n=63$. These checks
fix the constants of the theory exactly, and the remaining experiments probe what they
imply for estimation at finite shots.

\paragraph{Matched-budget comparison.}
At equal cardinality and a matched total of $8192$ shots per design, split equally across that design's settings and averaged over $50$ independent sampling seeds, the separating code reaches the identifiable regime
$\rH=0$ with a well-conditioned measured Fisher information, $\sigmin\approx0.5$--$0.7$.
Equal-size random product-Pauli designs do cover, reaching full rank in $54$--$100\%$ of
trials, but at a far worse conditioning, with median $\sigmin\approx5\times10^{-4}$ at
$n=3$. A greedy $D$-optimal search matches the code's conditioning at $n=4$ but not at
$n=3$, where its log-determinant objective lands it $7\times$ below the code on $\sigmin$ ($0.10$ vs $0.72$;
\Cref{fig:cond}a). Targeting $\sigmin$ itself, an exhaustive E-optimal search over the augmentation family
places the fixed code within $3\%$ of the best achievable conditioning ($\sigmin=0.717$ vs $0.720$ at $n=3$,
$0.643$ vs $0.662$ at $n=4$), with no per-instance search. The consequence for estimation is direct. Under the code, the
finite-sample angle-recovery RMSE is $\eta$-flat ($\approx0.02$--$0.03$) as the injected
angle scale $\eta$ shrinks. A sub-threshold single setting saturates, and its RMSE
grows as $\eta$ falls. Equivalently, the shot budget a single setting needs to match the
code grows from $\approx18\times$ to $\approx109\times$ at $n=3$ and from $\approx25\times$
to $\approx110\times$ at $n=4$ as $\eta$ falls $0.2\!\to\!0.025$ (\Cref{fig:cond}b). The multiplier equates the
two RMSEs at $1/\sqrt N$ scaling, so once the single setting is bias-limited it understates the shots it would
actually take. This
finite-angle divergence is the empirical face of the sub-coverage rate
$\Omega(\eta^{-2})$ of \Cref{thm:threshold}, which holds for finite-variance, locally
unbiased estimators. The data establish the divergence itself. They do not pin down a precise exponent.

Holding coverage fixed isolates conditioning as the dominant driver across the entire design
population. For $n=3$ ($250$ designs) and $n=4$ ($120$ designs), \Cref{fig:sweep} plots
every random product-Pauli design that reaches full rank at the matched budget
$L=\lceil\log_2(n+1)\rceil$, so all of them identify the noise. It shows their
matched-shot RMSE against the design's smallest singular value $\sigmin$. RMSE tracks
$\sigmin$ across the population (Spearman $\rho=-0.63$ at $n=3$, $-0.85$ at $n=4$; bootstrap $95\%$ CIs exclude zero), and the
worst-conditioned quintile pays $3.1$--$4.1\times$ the RMSE of the best-conditioned at
equal budget and shots. The separating code sits at the well-conditioned end by
construction, matching the greedy $D$-optimal search at $n=4$. Among designs that all identify the noise, conditioning is the dominant, monotone predictor of the
cost.

\paragraph{Hardware consistency check.}
We tested the conditioning ordering on hardware, running two IBM Heron processors with
injected known errors at $n=3$. Before the estimator comparison, chains were screened by a
device-quality prescreen: single-setting recovery with idle-leakage intercept
$\le0.02$\,rad and slope in $[0.85,1.15]$. Two further backends
were excluded for coherent-leakage drift, with thresholds and the excluded backends given
in \Cref{app:hardware}. This prescreen is conservative for the comparison: by retaining
only chains where the single-basis slope is already near unit, it excludes the chains with the largest
single-basis bias, so the comparison is conservative for the code. The ordering also holds beyond the retained
set: across the excluded (drifting) chains the bias ratio stayed above one throughout (raw ratios
$1.8$--$7.9$). We report
five replicates on the retained backends (Heron A, $3$; Heron B, $2$). Across these, the single fixed-basis ($Z$) least-squares fit is $3$--$5\times$
more biased than the separating-code fit (bias ratio $2.97$--$5.05$, median $3.96$), while
the code recovers the injected angles with near-unit slope ($0.88$--$1.05$) and recovery
error $\le0.019$\,rad (\Cref{fig:hw}; per-replicate values and protocol in
\Cref{app:hardware}). \Cref{thm:threshold} predicts that a single fixed basis is far more
weakly conditioned than the code, and this gap is present at the angles actually injected on hardware, not only
at $\eta=0$: there the single basis remains rank-deficient ($\rH=5$ by exact computation, its smallest \emph{nonzero}
singular value only $1.3\times10^{-4}$ to $8.1\times10^{-3}$ across the grid), while the code is full rank
($\rH=0$) and well-conditioned ($\sigmin\approx0.72$). These are computed design properties, not device
measurements, and they set the \emph{ordering} the bias ratios reflect rather than predicting their $3$--$5\times$
magnitude. The measured bias ordering holds on every replicate, though its magnitude varies with backend and
replicate. This
is a consistency check on injected errors at small $n$ that confirms the qualitative
conditioning ordering on real devices. We make no device-characterization or
quantum-advantage claim.

\begin{remark}[The singular-Fisher diagnostic beyond quantum noise]\label{rem:nonquantum}
The singular-Fisher diagnostic at the core of these results is not
specific to coherent quantum noise: a target lying in the kernel
of the averaged Fisher information is unrecoverable from any amount of passive data, a
designed measurement restores it, and conditioning sets the finite-sample cost. It is the classical singular-Fisher
non-identifiability of experiment design \cite{rothenberg1971identification,
stoica2001singular}, read through the coverage-versus-conditioning lens. Two
non-quantum instances in \Cref{app:ml}, a subspace-restricted active-learning problem and an
overparameterized random-feature probe (\Cref{fig:c7}), reproduce the three signatures and locate the cure in
design, not in more data.
\end{remark}

%% file: sections/6_related.tex
\section{Related Work}\label{sec:related}

We position the passive coherent-veil obstruction and its logarithmic cure against five
lineages, ordered from the quantum-characterization context closest to the problem out to
the classical statistical foundations that supply the underlying tools. Throughout, the
distinguishing feature is the same. Prior quantum-characterization methods obtain
identifiability by \emph{active} access: designed gate sequences, twirls, random
circuits, or adaptive probes. We instead ask what a single \emph{passive}, fixed-basis
histogram can and cannot resolve. We prove that at zero angle the coherent angle is the part it cannot resolve,
and then restore it with a minimal fixed-setting design that augments the single histogram, with no
twirl, replay, or depth sequence.

\paragraph{Quantum characterization by active intervention.}
The mainstream of device characterization obtains identifiability precisely by intervening
on the dynamics. Gate-set tomography reconstructs gates self-consistently from structured
sequences of prepared and measured states \cite{nielsen2021gateset}. Randomized
benchmarking extracts an average error rate from random Clifford sequences of growing depth
\cite{proctor2017what}, and modern Pauli-channel learning achieves favorable
sample complexity by working through twirled or Clifford-access protocols
\cite{huang2022foundations,chen2022quantum}. At scale, sparse Pauli--Lindblad learning
\cite{vandenberg2023probabilistic}, efficient Pauli-channel learning \cite{harper2020efficient}, and cycle
benchmarking \cite{erhard2019characterizing} reliably estimate the stochastic Pauli-rate structure on
hardware; these are complementary to our object, because twirling a gate into a Pauli channel is exactly the
step that removes the coherent-angle vector we recover. Most directly comparable, a recent protocol
characterizes coherent and correlated low-degree noise in layers of gates with logarithmic
effort, through random Pauli-basis preparations and measurements with classical
postprocessing \cite{crupi2025efficient}. It recovers low-degree channel coefficients by
active randomization and classical inversion, whereas we fix the per-generator signed
coherent-angle vector from a single passive histogram plus $\lceil\log_2(n+1)\rceil$ fixed
settings, with no randomization and no channel inversion. Its logarithmic cost counts active
random rounds, while ours counts fixed measurement settings. The most directly comparable protocols are those
built to recover coherent angles directly: robust phase estimation calibrates
gate rotation angles at Heisenberg-limited precision from designed variable-depth sequences
\cite{kimmel2015robust}, long-sequence gate-set tomography amplifies coherent errors with
circuit depth \cite{blumekohout2017demonstration}, and Hamiltonian-learning protocols
estimate generator coefficients from controlled evolution \cite{wiebe2014hamiltonian}. These
recover the same per-generator angle information we target, but through active, structured,
variable-depth experiments. Our contribution is that the same per-generator coherent-angle
vector is fixed by one passive histogram plus $\lceil\log_2(n+1)\rceil$ fixed settings with
no depth sequence, so the claim concerns the fixed-setting minimal-design regime, not first
recovery of coherent angles. Each of these presupposes the freedom to
choose what circuit is run before the data are taken, and each is recovery-oriented: it
builds an informative protocol and then proves estimation. Our object is the complementary
boundary these protocols design away. We fix the measurement to one passive histogram in a
single basis, with no twirl and no sequence over depth, and we show that this fixed data
map is rank-deficient in exactly the coherent direction. The contribution is therefore a
characterization of the passive regime, together with the
smallest measurement-design supplement, namely $\lceil\log_2(n+1)\rceil$ additional fixed
settings, that lifts the defect without reintroducing active scrambling.

\paragraph{Scaling regimes.} The fixed-setting design and these active protocols are favorable in different regimes. Our
advantage is in \emph{intervention} cost: recovery needs one fixed
histogram plus $\lceil\log_2(n+1)\rceil$ static settings, with no twirl, no depth sequence, and no channel
inversion, which is attractive at the small-to-moderate $n$ where those active resources dominate. The price is
statistical and grows with scale, since the conditioning floor $c_B=\min\{(1-2p)^n,\sqrt{\lambda_2}\}$ is dominated at large $n$ by its $(1-2p)^n$ branch, exponentially small in $n$ at
fixed stochastic rate $p$, so the worst-direction parity-coordinate bound of the fixed-setting design grows as
$(1-2p)^{-2n}$. Active protocols that estimate the twirled Pauli-rate structure, such as sparse
Pauli--Lindblad learning \cite{vandenberg2023probabilistic}, do not pay this factor and scale more gently in
shots, at the cost of the randomization and inversion the fixed-setting design avoids. The regimes therefore cross
over: for the small-to-moderate $n$ and near-calibration angles this paper targets, the fixed-setting design is
the cheaper intervention; asymptotically in $n$, an active protocol is the cheaper measurement.

\paragraph{Metrology, sloppiness, and measurement choice.}
In the metrological literature the same degeneracy appears as \emph{sloppiness} of the
quantum Fisher matrix: an over-parametrized model in which some parameter combinations are
exponentially hard to estimate, a condition that can be lifted by scrambling the probe or
by tailoring the encoding \cite{frigerio2024overcoming,mihailescu2025metrological}.
Relatedly, the choice of measurement is known to govern how efficiently noise parameters
can be recovered \cite{tsang2023quantum}. Our treatment differs in object and in cure. We
analyze the \emph{classical} Fisher information of a \emph{fixed} passive measurement,
which is the operative quantity before any measurement optimization is performed; the
quantum Fisher information instead upper-bounds all measurements. We give this
classical Fisher kernel in closed form for an explicit multi-qubit noise class, where prior
diagnoses of sloppiness were numerical. And we restore the lost direction with a \emph{minimal
discrete product-Pauli design}, using neither continuous scrambling nor a re-engineered
encoding. The cure is thus measurement design in a highly economical form: a
logarithmic number of additional commuting-Pauli settings, each a standard product
measurement.

\paragraph{Unitarity benchmarking.}
The closest existing quantifier of the same physical quantity, the coherence of the
noise, is unitarity benchmarking, which isolates the coherent (unitarity) component of an
error channel from random benchmarking sequences \cite{wallman2015estimating}. The contrast
is sharp and central to the comparison. Unitarity benchmarking reports a \emph{single scalar} coherence
magnitude, obtained from an \emph{active} randomized-benchmarking protocol that averages
over many random sequences and depths. We instead recover the \emph{per-generator
coherent-angle vector}, one signed angle for every coherent generator in the support, from
\emph{one passive histogram} plus $\lceil\log_2(n+1)\rceil$ fixed settings. Both speak to
noise coherence. The passive single-histogram obstruction we identify is invisible to a
protocol built on random sequences, and our cure is distinguished from unitarity benchmarking
by being twirl-free, structured, and of logarithmic size.

\paragraph{Forward coherence models.}
A complementary thread builds forward models that relate the coherence of a process to its
observable output statistics. The coherence-sensitive readout model of
\cite{malik2026coherence} maps coherent error onto observed output probabilities and so
\emph{exposes} the effect we study. It remains a forward model, however: it predicts what
coherent noise does to the data and leaves open the inverse problem of whether the coherent
angle can be recovered from those data. Our results address exactly that inverse question.
We show first that the angle is non-estimable from the passive histogram at zero angle, and then construct
the minimal product-Pauli design that makes it estimable.

\paragraph{Classical identifiability and optimal experiment design.}
At the most abstract level this is a question of identifiability under a singular Fisher
information, the classical-statistics setting in which a rank-deficient information matrix
leaves some parameter combinations unidentifiable
\cite{rothenberg1971identification,stoica2001singular,liyeh2011singular}. It is equally a
question of \emph{measurement design} as optimal experiment design, where one chooses the
experiment to shape the information matrix \cite{chaloner1995bayesian}. A recent
finite-sample spectral threshold for passive identifiability over general models
\cite{huang2025spectral} is the closest abstract statement of the phenomenon. Our
closed-form quantum kernel instantiates it for an explicit noise class and extends it with a
constructive minimal product-Pauli design and an explicit conditioning floor. Relative to the
$E$-/$A$-optimality principles of optimal design, the general fact that conditioning
matters is familiar; what is new is the closed-form floor $c_B$ and the specific object it
conditions, the commuting-Pauli coherent--stochastic kernel. On the quantum-learning side,
the same active, recovery-oriented stance dominates the measurement-design line: low-rank
matrix sensing from a designed ensemble \cite{lang2025unified}; adaptive Bayesian experiment
selection \cite{granade2012robust}; and compressed-sensing Lindbladian tomography
\cite{dobrynin2025compressed}. These all presuppose or build an informative protocol and then prove
recovery, whereas we work on the passive boundary they design away. Our minimal separating
code connects to nonrandom superimposed codes and group testing
\cite{kautz1964superimposed}, from which we inherit the covering count; the
conditioning analysis is ours. Group testing asks only that a code separate the items, while we
must additionally control the Fisher conditioning of the resulting design, which the
covering property alone does not guarantee.

\paragraph{Explicit distinctions.}
The claim here is distinct from several adjacent results. The veil
is not the gauge freedom of self-consistent or gate-set tomography
\cite{merkel2013selfconsistent,nielsen2021gateset}, which a convention or reference frame
removes. It is a rank defect of one data map for a specified physical parameter, a property
intrinsic to the data map and invariant to the chosen representation. It is not the twirled-model setting of sample-optimal Pauli-rate
estimation \cite{flammia2020efficient}, whose target survives twirling, whereas our target
is exactly the coherent parameter that twirling destroys. Nor is it the random-breadth
strategy of classical shadows \cite{huang2020predicting}. Random product measurements
estimate many state properties from one informationally broad dataset, and the random
product-Pauli designs in our population study (\Cref{fig:sweep}) are of that family and all
\emph{cover} the support. Yet they span orders of magnitude in conditioning, so for the
small-angle coherent-parameter question it is the minimal designed code, not random breadth,
that reaches the floor. Finally, we relate the finite-sample conditioning to recent
Fisher-information criteria for quantum estimation \cite{kwon2025criteria,kwonlie2026effective}.
Unlike those unbiasedness and effective-dimension criteria, our singular Cram\'er--Rao
result names a coherent parameter that no finite-variance, locally unbiased estimator
recovers at $\eta=0$.

The single-qubit coherent/stochastic degeneracy is folklore. What is new here is the
multi-qubit correlated structure, the closed-form Fisher kernel where previously only
numerics existed, the minimal product-Pauli measurement design that restores it, and the explicit
conditioning floor, placing the result in the quantum-learning identifiability and
sample-complexity line \cite{du2024linearproperties}.

%% file: sections/7_conclusion.tex
\section{Conclusion}\label{sec:concl}

We did not set out to gauge how hard coherent quantum noise is to estimate. The question
we posed is whether it is identifiable at all from a single passive histogram. That is a
question about the observation map, settled before any estimator is chosen. It asks what a
fixed measurement can in principle resolve, before one chooses how to invert the data. The
answer is that, within a sharply delimited but physically central regime, the map itself is
blind.

\paragraph{Summary of results.}
For commuting weight-1 and weight-2 transverse errors, correct parameters do
not suffice. At zero angle each coherent over-rotation angle lies in the Fisher kernel of the
passive histogram, and the coherent fraction's gradient keeps a kernel component at the small nonzero
angles we check. So no
finite-variance, locally unbiased estimator can recover it from that data, however
much one enriches the inversion. What restores identifiability is not a richer model
but a richer measurement. A logarithmic, constructive separating design lifts the veil
exactly, and it does so at a conditioning floor we give in closed form for the
complete family, $c_B = \min\{(1-2p)^n,\sqrt{\lambda_2}\}$, with the coverage--conditioning
threshold of \Cref{thm:threshold}. The floor is confirmed by exact computation
(\Cref{fig:cond}; \Cref{tab:coords}). The design's advantage is borne out across the
finite-sample sweep (\Cref{fig:sweep}), and a hardware run reproduces the predicted
ordering as a consistency check (\Cref{fig:hw}). For practice, the quantity
to measure next is the \emph{conditioning} of the design. Two
designs that both ``cover'' the unknowns can differ by an order of magnitude in
finite-sample cost, so coverage is necessary but not the figure of merit.

\paragraph{Limitations and scope.}
We state the boundary of the proved regime and the open problems beyond it.

\emph{Proved regime.} The closed-form Fisher kernel of \Cref{thm:kernel}, the
non-estimability of \Cref{thm:nonest}, the minimal separating code of
\Cref{thm:separatrix}, and the conditioning threshold of \Cref{thm:threshold} hold for
canonical $Z$-diagonal inputs, commuting weight-1 and weight-2 transverse generators with known
support, at small angles $\etav$, with minimality and the closed-form floor stated for the
complete family. The floor $c_B$ carries the factor $(1-2p)^n$, which is
exponentially small in $n$ at fixed depolarizing rate $p$. The guarantee is therefore a
fixed-$n$, small-$p$ statement. It does not hold uniformly. The cost of falling below coverage is a
\emph{divergence}, $\Omega(\eta^{-2})$ for finite-variance, locally unbiased
estimators, and is stated as a divergence without a precise exponent.

\emph{Boundary of the analysis.} The small-angle locality is a sharp phase boundary that is specific to that
regime; it is not a generic obstruction. At generic angles the exchange degeneracy lifts and the single
passive histogram attains the largest rank its outcome count permits, leaving the closed-form generic defect
$\rH^\star(n,K)=\max(0,2|G|-(2^n-1))$ (\Cref{app:separation}). What remains there is a parameter-counting
deficit rather than the veil: it equals $5$ at $n=3,4$, where the $2|G|$ parameters outnumber the $2^n-1$
available parities, and it vanishes for $n\ge5$, where the obstruction becomes conditioning rather than rank
and is governed by the same $\sigmin$ analysis.
The contribution is therefore squarely the small-angle conditioning threshold. This is
where coherent over-rotation errors actually sit on present hardware, a few degrees,
$\eta\sim0.01$--$0.1$\,rad of miscalibration~\cite{hashim2024practical}, the scale our
injected $\eta\in[0.025,0.2]$\,rad (about $1$--$12^\circ$) spans.

\emph{Open problems.} Two problems require new theory. For non-commuting
generators, such as $X{+}Y$ or an always-on $ZZ$ that does not commute with a transverse
$X$, the leading-order separating-code cure lifts only the first-order kernel; a residual
veil remains at second order, and characterizing it requires a second-order Fisher analysis
beyond the first-order forward model used here. Unknown generator support is the second:
with the support itself unknown, identification becomes a discovery-design problem, in which
the measurement must first localize where the coherent error lives before resolving how large
it is. Short of full discovery, recovery is robust to a modest support error: at $n=3$, omitting
one true generator or adding one spurious generator changes the recovered real angles' RMSE by at most
$1\%$ against a matched baseline, the spurious generator absorbing only a small fitted angle ($\approx0.012$\,rad,
against injected $\approx0.1$\,rad) rather than corrupting the real ones (\Cref{app:misspec}). These two boundaries set the practical reach. The analyzed regime comprises coherent faults that are
commuting, weight-1 and weight-2, transverse, and of known support: single-qubit rotation-angle
miscalibration and fixed transverse nearest-neighbor terms on a fixed-frequency
architecture. Beyond it lie two features a device engineer often faces together: always-on
$ZZ$ crosstalk, which does not commute with the transverse generators, and unknown or drifting
support. The result therefore characterizes the estimate-known-noise regime, not the
discover-unknown-noise regime.

\paragraph{A transferable rule.}
The practitioner's rule transfers beyond the proved regime. When an estimator
underperforms, the first question is whether the target
even lies in the range of the observation map's Fisher information, before any attempt to enrich the model. If it does not, the target is absent from the data, and no
finite-variance, locally unbiased estimator can recover it. The fix is then to design the measurement that makes the target
visible, and then to \emph{condition} that design rather than merely cover the unknowns.
The same diagnostic separates a blind observation map from a hard estimation problem
in any setting where a fixed measurement is inverted under a finite budget, including
classical inverse problems where parameters collapse into a measurement's invariant
subspace.

%% file: sections/statements.tex
\section*{Data Availability}
The code and data supporting the findings of this study will be released publicly with the published version of this article.

%% file: appendix.tex
% ======================================================================
%  Appendix: proofs, hardware protocol, non-quantum instances
% ======================================================================
\section{Notation and standing conventions}\label{app:setup}

We collect the conventions used throughout the proofs: the coherent--stochastic noise model, the $Z$-histogram
data map, and the Pauli--Liouville facts that drive every rank computation.

\paragraph{Objects.} On $n$ qubits, for $S\subseteq[n]$ let $X_S=\bigotimes_{i\in S}X_i$, and similarly
$Z_T$. The \emph{generator set} $G$ is a set of distinct nonempty $X$-type masks of weight $1\le|S|\le K$;
the parameter is $\theta=(\etav,p)\in\mathbb{R}^{2|G|}$, a coherent angle $\eta_S$ and a stochastic rate
$p_S$ per $S\in G$. The noise channel, in the fixed composition order of $G$, is
\begin{equation}
N_{\etav,p}=\prod_{S\in G}\mathrm{Flip}_{S,p_S}\circ\mathrm{Ad}_{R_S},\qquad
R_S=e^{-i\eta_S X_S/2},
\end{equation}
with the single-generator flip $\mathrm{Flip}_{S,p}(\rho)=(1-p)\rho+pX_S\rho X_S$.
The pre-noise state $\rho_0$ is real ($\rho_0=\rho_0^\ast$; every Pauli expectation with an odd number of
$Y$ factors vanishes); the canonical choice is $\rho_0=|0^n\rangle\langle0^n|$. Readout is in the $Z$ basis,
so the histogram is equivalent by an invertible Walsh--Hadamard transform $W$ to the $Z$-expectation vector
$E_T=\mathrm{Tr}[Z_T\,N_{\etav,p}(\rho_0)]$, $T\subseteq[n]$.

\paragraph{Jacobian and Fisher kernel.} $\JZ\in\mathbb{R}^{2^n\times2|G|}$ has entries
$(\JZ)_{T,k}=\partial E_T/\partial\theta_k$ at a reference $\theta_0$ of small generic angles and rates. With
$d_a=1-2p_a$, write $b_{T,a}=|T\cap S_a|\bmod 2$ for the incidence, $B$ the incidence matrix, $B_\Omega$ its
restriction to the retained rows $\Omega=\{T:z_TD_T(p)C_T(\etav)\ne0\}$ (with $z_T=\mathrm{Tr}[Z_T\rho_0]$,
$D_T=\prod_{a:b_{T,a}=1}d_a$), and $H$ the coherent log-derivative block. The classical histogram Fisher
matrix is $\FZ=J_p^\top\mathrm{diag}(1/p_x)J_p$ with $J_p$ the histogram Jacobian; since $E=Wp$ with $W$
invertible and $p_x>0$ on the support, $v^\top\FZ v=\|\mathrm{diag}(1/\sqrt{p_x})J_pv\|_2^2$, so
$\ker\FZ=\ker J_p=\ker\JZ=:\KZ$, and $\rH=2|G|-\mathrm{rank}\,\JZ$ (\Cref{lem:kerF}). All exact-computation
references below denote exact density-matrix evolution. For the kernel and rank checks the Jacobian is
Richardson-extrapolated and numerical rank is read at a relative singular-value threshold of $10^{-7}$, with
the spectral gap across the cut checked at every reference; the design-population sweeps use a
central-difference Jacobian with relative rank threshold $10^{-9}$.

\paragraph{Pauli--Liouville identities.} In the expectation vector $r_P=\mathrm{Tr}[P\rho]$: $\mathrm{Flip}_{S,p}$
is diagonal, $r_P\mapsto\lambda_P r_P$ with $\lambda_P=1$ if $[X_S,P]=0$ and $\lambda_P=d=1-2p$ if
$\{X_S,P\}=0$; and $\mathrm{Ad}_{R_S}$ acts in the Heisenberg picture by $R_S^\dagger P R_S=P$ if $[X_S,P]=0$
and $R_S^\dagger P R_S=\cos\eta\,P+\sin\eta\,Q$ with $Q=iX_SP$ if $\{X_S,P\}=0$. For $P=Z_T$ with
$\{X_S,Z_T\}=0$ (i.e.\ $|T\cap S|$ odd), $Q=iX_SZ_T$ carries $Y$ on $T\cap S$ (an odd number), so
$\langle Q\rangle_{\rho_0}=0$: this is the leading-order veiling mechanism.

% ======================================================================
\section{Proof of \texorpdfstring{\Cref{thm:kernel}}{Theorem 1}: the exchange kernel}\label{app:kernel}

The coherent and stochastic columns of the data map are computed in closed form. The kernel is then read off
as the coherent--stochastic exchange space.

\begin{lemma}[Pauli--Liouville columns]\label{lem:pl}
For one generator $A=X_S$ and a measured Pauli $P$: if $[A,P]=0$ then
$r'_P=r_P,\ \partial_p r'_P=\partial_\eta r'_P=0$; if $\{A,P\}=0$ then, with $d=1-2p$ and $Q=iAP$,
\begin{gather*}
r'_P=d(\cos\eta\,r_P+\sin\eta\,r_Q),\\
\partial_p r'_P=-2(\cos\eta\,r_P+\sin\eta\,r_Q),\\
\partial_\eta r'_P=d(-\sin\eta\,r_P+\cos\eta\,r_Q).
\end{gather*}
Consequently each Jacobian column inserts the corresponding single-generator derivative into the ordered
channel product.
\end{lemma}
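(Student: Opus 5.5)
We work in the Pauli--Liouville (Heisenberg) picture and treat the single-generator channel $\mathcal{E}=\mathrm{Flip}_{S,p}\circ\mathrm{Ad}_{R_S}$ with $A=X_S$ as a map on expectation vectors. The plan is to compute $r'_P=\mathrm{Tr}[P\,\mathcal{E}(\rho)]$ by pushing $P$ backwards through the channel. First push it through the flip, then through the rotation, and finally differentiate the closed form in $p$ and $\eta$.

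\textbf{Flip step.} The flip is self-adjoint as a superoperator. It sends $P\mapsto(1-p)P+pAPA$. Since $APA=P$ when $[A,P]=0$ and $APA=-P$ when $\{A,P\}=0$ (using $A^2=I$), this gives the factor $1$ or $d=1-2p$ respectively.

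\textbf{Rotation step.} Expand $R_S^\dagger P R_S$ with $R_S=\cos(\eta/2)I-i\sin(\eta/2)A$.
\begin{itemize}
\item In the commuting case $R_S$ commutes with $P$, so $P$ is fixed.
\item In the anticommuting case, $PR_S=R_S^\dagger P$, so $R_S^\dagger P R_S=R_S^{\dagger2}P=e^{i\eta A}P=\cos\eta\,P+i\sin\eta\,AP$.
\end{itemize}
This identifies $Q=iAP$. One checks that $Q$ is Hermitian: $(iAP)^\dagger=-iPA=iAP$ by anticommutation. Combining the two steps, and noting that the flip factor multiplies the whole rotated operator (whichever order one composes, since $Q$ also anticommutes with $A$), we get $r'_P=r_P$ in the commuting case and $r'_P=d(\cos\eta\,r_P+\sin\eta\,r_Q)$ in the anticommuting case.

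\textbf{Derivatives.} In the commuting case both derivatives vanish, because neither parameter appears. In the anticommuting case, $\partial_p d=-2$ gives the stated $\partial_p r'_P$, and differentiating the trigonometric factor gives $\partial_\eta r'_P$. These are routine.

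\textbf{Consequence for the ordered product.} Write $N_{\etav,p}$ as the ordered composition of single-generator channels. Each parameter $\theta_k$ appears in exactly one factor, so the Leibniz rule says $\partial_{\theta_k}N$ is the same product with that one factor replaced by its derivative superoperator. In Pauli--Liouville form this means inserting the matrix whose entries are given above into the product of transfer matrices.

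\textbf{Main obstacle.} The only delicate point is the sign and Hermiticity of $Q$ together with the composition order of flip and rotation. Since the flip acts diagonally and $Q$ anticommutes with $A$, the factor $d$ commutes past the rotation, so both orders give the same formula. I will check this explicitly before concluding.
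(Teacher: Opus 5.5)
Your proposal is correct and follows essentially the same route as the paper. The paper also obtains the multiplier $1$ or $d$ from $APA=\pm P$ in the Heisenberg picture, conjugates to get $R_S^\dagger P R_S=\cos\eta\,P+\sin\eta\,Q$, composes and differentiates, and then inserts $\partial M_a$ at position $a$ of the ordered product. Your extra checks are correct details the paper leaves implicit: that $Q$ is Hermitian, and that the order of flip and rotation does not matter because $Q$ also anticommutes with $A$.
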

\begin{proof}
$\mathrm{Tr}[P\,\mathrm{Flip}_{S,p}(\rho)]=(1-p)r_P+p\,\mathrm{Tr}[APA\rho]$, and $APA=P$ if $[A,P]=0$,
$APA=-P$ if $\{A,P\}=0$, giving the multiplier $1$ or $d$. With $U=e^{-i\eta A/2}$ and $\{A,P\}=0$,
$U^\dagger PU=\cos\eta\,P+\sin\eta\,iAP=\cos\eta\,P+\sin\eta\,Q$. Composing rotation then flip and
differentiating gives the displayed formulas; differentiating the matrix product inserts $\partial M_a$ at
position $a$.
\end{proof}

\begin{lemma}[Closed $Z$-expectation]\label{lem:Eclosed}
For $Z$-diagonal $\rho_0$ and every $T$, $E_T=z_T\,D_T(p)\,C_T(\etav)$ with
\[
C_T(\etav)=\!\!\sum_{\substack{L\subseteq A_T\\ \bigoplus_{a\in L}S_a=\emptyset}}\!\!(-1)^{|L|/2}
\prod_{a\in L}\sin\eta_a\!\!\prod_{a\in A_T\setminus L}\!\!\cos\eta_a,
\]
where $A_T=\{a:b_{T,a}=1\}$,
and the derivative columns are $\partial_{p_a}E_T=-2b_{T,a}z_TC_TD_T/d_a$ and
$\partial_{\eta_a}E_T=z_TD_T\,\partial_{\eta_a}C_T$. The sum is real because any zero-xor $L$ has even size.
\end{lemma}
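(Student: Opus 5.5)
The statement to prove is Lemma~\ref{lem:Eclosed}. My plan is to work in the Heisenberg picture: propagate the measured observable $Z_T$ backward through the ordered product $N_{\etav,p}$ using the Pauli--Liouville rules of Lemma~\ref{lem:pl}, then evaluate the resulting Pauli expansion on $\rho_0$.

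The first step is bookkeeping. Every Pauli that appears is $Z_T$ multiplied by some $X$-type string $X_{\oplus_{a\in L}S_a}$, up to a phase. Such a Pauli commutes or anticommutes with $X_{S_b}$ according to its $Z$-part only, and that $Z$-part is still $Z_T$. Hence the commutation pattern with every later generator is fixed by the incidence $b_{T,b}$, whatever branches were taken before.

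The second step is the expansion itself. Generators with $b_{T,a}=0$ act trivially, so only the set $A_T$ contributes. Each $a\in A_T$ contributes a factor $d_a$ from its flip, which collects into $D_T(p)$. It also splits the current Pauli into a $\cos\eta_a$ branch, which leaves the Pauli unchanged, and a $\sin\eta_a$ branch, which left-multiplies by $iX_{S_a}$. Expanding gives a sum over subsets $L\subseteq A_T$ of $\prod_{L}\sin\eta_a\prod_{A_T\setminus L}\cos\eta_a$ times the Pauli $\bigl(\prod_{a\in L} iX_{S_a}\bigr)Z_T$, where the product is taken in channel order.

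The third step evaluates these terms on $\rho_0$. Because $\rho_0$ is $Z$-diagonal, every Pauli with a nontrivial $X$-part has zero expectation. This forces $\bigoplus_{a\in L}S_a=\emptyset$, and the surviving terms have expectation equal to (scalar)$\times z_T$.

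The fourth step, which I expect to be the main obstacle, is pinning down that scalar, i.e. the sign. The $X_{S_a}$ all commute with one another, and their product equals the identity when the xor vanishes. So the prefactor is $i^{|L|}$ times the reordering signs from moving the $X$'s past one another, and those signs are trivial because $X$-type strings commute. Realness then forces $|L|$ even, and the coefficient is $(-1)^{|L|/2}$. I would verify this carefully against the $Q=iAP$ convention, because anticommutation with $Z_T$ only enters through the order of $X$ and $Z_T$, and the factor $Z_T$ stays on the right throughout. The evenness claim also has an independent check: each $S_a$ with $a\in L$ has odd overlap with $T$, so zero xor implies $\sum_{a\in L}|T\cap S_a|\equiv |T\cap\emptyset|=0$ mod $2$. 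Hence $|L|$ is even, which settles the "real" remark without appeal to realness.

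Finally, the derivative columns follow immediately. $D_T$ contains $d_a$ exactly when $b_{T,a}=1$, and $\partial_{p_a}d_a=-2$, which gives $-2b_{T,a}z_TC_TD_T/d_a$. The $\eta_a$ derivative acts only on $C_T$.
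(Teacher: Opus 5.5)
Your proposal is correct and follows essentially the paper's argument. Both work in the Heisenberg picture and factor $D_T$ out of the flips. The paper obtains $U^\dagger Z_T U=\bigl(\prod_{a\in A_T}(\cos\eta_a I+i\sin\eta_a X_{S_a})\bigr)Z_T$ in one step, using that the $X_{S_a}$ commute and $Z_T X_{S_a} Z_T=(-1)^{b_{T,a}}X_{S_a}$, where you expand generator by generator. Both then drop every term with nonzero $X$-part on the diagonal input, leaving the coefficient $i^{|L|}=(-1)^{|L|/2}$.

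Your parity argument is a genuine addition: each $a\in L\subseteq A_T$ has $|T\cap S_a|$ odd, so a zero xor forces $|L|$ even. The paper asserts this evenness without proof, and the argument really needs $L\subseteq A_T$.
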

\begin{proof}
All $X_S$ commute; the flips multiply the Heisenberg $Z_T$ by $d_a$ exactly for $b_{T,a}=1$, giving $D_T$.
With $U=\prod_a e^{-i\eta_a X_{S_a}/2}$ and $Z_TX_{S_a}Z_T=(-1)^{b_{T,a}}X_{S_a}$,
$U^\dagger Z_TU=\big(\prod_{a\in A_T}(\cos\eta_a\,I+i\sin\eta_a X_{S_a})\big)Z_T$; since $\rho_0$ is diagonal,
$\mathrm{Tr}[X_RZ_T\rho_0]=0$ unless $R=\emptyset$, so only the zero-xor terms survive, with coefficient
$i^{|L|}=(-1)^{|L|/2}$.
\end{proof}

\begin{lemma}[Rank formula and exchange basis]\label{lem:rank}
Let $R=\mathrm{diag}(-2/d_a)$ and $H_{T,a}=\partial_{\eta_a}\log C_T$ on $\Omega$. After scaling each retained
row of $\JZ$ by $(z_TD_TC_T)^{-1}$, $\JZ$ has the rank and kernel of $[\,H\mid B_\Omega R\,]$, so
\[
\mathrm{rank}\,\JZ=\mathrm{rank}\,B_\Omega+\mathrm{rank}\big((I-P_B)H\big).
\]
If $B_\Omega$ has full column rank $m=|G|$ then $\rH=m-\mathrm{rank}((I-P_B)H)$, and the kernel is the
exchange space $U=\ker((I-P_B)H)$, $c(u)=(B_\Omega^\top B_\Omega)^{-1}B_\Omega^\top Hu$,
$v(u)=(u,\tfrac12(d_1c_1(u),\dots,d_mc_m(u)))$.
\end{lemma}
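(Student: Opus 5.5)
The plan is to start from the closed form of \Cref{lem:Eclosed}, $E_T=z_T D_T(p)C_T(\etav)$, and differentiate it directly. On the retained rows $\Omega$ the factor $z_TD_TC_T$ is nonzero, so we divide each retained row of $\JZ$ by it; rows outside $\Omega$ are identically zero in both blocks (if $z_T=0$ or $D_T=0$ every derivative vanishes, and for $C_T=0$ at a generic reference we exclude the row by definition of $\Omega$, or use genericity to argue it carries no extra rank), so they can be dropped. After the row scaling, the stochastic column $a$ becomes $\partial_{p_a}\log E_T=-2b_{T,a}/d_a$, i.e.\ $(B_\Omega R)_{T,a}$ with $R=\mathrm{diag}(-2/d_a)$. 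The coherent column becomes $\partial_{\eta_a}\log C_T=H_{T,a}$. Invertible row scaling preserves rank and kernel, so $\JZ$ has the rank and kernel of $[\,H\mid B_\Omega R\,]$.

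For the rank formula, note that $R$ is invertible (since $d_a\neq0$ for rates $p_a<1/2$), so $\mathrm{col}(B_\Omega R)=\mathrm{col}(B_\Omega)$. Decompose $H=P_BH+(I-P_B)H$. The column space of $[\,H\mid B_\Omega R\,]$ equals $\mathrm{col}(B_\Omega)\oplus\mathrm{col}((I-P_B)H)$, an orthogonal direct sum, which gives $\mathrm{rank}\,B_\Omega+\mathrm{rank}((I-P_B)H)$.

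For the kernel with $B_\Omega$ of full column rank $m$, take $(u,w)$ with $Hu+B_\Omega Rw=0$. Projecting with $I-P_B$ forces $(I-P_B)Hu=0$, so $u\in U$. Then $P_BHu=B_\Omega c(u)$ with $c(u)=(B_\Omega^\top B_\Omega)^{-1}B_\Omega^\top Hu$, and injectivity of $B_\Omega$ gives $Rw=-c(u)$, hence $w_a=\tfrac12 d_a c_a(u)$. This yields the exchange vector $v(u)$, with $\dim\KZ=\dim U=m-\mathrm{rank}((I-P_B)H)$. This equals $\rH=2m-\mathrm{rank}\,\JZ$ once $\mathrm{rank}\,B_\Omega=m$ is substituted.

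The main subtlety is the row scaling. Dividing by $C_T$ needs $C_T\neq0$, and the identification $H_{T,a}=\partial_{\eta_a}C_T/C_T$ needs the stochastic derivative $-2b_{T,a}z_TC_TD_T/d_a$ to scale cleanly to $-2b_{T,a}/d_a$. I would handle both by restricting to $\Omega$ and invoking genericity of the small-angle reference.
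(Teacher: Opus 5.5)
Your proposal is correct and follows the paper's proof step for step. You row-scale $\JZ$ on $\Omega$ using the closed form of \Cref{lem:Eclosed} to get $[\,H\mid B_\Omega R\,]$, use invertibility of $R$ to split the column space as $\mathrm{col}(B_\Omega)\oplus\mathrm{col}((I-P_B)H)$, and solve $Hu+B_\Omega R\,\delta p=0$ with the projector and injectivity of $B_\Omega$ to obtain $\delta p_a=\tfrac12 d_ac_a(u)$. Your explicit handling of the discarded rows is, if anything, more careful than the paper's: at small angles $C_T\approx 1$ and $d_a\neq0$, so the only rows dropped are those with $z_T=0$, and these vanish identically.
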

\begin{proof}
The scaled rows are exactly $[\,H\mid B_\Omega R\,]$ by \Cref{lem:Eclosed}. The stochastic block spans
$\mathrm{col}(B_\Omega)$ ($R$ invertible), so adding $H$ raises rank only by its part outside that span:
$\mathrm{rank}((I-P_B)H)$. A vector $(u,\delta p)$ is in the kernel iff $Hu+B_\Omega R\,\delta p=0$, solvable
iff $Hu\in\mathrm{col}(B_\Omega)$, i.e.\ $(I-P_B)Hu=0$; then $B_\Omega c(u)=Hu$ uniquely and
$R\,\delta p=-c(u)$ gives $\delta p_a=\tfrac12 d_ac_a(u)$. Full column rank of $B_\Omega$ for distinct
nonempty masks with all $Z$ rows retained follows from linear independence of the Walsh characters.
\end{proof}

\begin{lemma}[The $n=3$, $K=2$ defect]\label{lem:n3}
For $K=1$ (and for $n=2,K=2$) every nonempty $A_T$ has no zero-xor subset, so $C_T=\prod_{a\in A_T}\cos\eta_a$,
$H=B\,\mathrm{diag}(-\tan\eta_a)\in\mathrm{col}(B)$, hence $\mathrm{rank}\,\JZ=|G|$ and $\rH=|G|$. For
$n=3,K=2$ the three weight-$2$ rows each carry one four-generator zero-xor loop, giving $H=B\,\mathrm{diag}
(-\tan\eta_a)+\Delta$ with $\Delta$ supported on rows $\{12,13,23\}$; the left null vector
$\ell=(1,1,1,-1,-1,-1,1)^\top$ of $B^\top$ turns ``$\Delta u\in\mathrm{col}(B)$'' into a single nonzero
linear constraint on the six coherent coordinates, so $\mathrm{rank}((I-P_B)H)=1$, $\mathrm{rank}\,\JZ=7$,
$\rH=5$.
\end{lemma}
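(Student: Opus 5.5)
The plan is to derive both claims from \Cref{lem:Eclosed} and \Cref{lem:rank}. Once we know which rows $T$ admit a nonempty zero-xor subset $L\subseteq A_T$, the lemma follows from two things: the explicit form of $C_T$ (hence of $H_{T,a}=\partial_{\eta_a}\log C_T$), and the left null space of $B$. Throughout I take a generic small reference, with all $\eta_a\neq0$ small and all $d_a\neq0$. Then every row is retained, so $B_\Omega=B$, and $B$ has full column rank $|G|$ by \Cref{lem:rank}.

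\emph{Loop-free cases.} For $K=1$ the masks are distinct singletons and hence linearly independent over $\mathbb{F}_2$, so no nonempty $L$ has zero xor. For $n=2$, $K=2$ with $G=\{1,2,12\}$ I list the sets directly: $A_{\{1\}}=\{1,12\}$, $A_{\{2\}}=\{2,12\}$, $A_{\{12\}}=\{1,2\}$. Each has size at most $2$ and consists of distinct nonzero masks, so no even-size zero-xor subset exists. In both cases \Cref{lem:Eclosed} therefore gives $C_T=\prod_{a\in A_T}\cos\eta_a$. Consequently $H_{T,a}=-b_{T,a}\tan\eta_a$, i.e.\ $H=B\,\mathrm{diag}(-\tan\eta_a)$, which lies in $\mathrm{col}(B)$. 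Hence $(I-P_B)H=0$, and the rank formula gives $\mathrm{rank}\,\JZ=\mathrm{rank}\,B=|G|$ and $\rH=2|G|-|G|=|G|$.

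\emph{The case $n=3$, $K=2$.} Order the generators as $1,2,3,12,13,23$ and the rows as $T=1,2,3,12,13,23,123$. I enumerate $A_T$ row by row.
\begin{itemize}
\item Singleton rows: for example $A_{\{1\}}=\{1,12,13\}$. Its pairs are distinct, so the set is loop-free.
\item The row $123$: $A_{\{123\}}=\{1,2,3\}$, which has no zero-xor pair.
\item Weight-2 rows: for example $A_{\{12\}}=\{1,2,13,23\}$, and here $\{1\}\oplus\{2\}\oplus\{1,3\}\oplus\{2,3\}=\emptyset$. This is exactly one four-generator loop, with sign $(-1)^2=+1$.
\end{itemize}
So on these three rows
\[
C_T=\prod_{a\in A_T}\cos\eta_a+\prod_{a\in A_T}\sin\eta_a .
\]
It follows that $H=B\,\mathrm{diag}(-\tan\eta_a)+\Delta$, where
\[
\Delta_{T,a}=\partial_{\eta_a}\log\Bigl(1+\prod_{b\in A_T}\tan\eta_b\Bigr)
\]
for $T\in\{12,13,23\}$ and $a\in A_T$, and $\Delta$ is zero elsewhere. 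The first term lies in $\mathrm{col}(B)$, so $(I-P_B)H=(I-P_B)\Delta$. Since $B$ is $7\times6$ of rank $6$, its left null space is one-dimensional. I check directly that $\ell=(1,1,1,-1,-1,-1,1)^\top$ satisfies $\ell^\top B=0$: each generator's column has odd incidence on exactly the rows in its list, and the signs cancel. Therefore $\Delta u\in\mathrm{col}(B)$ if and only if $\ell^\top\Delta u=0$, which gives $\mathrm{rank}((I-P_B)\Delta)=1$ provided the row vector $\ell^\top\Delta$ is nonzero.

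\emph{Main obstacle.} The remaining step is to show $\ell^\top\Delta\neq0$ at a generic reference. I would exhibit one coordinate. Take generator $3$: it lies in $A_{\{13\}}$ and $A_{\{23\}}$ but not in $A_{\{12\}}$. Its entry is therefore
\[
-\sec^2\eta_3\Bigl(\frac{\tan\eta_1\tan\eta_{12}\tan\eta_{23}}{1+t_{13}}+\frac{\tan\eta_2\tan\eta_{12}\tan\eta_{13}}{1+t_{23}}\Bigr),
\]
where $t_T=\prod_{b\in A_T}\tan\eta_b$. This is a nonzero analytic function, so it is nonvanishing off a proper analytic set. It vanishes identically at $\eta=0$, consistent with $\rH=6$ there. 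Hence $\mathrm{rank}\,\JZ=6+1=7$ and $\rH=12-7=5$. The step I expect to need most care is the sign bookkeeping in $\ell^\top B=0$ and in the cancellation above. If the hand computation gets messy, I would fall back on evaluating $\ell^\top\Delta$ at one explicit angle tuple and invoke analyticity.
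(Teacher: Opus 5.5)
Your proposal is correct and follows essentially the same route as the paper: you enumerate the zero-xor loops through \Cref{lem:Eclosed}, write $H=B\,\mathrm{diag}(-\tan\eta_a)+\Delta$ with $\Delta$ supported on rows $\{12,13,23\}$, and reduce $(I-P_B)H$ to the one-dimensional left null space spanned by $\ell$. Your explicit generator-$3$ entry of $\ell^\top\Delta$, which is strictly negative at, e.g., equal small positive angles, is a useful concrete check of the ``single nonzero linear constraint'' that the paper only asserts through the loop coefficients $\gamma_{ij}=\tau_{ij}/(1+\tau_{ij})$.
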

The explicit constraint (with $\tau_{ij}$ the four-generator tangent products and
$\kappa_a=1/(\sin\eta_a\cos\eta_a)$) is given by the loop coefficients $\gamma_{ij}=\tau_{ij}/(1+\tau_{ij})$;
the five-dimensional kernel is the coherent vectors $u$ satisfying it, with stochastic partners
$\delta p=\tfrac12 d\odot c(u)$. At $\eta=0$ all $\Delta$ terms vanish and $\rH=6$.

\paragraph{Worked example ($n=2$, $K=2$).} The generators are $X_1,X_2,X_1X_2$, so the parameter
$\theta\in\mathbb{R}^6$ collects $(\eta_1,\eta_2,\eta_{12},p_1,p_2,p_{12})$. On $|00\rangle$ the three nontrivial
$Z$-expectations are, by \Cref{lem:Eclosed},
\begin{gather*}
E_1=d_1d_{12}\cos\eta_1\cos\eta_{12},\qquad
E_2=d_2d_{12}\cos\eta_2\cos\eta_{12},\\
E_{12}=d_1d_2\cos\eta_1\cos\eta_2,
\end{gather*}
with $d_a=1-2p_a$. At $\eta=0$ every $\partial_{\eta_a}E_T=0$, so the coherent block of $\JZ$ vanishes:
$\mathrm{rank}\,\JZ=3$ from the stochastic columns alone, $\rH=|G|=3$, and the kernel is the pure coherent
space (each angle is Fisher-null, with no stochastic partner). At a generic small angle the kernel becomes the
exchange space $\{(u,\tfrac12 d\odot c(u))\}$ with $c(u)=-\tan(\eta)\odot u$: each coherent angle trades with
its own flip rate, so perturbing $\eta_a$ by $u_a$ and $p_a$ by $-\tfrac12 d_a\tan(\eta_a)u_a$ leaves all three
histograms unchanged. The minimal code adds $\lceil\log_2 3\rceil=2$ settings (sensitive to $Y_1$ and $Y_2$),
supplying the missing coherent quadratures and lifting $\rH$ to $0$.

\begin{lemma}[$\ker\FZ=\ker\JZ$]\label{lem:kerF}
With $p_x>0$ on the support, $v\in\ker\FZ\iff J_pv=0\iff\JZ v=0$.
\end{lemma}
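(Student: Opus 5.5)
The plan is to verify the two equivalences in turn. Each rests only on the setup in \Cref{app:setup}: the Fisher matrix $\FZ=J_p^\top\mathrm{diag}(1/p_x)J_p$, strictly positive outcome probabilities $p_x>0$ on the support, and the invertible Walsh--Hadamard relation $E=Wp$.

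\emph{First equivalence, $v\in\ker\FZ\iff J_pv=0$.} Since $\FZ$ is symmetric and positive semidefinite, $\FZ v=0$ holds if and only if $v^\top\FZ v=0$. One direction is trivial. For the other, write $\FZ=M^\top M$ with $M=\mathrm{diag}(1/\sqrt{p_x})J_p$; then $v^\top\FZ v=\|Mv\|_2^2=0$ gives $Mv=0$, and hence $\FZ v=M^\top Mv=0$. Evaluating the quadratic form,
\[
v^\top\FZ v=\sum_x \frac{(J_pv)_x^2}{p_x},
\]
a sum of nonnegative terms. Because every $p_x>0$, it vanishes exactly when each $(J_pv)_x=0$, that is, when $J_pv=0$. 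This is the only place positivity of the support is used.

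\emph{Second equivalence, $J_pv=0\iff\JZ v=0$.} Differentiating $E(\theta)=W\,p(\theta)$ with $W$ independent of $\theta$ gives $\JZ=W J_p$. Since $W$ is invertible, $\JZ v=0$ if and only if $J_pv=0$. Chaining the two equivalences gives $\ker\FZ=\ker J_p=\ker\JZ$.

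One bookkeeping point needs care. $\JZ$ includes the trivial row $T=\emptyset$, where $E_\emptyset=1$ is constant and so contributes a zero row, while $p$ lies on the simplex. I would check that $W$ is the full $2^n\times2^n$ Walsh matrix acting on the full probability vector, so that invertibility applies as stated. There is no real obstacle here; the only substantive hypothesis is $p_x>0$. If some outcome had zero probability the quadratic-form argument would break down, which is why the lemma is stated under positive support. At the canonical reference $|0^n\rangle$ with $0<p_a<1/2$ this holds automatically, because every flip pattern occurs with positive probability.
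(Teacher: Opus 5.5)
Your proof is correct and follows the paper's argument: the quadratic form $v^\top\FZ v=\|\mathrm{diag}(1/\sqrt{p_x})J_pv\|_2^2$ vanishes exactly when $J_pv=0$ under $p_x>0$, and $\JZ=WJ_p$ with $W$ invertible carries that kernel over to $\JZ$. You also spell out the step $\FZ v=0\iff v^\top\FZ v=0$ for positive semidefinite $\FZ$, which the paper leaves implicit.
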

\begin{proof}
$v^\top\FZ v=\|\mathrm{diag}(1/\sqrt{p_x})J_pv\|_2^2$ vanishes iff $J_pv=0$; $\JZ=WJ_p$ with $W$ invertible.
\end{proof}

\Cref{thm:kernel} is \Cref{lem:rank} with \Cref{lem:n3,lem:kerF}. Numerically, the closed-form Jacobian,
rank, nullity, and labelled kernel basis match the exact density-matrix computation to $\sim10^{-10}$ for
$n=1,2,3$ and $n=4$ ($K\le2$); the minimum kernel coherent weight is $\ge0.99$ at every canonical
($Z$-diagonal) reference, confirming the kernel is coherent-dominated (an exchange direction), while a
non-$Z$-diagonal control (a local $SH$ giving $\langle Y\rangle\ne0$) drops it to $0.55$, the scope boundary
where the closed form requires the full Pauli-transfer matrix. At $n=3,K=2$ the rank drops from $\rH=6$ at
$\eta=0$ to $\rH=5$ at a generic angle as the single loop combination lifts.

\paragraph{Scope.} The exchange-basis formula assumes $B_\Omega$ has full column rank, automatic for the
canonical $Z$-diagonal state ($z_T=1$). When $B_\Omega$ is rank-deficient the closed form generalizes with
the Moore--Penrose pseudoinverse: the kernel is
$\{(u,\delta p):(I-P_B)Hu=0,\ R\,\delta p=-B_\Omega^{+}Hu+z,\ z\in\ker B_\Omega\}$, the exchange space
together with the purely stochastic null directions $R\,\delta p\in\ker B_\Omega$, so the exchange
basis above is one component of the full kernel there. A maximally mixed diagonal state ($z_T=0$,
$T\ne\emptyset$) is such a case: too few rows are retained. For a general real non-$Z$-diagonal state
\Cref{lem:pl} still gives the exact
Jacobian, but the kernel must be read from the full Pauli--Liouville transfer matrix (\Cref{app:separation}).

% ======================================================================
\section{Proof of \texorpdfstring{\Cref{thm:nonest}}{Theorem 2}: non-estimability}\label{app:nonest}

The kernel of the classical Fisher information is an obstruction intrinsic to the data map: a functional whose
gradient touches the kernel has an infinite singular Cram\'er--Rao bound.

Let $p_\theta(x)$ be the single-setting $Z$ histogram, $\FZ$ its Fisher matrix, $\phi$ a differentiable
scalar functional, $g=\nabla\phi(\theta_0)$. We assume the standard score regularity: $p_\theta$ is positive
on its (finite, $2^n$-point) support and $C^1$ in $\theta$ near $\theta_0$, so the score
$s=\nabla_\theta\log p_\theta$ exists with covariance $\FZ$, and differentiation under the outcome sum is
immediate because the support is finite. The correct singular Cram\'er--Rao statement is the extended one:
\begin{equation}
N\,\mathrm{Var}_{\theta_0}(\hat\phi)\ \ge\ C_F(g)=\begin{cases} g^\top\FZ^+g, & g\in\mathrm{Range}(\FZ),\\ +\infty, & P_{\ker\FZ}g\ne0.\end{cases}
\end{equation}

\begin{lemma}[Kernel obstruction]\label{lem:score}
If $\hat\phi$ is locally unbiased at $\theta_0$ with finite variance, then $g\in\mathrm{Range}(\FZ)$.
\end{lemma}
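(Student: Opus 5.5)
The plan is to show that every kernel direction $a$ of $\FZ$ is annihilated by the score almost surely. Local unbiasedness then identifies $g$ with a covariance against the score, which forces $a^\top g=0$. Since $\FZ$ is symmetric positive semidefinite, $\mathrm{Range}(\FZ)=(\ker\FZ)^\perp$. It therefore suffices to prove $a^\top g=0$ for every $a\in\ker\FZ=\KZ$.

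\emph{Step 1: the score vanishes along kernel directions.} Under the standing regularity, $p_\theta$ is positive and $C^1$ on a finite support. Hence $\mathbb{E}_{\theta_0}[s]=\sum_x\nabla p_\theta(x)=\nabla\sum_x p_\theta(x)=0$ and $\mathrm{Cov}(s)=\FZ$. For $a\in\ker\FZ$ we get $\mathrm{Var}(a^\top s)=a^\top\FZ a=0$ and $\mathbb{E}[a^\top s]=0$, so $a^\top s(x)=0$ for every $x$ with $p_{\theta_0}(x)>0$. Since the support is finite and positive, this holds for every outcome. Equivalently, by \Cref{lem:kerF}, $a^\top\nabla p_\theta(x)=0$ pointwise, i.e.\ $J_pa=0$.

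\emph{Step 2: unbiasedness gives $g$ as a covariance.} I take local unbiasedness to mean $\mathbb{E}_\theta[\hat\phi]=\phi(\theta)$ for $\theta$ in a neighborhood of $\theta_0$, or at least to first order at $\theta_0$. Here $\hat\phi$ is a function of the $N$ i.i.d.\ outcomes, so its score is $\sum_{j=1}^N s(x_j)$. Differentiating $\sum_{\mathbf x}\hat\phi(\mathbf x)\prod_j p_\theta(x_j)=\phi(\theta)$ at $\theta_0$ is legitimate because the sum is finite. It yields $g=\mathbb{E}[(\hat\phi-\phi)\sum_j s(x_j)]$, where subtracting $\phi$ costs nothing because the score has mean zero. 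Finite variance of $\hat\phi$, together with the boundedness of $s$ on the finite support, makes the expectation finite. Then $a^\top g=\mathbb{E}[(\hat\phi-\phi)\sum_j a^\top s(x_j)]=0$ by Step 1.

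\emph{Step 3: conclude.} Since $a^\top g=0$ for all $a\in\ker\FZ$, we have $g\in(\ker\FZ)^\perp=\mathrm{Range}(\FZ)$. The main subtlety is Step 2, namely fixing precisely what "locally unbiased" means so that differentiation under the expectation is justified. The finite outcome space makes this routine, but for an $N$-sample estimator the sample space is $2^{nN}$-point, still finite, so I would state that explicitly. I would also note that the argument only uses first-order unbiasedness, which matches the "first-order estimator" wording of \Cref{thm:nonest}.
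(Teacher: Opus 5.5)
Your proposal is correct and follows essentially the paper's own argument. Kernel directions of $F_Z$ annihilate the score almost surely (zero mean, zero variance), and local unbiasedness writes $g$ as the covariance of $\hat\phi$ with the score, so $a^\top g=0$ and hence $g\in(\ker F_Z)^\perp=\mathrm{Range}(F_Z)$. Your extra care with the $N$-sample product score and with differentiating under the finite outcome sum only makes explicit what the paper's single-sample proof leaves implicit.
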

\begin{proof}
Let $s(X)=\nabla_\theta\log p_\theta(X)|_{\theta_0}$, with covariance $\FZ$ and $g=\mathbb{E}_{\theta_0}
[(\hat\phi-\phi)s(X)]$. For $a\in\ker\FZ$, $0=a^\top\FZ a=\mathrm{Var}(a^\top s)$ and $\mathbb{E}[a^\top s]=0$,
so $a^\top s=0$ a.s., hence $a^\top g=\mathbb{E}[(\hat\phi-\phi)a^\top s]=0$. Thus $g\perp\ker\FZ=
\mathrm{Range}(\FZ)^\perp{}^\perp$, i.e.\ $g\in\mathrm{Range}(\FZ)$.
\end{proof}

\begin{lemma}[Pseudoinverse CRB on the estimable subspace]\label{lem:pinv}
If $g\in\mathrm{Range}(\FZ)$ and $b=\FZ^+g$, then $\mathrm{Var}(\hat\phi)\ge g^\top\FZ^+g$ for one sample,
hence $N\,\mathrm{Var}(\hat\phi)\ge g^\top\FZ^+g$.
\end{lemma}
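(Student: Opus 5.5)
The plan is the Cauchy--Schwarz argument for the score, restricted to the range of $\FZ$. Take a single sample $X\sim p_{\theta_0}$ and let $s=s(X)$ be the score at $\theta_0$. The score has mean zero and covariance $\FZ$. Local unbiasedness gives the identity already used in \Cref{lem:score}, namely $g=\mathbb{E}_{\theta_0}[(\hat\phi-\phi)\,s]$. Here $\phi=\phi(\theta_0)$, and differentiating under the finite outcome sum is legitimate because the support has $2^n$ points and $p_\theta>0$ is $C^1$ there.

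Fix $b=\FZ^+g$ and consider the scalar random variable $b^\top s$. First I compute its second moment: $\mathrm{Var}(b^\top s)=b^\top\FZ b=g^\top\FZ^+\FZ\FZ^+g=g^\top\FZ^+g$, using the Moore--Penrose identity $\FZ^+\FZ\FZ^+=\FZ^+$. Second, I compute its covariance with the estimator: $\mathrm{Cov}(\hat\phi,b^\top s)=b^\top\mathbb{E}[(\hat\phi-\phi)s]=b^\top g=g^\top\FZ^+g$, where the second equality uses that $\FZ^+$ is symmetric. Cauchy--Schwarz then gives
\[
\bigl(g^\top\FZ^+g\bigr)^2\le\mathrm{Var}(\hat\phi)\,g^\top\FZ^+g .
\]
If $g^\top\FZ^+g>0$, dividing by it yields $\mathrm{Var}(\hat\phi)\ge g^\top\FZ^+g$. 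If instead $g^\top\FZ^+g=0$, then because $g\in\mathrm{Range}(\FZ)$ and $\FZ^+$ is positive definite on that range, we get $g=0$, and the bound holds trivially.

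For $N$ i.i.d.\ samples, the score of the product model is $\sum_i s(X_i)$, whose covariance is $N\FZ$, and local unbiasedness again gives $g=\mathbb{E}[(\hat\phi-\phi)\sum_i s(X_i)]$. Running the same argument with $b=(N\FZ)^+g=N^{-1}\FZ^+g$ gives $\mathrm{Var}(\hat\phi)\ge g^\top\FZ^+g/N$, which is the claimed $N\,\mathrm{Var}(\hat\phi)\ge g^\top\FZ^+g$.

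The only delicate step is where the hypothesis $g\in\mathrm{Range}(\FZ)$ is used. It is what guarantees $\FZ\FZ^+g=g$, so that the covariance computation recovers all of $g$ rather than just its projection onto the range. Without it, the bound would silently drop the $\KZ$ component, which is exactly the failure the extended bound \eqref{eq:extcrb} is designed to flag.

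Finally, I would note that equality is attained, up to Cauchy--Schwarz, by an estimator affine in $b^\top s$. This is the sense in which the pseudoinverse branch of \eqref{eq:extcrb} is the right object on the estimable subspace.
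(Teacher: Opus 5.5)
Your proof is correct and follows the paper's argument: Cauchy--Schwarz between $\hat\phi-\phi$ and $b^\top s$ with $b=\FZ^+g$, using $\mathrm{Var}(b^\top s)=b^\top\FZ b=g^\top\FZ^+g=b^\top g$. Your treatment of the degenerate case and of the $N$-sample product score (with $(N\FZ)^+=N^{-1}\FZ^+$) spells out what the paper's ``hence'' leaves implicit. One correction to your commentary: the displayed computation never uses $\FZ\FZ^+g=g$, so the inequality holds for any $g=\mathbb{E}[(\hat\phi-\phi)s]$, and \Cref{lem:score} forces such a $g$ into the range anyway. The range hypothesis is what your attainability remark needs, since $\phi+b^\top s$ has score-covariance $\FZ b=g$ only when $g\in\mathrm{Range}(\FZ)$.
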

\begin{proof}
$g^\top b=\mathbb{E}[(\hat\phi-\phi)b^\top s]$ and Cauchy--Schwarz give $(g^\top b)^2\le\mathrm{Var}(\hat\phi)
\,\mathrm{Var}(b^\top s)$ with $\mathrm{Var}(b^\top s)=b^\top\FZ b=g^\top\FZ^+g=g^\top b$.
\end{proof}

Combining: if $P_{\KZ}\nabla\phi(\theta_0)\ne0$ then $g\notin\mathrm{Range}(\FZ)$, so by \Cref{lem:score}
no finite-variance locally unbiased estimator exists; this is the $+\infty$ branch, and \Cref{lem:kerF}
identifies $\KZ$ as the exchange kernel of \Cref{app:kernel}. The per-coordinate statement is exact and
provable: at $\eta=0$ the entire coherent block is veiled (\Cref{lem:n3}: $\rH=|G|$), so $e_{\eta_S}\in\KZ$
for every generator $S$, and each coherent angle $\eta_S$ has infinite CRB. The normalized coherent fraction
$\phi_{\mathrm{cf}}=\|\etav\|^2/(\|\etav\|^2+\|p\|^2)$ then inherits non-estimability whenever its gradient
keeps a coherent component, $P_{\KZ}\nabla\phi_{\mathrm{cf}}\ne0$; this projection is $\ge0.014$ at the tested
references (smallest at $n=4,K=2$, where one coherent combination lifts at generic angle), confirming the
corollary. The per-coordinate claim is stronger still, at $\ge0.62$ and exactly $1$ at $\eta=0$. The
ordinary Moore--Penrose number $g^\top\FZ^+g$ is finite because it annihilates $\KZ$, which is why the extended
functional, not the naive pseudoinverse, is the correct object. This $+\infty$ divergence is the statement
for finite-variance, locally unbiased estimators. Without that restriction the bound need not hold.

% ======================================================================
\section{Proof of \texorpdfstring{\Cref{thm:separatrix}}{Theorem 3}: the minimal separating code}\label{app:separatrix}

A small set of additional product-Pauli bases supplies the missing coherent quadratures. For the complete
family the minimal such set
is logarithmic in the qubit number.

A product-Pauli setting $b\in\{X,Y,Z\}^n$ has histogram equivalent by a Walsh transform to all expectations
of Paulis whose non-identity factors agree with $b$, so its Fisher kernel equals the kernel of its
expectation Jacobian, and for a stack of settings the kernels intersect.

\begin{lemma}[A measured $Q_{T,S}$ isolates $\eta_S$ at $\eta=0$]\label{lem:Q}
At $\etav=0$ on the canonical input $|0^n\rangle$ with generic $p$, for $S\in G$ and $|S\cap T|$ odd, the setting that measures $P=iX_SZ_T$ has
$\partial_{\eta_S}\langle P\rangle=\sigma_{T,S}\,\alpha_{T,S}\ne0$, $\partial_{\eta_R}\langle P\rangle=0$
($R\ne S$) and $\partial_{p_R}\langle P\rangle=0$ for the canonical choices $T=\{i\}$ when $S=\{i\}$
($P=Y_i$) and $T=\{i\}$ or $\{j\}$ when $S=\{i,j\}$ ($P=Y_iX_j$ or $X_iY_j$).
\end{lemma}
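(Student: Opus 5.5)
The plan is to work entirely in the Heisenberg picture with the Pauli--Liouville identities of \Cref{app:setup} and \Cref{lem:pl}, writing $\langle P\rangle_\theta=\mathrm{Tr}[N^\dagger_{\etav,p}(P)\,\rho_0]$ with $\rho_0=|0^n\rangle\langle0^n|$.

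\emph{Reduction to one product of commuting factors.} All generators are $X$-type, so every factor in $N_{\etav,p}$ commutes with every other:
\begin{itemize}
\item the rotations $\mathrm{Ad}_{R_S}$ commute among themselves;
\item each flip $\mathrm{Flip}_{R,p_R}$ is a mixture of conjugations by $I$ and $X_R$, which commute with every $R_S$.
\end{itemize}
Hence the composition order in $N$ is irrelevant, and the adjoint channel can be written as
$$N^\dagger(P)=\prod_a\mathrm{Flip}^\dagger_{a}\;\prod_a\mathrm{Ad}^\dagger_{R_a}(P).$$
By \Cref{lem:pl}, the flips act on any Pauli $P'$ by the scalar $D_{P'}(p)=\prod_{a:\{X_{S_a},P'\}=0}d_a$. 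Conjugation by $X_{S_a}$ maps $P'$ to $\pm P'$, so it preserves the anticommutation pattern with every $X_{S_b}$. The flip multiplier is therefore the same scalar $D_P(p)$ for every Pauli term generated by the rotations.

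\emph{Step 1: the stochastic derivatives vanish.} At $\etav=0$ the rotations are the identity, so $\langle P\rangle=D_P(p)\,\mathrm{Tr}[P\rho_0]$. Since $S\neq\emptyset$, the Pauli $P=iX_SZ_T$ carries an $X$ or $Y$ factor on every qubit of $S$. Its expectation on $|0^n\rangle$ is therefore $0$ identically in $p$. Differentiating in $p_R$ at $\etav=0$ gives $\partial_{p_R}\langle P\rangle=0$ for every $R$.

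\emph{Step 2: the coherent derivatives.} Differentiating at $\etav=0$ inserts a single first-order rotation term, with all other rotations equal to the identity. By \Cref{lem:pl}, $\partial_{\eta_R}\langle P\rangle$ vanishes if $[X_R,P]=0$. Otherwise it equals $D_P(p)\,\langle Q_R\rangle_{\rho_0}$ with $Q_R=iX_RP=-X_RX_SZ_T=-X_{R\oplus S}Z_T$, where $R\oplus S$ denotes the symmetric difference of supports. On $|0^n\rangle$ this expectation is nonzero iff $R\oplus S=\emptyset$, i.e.\ iff $R=S$, because the masks in $G$ are distinct. This gives $\partial_{\eta_R}\langle P\rangle=0$ for $R\neq S$.

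For $R=S$ we must check that $X_S$ anticommutes with $P$, which holds because $X_S Z_T X_S=(-1)^{|S\cap T|}Z_T=-Z_T$ under the hypothesis that $|S\cap T|$ is odd. We then get $Q_S=-Z_T$ with $\langle Z_T\rangle_{0^n}=1$. Hence
$$\partial_{\eta_S}\langle P\rangle=\sigma_{T,S}\,\alpha_{T,S},\qquad \alpha_{T,S}=D_P(p),\quad \sigma_{T,S}=\pm1.$$
The magnitude $\alpha_{T,S}$ is a product of factors $d_a=1-2p_a$, nonzero for generic $p$. The sign $\sigma_{T,S}$ is fixed by the conventions for $Q$ and the phase of $P$.

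\emph{Step 3: the canonical choices.} It remains to verify that the canonical $T$ produce the claimed product Paulis:
\begin{itemize}
\item for $S=T=\{i\}$, $P=iX_iZ_i=\pm Y_i$;
\item for $S=\{i,j\}$ and $T=\{i\}$, $P=iX_iX_jZ_i=\pm Y_iX_j$, and symmetrically $T=\{j\}$ gives $\pm X_iY_j$.
\end{itemize}
In both cases $|S\cap T|=1$ is odd, so the hypothesis holds. Each such $P$ is a measurable product-Pauli expectation, up to the overall phase absorbed into $\sigma_{T,S}$.

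I expect the main obstacle to be bookkeeping rather than substance. The argument depends on the commutation reduction that makes the composition order irrelevant, and on the phase convention $Q=iAP$, which determines $\sigma_{T,S}$. Both must be pinned down carefully. The cancellation for $R\neq S$ rests only on distinctness of the supports together with the fact that $\rho_0$ is $Z$-diagonal.
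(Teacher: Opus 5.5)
Your proposal is correct and follows essentially the same route as the paper's proof: Heisenberg-picture insertion of $Q=iX_SP$ via \Cref{lem:pl}, which returns $-Z_T$ for $R=S$ (so $\sigma_{T,S}=-1$ and $\alpha_{T,S}=D_T(p)$ under these conventions). The remaining claims also match the paper: the stochastic derivatives vanish because the non-$Z$ expectation is identically zero at $\etav=0$, and $iX_RP=-X_{R\oplus S}Z_T$ has zero expectation on $|0^n\rangle$ for $R\ne S$, which needs only distinctness of the masks rather than the paper's $K\le2$ restriction.
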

\begin{proof}
By \Cref{lem:pl}, differentiating the $\eta_S$ rotation at $0$ on a $P$ anticommuting with $X_S$ replaces
$P$ by $iX_SP$; with $P=iX_SZ_T$ this returns $\pm Z_T$, whose expectation is $z_T$ times the surviving
$d_a$ factors. Stochastic derivatives only rescale the current (zero) non-$Z$ expectation, so vanish; for
$K\le2$, multiplying $P$ by $iX_R$ yields a $Z$-type Pauli only when $R=S$.
\end{proof}

\paragraph{Binary-code construction.} Let $L=\lceil\log_2(n+1)\rceil$, assign qubit $i$ the $L$-bit codeword
of integer $i$ (distinct and nonzero for $1\le i\le n\le2^L-1$), and for each bit $\ell$ set
$b^{(\ell)}_i=Y$ if $\mathrm{code}_i[\ell]=1$ else
$X$. Then $\Bstar=\{b^{(1)},\dots,b^{(L)}\}$. Every singleton $\{i\}$ has a setting with $b_i=Y$ (measuring
$Y_i$); every pair $\{i,j\}$ has a bit separating $i,j$, giving a setting with $\{b_i,b_j\}=\{X,Y\}$
(measuring $Y_iX_j$ or $X_iY_j$). So every generator has a measured canonical $Q_S$.

\begin{proof}[Proof of full rank]
At $\etav=0$ the $Z^n$ Jacobian has full stochastic rank $|G|$ and zero coherent block. By \Cref{lem:Q} and
coverage, the measured $Q_S$ rows give a diagonal coherent block $\mathrm{diag}(\alpha_S)$, $\alpha_S\ne0$,
with zero stochastic block, so a $2|G|\times2|G|$ minor of the augmented Jacobian is block-triangular with
nonzero determinant. Full rank persists in an open neighborhood by continuity. Checked at the exact-computation
references for $n\le4$, $K\le2$, e.g.\ $\Bstar=\{YXY,XYY\}$ gives $\rH=0$ and is well-conditioned
($\sigmin\approx0.72$ at $n=3,K=2$, the canonical reference of \Cref{fig:cond}; $0.76$ at the alternative
generic reference of \Cref{prop:lb}) while the all-$X$ control leaves $\rH=5$.
\end{proof}

\begin{proposition}[Matching lower bound]\label{prop:lb}
$\ker F_{Z\cup B}=\KZ\cap\ker J_B$, hence $\rH(B)=\dim(\KZ\cap\ker J_B)=\dim\KZ-\mathrm{rank}(J_B|_{\KZ})$, and
$B$ completes the veil iff $J_B$ is injective on $\KZ$.
\end{proposition}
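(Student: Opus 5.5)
The plan is to treat the proposition as a statement about stacked data maps, reduced to linear algebra through Lemma~\ref{lem:kerF}. First, the augmented design $Z\cup B$ returns independent histograms, one per setting. Under independent sampling the Fisher information is additive, so $F_{Z\cup B}=\FZ+F_B$, with both terms positive semidefinite.

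For positive semidefinite matrices, $v^\top(\FZ+F_B)v=0$ holds iff $v^\top\FZ v=0$ and $v^\top F_B v=0$. This gives $\ker F_{Z\cup B}=\ker\FZ\cap\ker F_B$.

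Next I would apply Lemma~\ref{lem:kerF} setting by setting. Each product-Pauli setting's histogram is Walsh-equivalent to its expectation vector, by the remark opening this appendix. Assuming positive outcome probabilities on the support, this identifies $\ker F_B$ with $\ker J_B$, where $J_B$ stacks the expectation Jacobians of the settings in $B$. Likewise $\ker\FZ=\KZ$. This yields the first claim, $\ker F_{Z\cup B}=\KZ\cap\ker J_B$. If positivity fails for some outcome, I would restrict to the support as in the setup.

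The dimension identity is rank--nullity applied to the restriction $J_B|_{\KZ}\colon\KZ\to\mathbb{R}^{m}$. Its kernel is exactly $\KZ\cap\ker J_B$, so
\[
\dim(\KZ\cap\ker J_B)=\dim\KZ-\mathrm{rank}(J_B|_{\KZ}).
\]
The residual defect of the augmented design is $\rH(B)=2|G|-\mathrm{rank}\,J_{Z\cup B}$, which equals $\dim\ker J_{Z\cup B}$ because the stacked Jacobian has $2|G|$ columns. Combining this with the kernel identity proves the middle equality. The final claim follows at once: $\rH(B)=0$ iff $\KZ\cap\ker J_B=\{0\}$, which is injectivity of $J_B$ on $\KZ$.

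The only step needing care is the additivity and kernel identification, in particular how the shot allocation across settings enters. Unequal allocations $N_b$ only rescale each $F_b$ by a positive factor, so kernels are unchanged, and I would note this explicitly. Everything else is routine.
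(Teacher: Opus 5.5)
Your proof is correct and matches the paper's own justification, which is the remark just before the proposition: each product-Pauli setting's histogram is Walsh-equivalent to its expectation vector, so its Fisher kernel equals the kernel of its expectation Jacobian, and stacking independent settings intersects the kernels. Your additions simply make explicit what the paper leaves implicit, namely the positive-semidefinite sum argument for $\ker(F_Z+F_B)=\ker F_Z\cap\ker F_B$, the rank--nullity step for the restricted map $J_B|_{\KZ}$, and the observation that unequal shot allocations only rescale each setting's Fisher matrix.
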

This is the exact two-sided criterion; the $Q_{T,S}$ coverage rule is the $\eta\to0$ leading-order
\emph{sufficient} construction, not a generic-angle necessity. At a generic nonzero reference,
higher-order coherent mixing can make $J_B$ injective even without the leading-order quadratures: for
$n=3,K=2$, $Z^3{+}YYY$ has $\rH=3$ at $\eta=0$ but $\rH=0$ at a generic angle, so the matching bound must be
stated as injectivity on $\KZ$, not as a fixed quadrature list. Minimality at $\eta=0$ for the complete
family then follows from the
$Y$-signature separating system of \Cref{app:threshold}: $L\ge\lceil\log_2(n+1)\rceil$, attained by the code.
The construction is proved for $K\le2$. Higher weight requires a higher-arity covering array and is left open.

% ======================================================================
\section{Proof of \texorpdfstring{\Cref{thm:threshold}}{Theorem 4}: conditioning floor and threshold}\label{app:threshold}

Coverage is the binary identifiability condition; the smallest singular value of the augmented data map is the
finite-sample cost, with a sharp $\eta\to0$ dichotomy below coverage.

Let $\sigmin(\etav)$ be the smallest singular value of $J_{Z\cup B}$ on the canonical input
$|0^n\rangle$, over the full $2|G|$-dimensional
parameter space. A singleton $\{i\}$ is \emph{covered} iff some $b\in B$ has $b_i=Y$; a pair $\{i,j\}$ iff
some $b\in B$ has $\{b_i,b_j\}=\{X,Y\}$.

\begin{proposition}[$\eta=0$ rank, log threshold, and $\eta$-dichotomy]\label{thm:p5}
\textbf{(1)} For $K=1$ the exact added cost is $1$ ($Y^n$ suffices). \textbf{(2)} For $K=2$,
$\mathrm{rank}\,J_{Z\cup B}(0,p)=2|G|$ iff $B$ covers every generator of $G$. \textbf{(3)} Hence for the
complete family (every singleton and pair) any
$\eta=0$-full-rank $B$ has $|B|\ge\lceil\log_2(n+1)\rceil$, attained by the code. \textbf{(4)} If $B$ covers,
$\sigmin(0)=:s_0>0$ and $\sigmin(\etav)\ge s_0/2$ on an explicit radius $\|\etav\|\le\eta_0$; for the
separating code $s_0=c_B$ on the complete family ($s_0\ge c_B$ on any known sub-family).
\textbf{(5)} If $B$ does not cover,
$\sigmin(\etav)=O(\|\etav\|)$, so the Fisher information is $O(\|\etav\|^2)$ and the worst-direction sample
complexity diverges as $\Omega(\|\etav\|^{-2})$ for finite-variance, locally unbiased estimators.
\end{proposition}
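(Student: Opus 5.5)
The plan is to treat the five parts in order. Everything rests on a single first-order computation at $\etav=0$: which measured Paulis see each coherent column. At $\etav=0$ every rotation is the identity. Every flip is an $X$-type Pauli channel, so it commutes with $X_S$ and acts on each Pauli $P$ by a scalar.

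\emph{Key computation.} Inserting the derivative of $\mathrm{Ad}_{R_S}$ at position $S$ (\Cref{lem:pl}) gives
\[
\partial_{\eta_S}\langle P\rangle\big|_{\etav=0}=\pm\tfrac12\,\mathrm{Tr}\big(i[X_S,P]\,\rho_p\big),
\]
where $\rho_p$ is the diagonal state $N_{0,p}(|0^n\rangle\langle0^n|)$. This is nonzero only if $iX_SP$ is $Z$-type, i.e.\ $P=iX_SZ_T$ with $|S\cap T|$ odd. Hence:
\begin{itemize}
\item for $S=\{i\}$, the only such $P$ carry $Y$ on qubit $i$;
\item for $S=\{i,j\}$, exactly one of $i,j$ lies in $T$, so the only such $P$ carry one $Y$ and one $X$ on $\{i,j\}$.
\end{itemize}
A product setting $b$ reads such a $P$ only if it matches these letters. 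So an uncovered generator has an identically zero coherent column on every row of $Z^n\cup B$ at $\etav=0$.

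\emph{Parts (1) and (2).} Sufficiency follows from \Cref{lem:Q} as in the full-rank proof after \Cref{lem:Q}. The $Z^n$ rows give full stochastic rank $|G|$ (\Cref{lem:rank}) and carry zero coherent entries. The covering $Q_S$ rows give a diagonal coherent block with zero stochastic entries, because stochastic derivatives only rescale the vanishing non-$Z$ expectations. So a $2|G|$ minor is block-triangular and nonsingular. Necessity is the zero-column observation above. For $K=1$, the single setting $Y^n$ covers every singleton. One extra setting is also necessary, since $Z^n$ alone has $\rH=|G|$ (\Cref{lem:n3}).

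\emph{Part (3).} Use the $Y$-signatures $\sigma(i)=\{b\in B:b_i=Y\}$. Singleton coverage forces each $\sigma(i)\neq\emptyset$. If $\sigma(i)=\sigma(j)$ for some $i\neq j$, then no setting has $\{b_i,b_j\}=\{X,Y\}$, so the pair $\{i,j\}$ is uncovered. Hence the signatures are $n$ distinct nonempty subsets of $B$, which gives $n\le2^{|B|}-1$. The binary code of the construction after \Cref{lem:Q} attains this bound.

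\emph{Part (4).} At $\etav=0$ I will show the augmented Jacobian is block-diagonal:
\begin{itemize}
\item stochastic columns live only on $Z$-type rows;
\item coherent columns live only on non-$Z$ rows;
\item distinct generators' coherent columns have disjoint row support, since $iX_SP$ being $Z$-type pins $S$ for $K\le2$ (\Cref{lem:Q}).
\end{itemize}
Hence $\sigmin=\min\{\min_S\|\text{coherent column}_S\|,\ \sigma_{\min}(\text{stochastic block})\}$. For the separating code with uniform $p$, I then compute the coherent column norm as the damping $(1-2p)^n$ of the surviving flip factors. The stochastic block gives $\sqrt{\lambda_2}$ by definition. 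For a sub-family, the damping products and Gram eigenvalues can only improve, which gives $s_0\ge c_B$.

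Off $\etav=0$, $J(\etav,p)$ is analytic, so it is Lipschitz on a compact ball with constant $L_M$. Weyl's inequality then gives $\sigmin(\etav)\ge s_0-L_M\|\etav\|\ge s_0/2$ for $\|\etav\|\le s_0/(2L_M)$.

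\emph{Part (5).} An uncovered column $e_{\eta_S}$ is analytic in $\etav$ and vanishes at $0$, so $\|Je_{\eta_S}\|=O(\|\etav\|)$ and hence $\sigmin=O(\|\etav\|)$. With $p_x$ bounded below near the reference, $e^\top\FZ e=O(\|\etav\|^2)$ for this direction. Applying the Cauchy--Schwarz step of \Cref{lem:pinv} with $b=e$ to a locally unbiased estimator of $\eta_S$ gives
\[
N\,\mathrm{Var}\ge \frac{1}{e^\top F e}=\Omega(\|\etav\|^{-2}).
\]

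\emph{Expected main obstacle.} The hardest step is the exact identification $s_0=c_B$ in part (4). Each code setting reads many Paulis, so each coherent column appears on several rows, not just the canonical $Q_S$. The column norm therefore sums the squared damping products over all matching $T$. Reconciling this sum with the stated $(1-2p)^n$, and with the normalization of parity coordinates, needs care. I would first verify the claim at $n=3$ against $\sigmin\approx0.72$. If the reconciliation fails in general, I would state $s_0$ as the minimum of these explicit sums.
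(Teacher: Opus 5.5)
Your proposal is correct and follows the paper's proof essentially step for step: the coverage/injection rule of \Cref{lem:Q} for (1)--(2), the $Y$-signature count for (3), the $\eta=0$ block-diagonal floor of \Cref{lem:floor} plus a Lipschitz--Weyl radius for (4), and the vanishing uncovered column for (5), where your direct Cauchy--Schwarz bound along $e_{\eta_S}$ is a slightly cleaner finish than the paper's sandwich route. The obstacle you flag resolves without a fallback: because $\Bstar\subset\{X,Y\}^n$ has no $Z$ factors, only rows with $T\subseteq S$ see $\eta_S$, and each such entry has magnitude $(1-2p)^n$ at uniform rate. A singleton column therefore has norm $\sqrt{|c_i|}\,(1-2p)^n$, and a pair column has norm $\sqrt{m_{ij}}\,(1-2p)^n$ with $m_{ij}\ge1$ the number of bits separating $i$ and $j$, which is exactly the heterogeneous formula in \Cref{lem:floor}. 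The qubit with a weight-one codeword makes the minimum exactly $(1-2p)^n$, giving $s_0=c_B$.
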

\begin{proof}
\textbf{(2)} At $\eta=0$ the coherent block is visible only through the injection rule of \Cref{lem:Q}; if
all generators are covered the chosen rows give a full-rank coherent block (with the $Z$ stochastic block,
rank $2|G|$), and if some $S$ is uncovered its $\eta_S$ column is zero, so full rank is impossible.
\textbf{(3)} Define the $Y$-signature $\sigma(i)=\{b\in B:b_i=Y\}$; singleton coverage makes each $\sigma(i)$
nonempty, pair coverage makes them distinct, so $n\le2^{|B|}-1$, i.e.\ $|B|\ge\lceil\log_2(n+1)\rceil$, with
the binary code attaining equality. \textbf{(4)} $J(\etav,p)$ is analytic; if $J(0,p)$ is full column rank
with $\sigmin(0)=s_0>0$ then, by the Lipschitz bound of \Cref{rem:t5}, $\sigmin(\etav)\ge s_0-L_M\|\etav\|$,
so $\sigmin(\etav)\ge s_0/2$ for $\|\etav\|\le\eta_0:=s_0/(2L_M)$; the floor at $\eta=0$ is $s_0$, equal to $c_B$ for the separating code on the complete family, and it
is retained within a factor $2$ on the explicit radius $\eta_0$. \textbf{(5)} If $J(0,p)$ is
rank-deficient, $\sigmin(0)=0$ and analyticity plus Weyl give $\sigmin(\etav)\le C_B\|\etav\|$, since an
uncovered coherent column expands as $\partial_{\eta_S}\langle P\rangle_{\etav}=O(\|\etav\|)$.
\end{proof}
For a sparse known support the same $Y$-signatures give the exact minimum: $\sigma(i)$ must be nonempty only
for the singletons of $G$ and $\sigma(i)\ne\sigma(j)$ only for the pairs $\{i,j\}\in G$, so fewer settings
can suffice (a single singleton generator needs one $Y$-sensitive setting at any $n$). The logarithmic count
is the complete-family value.

\begin{lemma}[Conditioning floor]\label{lem:floor}
At $\eta=0$ on $\rho_0=|0^n\rangle$, for the complete weight-1 and weight-2 family, $J_{Z\cup\Bstar}$ is
block-diagonal across Pauli-row subspaces: the
stochastic derivatives occupy the $Z$-rows and the coherent derivatives the added non-$Z$ rows
(\Cref{lem:Q}), so $\sigmin(J_{Z\cup\Bstar})=\min\{\sigma_{\mathrm{coh}},\sigma_{\mathrm{stoch}}\}$ with no
cross-block coupling. By \Cref{lem:Q} each generator $S$ is read on a distinct measured Pauli $Q_{T,S}$ with
zero overlap on the other generators' rows, so the coherent block is column-orthogonal and
$\sigma_{\mathrm{coh}}=\min_S D_S$, where $D_S$ is the qubit-star damping, the product of the flip factors
$d_a=1-2p_a$ on $S$'s measured row. The stochastic block is the $Z^n$ incidence Gram, whose smallest nonzero
singular value is $\sqrt{\lambda_2}$ ($\lambda_2$ the second eigenvalue of the readout-restricted
information). With heterogeneous rates,
$\sigma_{\mathrm{coh}}=\min\{\min_i\nu_i,\ \min_{i<j}\nu_{ij}\}$, $\nu_i^2=|c_i|D_i^2$,
$\nu_{ij}^2=\#\{\ell:c_i{=}1,c_j{=}0\}\,D_i^2+\#\{\ell:c_i{=}0,c_j{=}1\}\,D_j^2$, with
$D_i=d_i\prod_{j\ne i}d_{ij}$ (so $\sigma_{\mathrm{coh}}\ge\min_i D_i$). At a uniform rate $p$ every
$D_i=(1-2p)^n$, giving
\[ c_B=\min\{(1-2p)^n,\ \sqrt{\lambda_2}\}\quad\text{(parity coordinates).}\]
\end{lemma}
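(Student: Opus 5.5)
The plan is to compute the stacked Jacobian $J_{Z\cup\Bstar}$ at $\etav=0$ on $|0^n\rangle$ entry by entry in parity coordinates. Each setting contributes one row per Pauli consistent with it, and a Pauli measured in several settings contributes several rows. I will first show that the Jacobian splits into two blocks with no coupling, and then compute each block's smallest singular value separately.

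\textbf{Block structure.} On the $Z$-rows, \Cref{lem:Eclosed} gives $\partial_{\eta_a}E_T=z_TD_T\,\partial_{\eta_a}C_T$. At $\etav=0$ this vanishes, for two reasons. The $L=\emptyset$ term of $C_T$ is a product of cosines, whose derivative is zero at $0$. Every zero-xor term with $L\neq\emptyset$ has $|L|\ge2$, so it carries at least two sines, and its first derivative also vanishes at $0$. Hence the coherent columns are zero on all $Z$-rows.

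On the added non-$Z$ rows the argument runs the other way. At $\etav=0$ the state is $Z$-diagonal, so every non-$Z$ expectation is zero. The flips only rescale these zero expectations, so the stochastic derivatives vanish there; this is the stochastic clause of \Cref{lem:Q}. The Jacobian is therefore block-diagonal, $J=\mathrm{diag}(J_{\mathrm{stoch}},J_{\mathrm{coh}})$, up to a row and column permutation. The singular values of a block-diagonal matrix are the union of the blocks' singular values, which gives $\sigmin=\min\{\sigma_{\mathrm{coh}},\sigma_{\mathrm{stoch}}\}$.

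\textbf{Coherent block.} By \Cref{lem:Q}, a measured row $P=iX_SZ_T$ has a nonzero $\eta$-derivative only in the $\eta_S$ column. For $K\le2$, $iX_RP$ is $Z$-type only when $R=S$. Distinct generators therefore have disjoint row supports, so the columns of $J_{\mathrm{coh}}$ are orthogonal and $\sigma_{\mathrm{coh}}=\min_S\|\text{col}_S\|$.

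\emph{Entry magnitudes.} Each nonzero entry has magnitude $|\langle Z_T\rangle|$ at $\etav=0$, which equals $z_T\prod_{a:b_{T,a}=1}d_a$. For $T=\{i\}$ the generators anticommuting with $Z_i$ are $X_i$ and all $X_iX_j$. This gives the qubit-star damping $D_i=d_i\prod_{j\ne i}d_{ij}$.

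\emph{Row counting.} I then count the rows contributed by the binary code.
\begin{itemize}
\item For the singleton $\{i\}$, $Y_i$ is read in every setting with $c_i[\ell]=1$. This gives $|c_i|$ rows, so $\nu_i^2=|c_i|D_i^2$.
\item For the pair $\{i,j\}$, the row $Y_iX_j$ (which yields $Z_i$) appears exactly when $c_i=1,c_j=0$. The row $X_iY_j$ (which yields $Z_j$) appears exactly when $c_i=0,c_j=1$.
\item Settings with $b_i=b_j=Y$ give $Y_iY_j$. This anticommutes with $X_iX_j$ and $iX_iX_jY_iY_j\propto Z_iZ_j$, which I should double-check, but with the sign conventions it contributes to the $Z_iZ_j$ expectation.
\end{itemize}
I expect to exclude the $Y_iY_j$ case by fixing the canonical $Q_{T,S}$ rows of \Cref{lem:Q}. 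If it is included, it only increases $\nu_{ij}$, so the lower bound survives. These counts give the stated $\nu_{ij}^2$.

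\emph{Exact minimum.} Every $\nu\ge\min_iD_i$, since each covered generator has at least one row of weight $D_i$ or $D_j$. At uniform $p$ each $D_i=d\cdot d^{n-1}=(1-2p)^n$. The minimum is attained exactly, because qubit $1$ has a weight-one codeword, so $\nu_1=D_1$. Therefore $\sigma_{\mathrm{coh}}=(1-2p)^n$.

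\textbf{Stochastic block.} On the $Z$-rows, \Cref{lem:Eclosed} gives $\partial_{p_a}E_T=-2b_{T,a}D_T/d_a$. This is $\mathrm{diag}(D_T)\,B\,R$, with $R=\mathrm{diag}(-2/d_a)$ as in \Cref{lem:rank}. Its Gram matrix is the readout-restricted information, and $\sigma_{\mathrm{stoch}}$ is the square root of the relevant eigenvalue, which is how $\lambda_2$ is defined.

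\textbf{Main obstacle.} I expect the main difficulty to be this stochastic-block identification rather than the coherent computation. The issue is whether the empty row $T=\emptyset$ (identically zero) is what makes the "second eigenvalue" the operative one. I also need to confirm that the normalization matches the one in which $\sigma_{\mathrm{coh}}$ was computed. I would settle this by writing $B^\top\mathrm{diag}(D_T^2)B$ explicitly for uniform $p$. At uniform $p$ it is invariant under the symmetric group on the qubits, so it diagonalizes in the trivial, standard and pair representations. I would then check that the relevant eigenvalue agrees with the $n\le8$ exact values.
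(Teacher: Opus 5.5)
Your plan follows the paper's proof. At $\etav=0$ the stochastic derivatives live only on the $Z$-rows and the coherent ones only on the added rows. The coherent columns are mutually orthogonal, so $\sigma_{\mathrm{coh}}$ is the least column norm, and $\sqrt{\lambda_2}$ is simply the name for the stochastic block's smallest singular value. Your explicit row multiplicities, and the observation that qubit~$1$ has a weight-one codeword, make precise what the paper compresses into ``column norm $D_S$''. The codeword observation gives $\nu_1=D_1$, so $(1-2p)^n$ is attained and not merely a lower bound. The symmetric-group diagonalization in your last paragraph is not needed. The block $\mathrm{diag}(D_T)BR$ has full column rank, because $D_T>0$, $R$ is invertible, and the columns $(\mathbf 1-\chi_{S_a})/2$ of $B$ are independent for distinct nonempty $S_a$. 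The identically zero $T=\emptyset$ row changes nothing.

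One step is wrong, and the patch you propose for it is not available. $Y_iY_j$ commutes with $X_iX_j$, because they anticommute on two tensor factors: $X_iX_jY_iY_j=Y_iY_jX_iX_j=-Z_iZ_j$. The anti-Hermitian $iX_iX_jY_iY_j=-iZ_iZ_j$ is the signal. So by \Cref{lem:pl}, $\partial_{\eta_{ij}}\langle Y_iY_j\rangle=0$. The $\eta_i,\eta_j$ derivatives of that row also vanish, because $iX_iY_iY_j=-Z_iY_j$ has zero expectation on a diagonal state.

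You cannot instead discard such rows by ``fixing the canonical $Q_{T,S}$ rows''. The stacked Jacobian contains every Pauli marginal of every setting, so each column norm is summed over all of them. Your fallback, that an extra row ``only increases $\nu_{ij}$'', would contradict the exact $\nu_{ij}^2$ formula you are proving. It would leave $c_B$ intact only because the minimum sits on a singleton.

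Replace the bullet list with a general characterization. A row $P\in\{I,X,Y\}^n$ has a nonzero $\eta_R$-derivative at $\etav=0$ only if $\{X_R,P\}=0$ and $iX_RP$ is diagonal. That means $P\propto X_RZ_T$ with $T\subseteq R$ and $|T|$ odd. For $|R|\le2$ this leaves exactly $Y_i$ for $R=\{i\}$, and $Y_iX_j$, $X_iY_j$ for $R=\{i,j\}$. This gives the column orthogonality and the stated $\nu_i^2$, $\nu_{ij}^2$ verbatim.
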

\begin{proof}
The block-diagonal structure is the $\eta=0$ analysis above: stochastic derivatives of $Z$-strings stay on
$Z$-rows, and by \Cref{lem:Q} the coherent derivative of generator $S$ is nonzero only on its measured
$Q_{T,S}$ row, with zero stochastic derivative there; distinct generators use distinct $Q$ rows, so the
coherent columns are mutually orthogonal with norm $D_S$. A block-diagonal matrix has
$\sigmin=\min$ of the block smallest singular values. The uniform-rate reduction counts the flip factors on a
singleton's measured row ($d_i$ and the $n-1$ pair rates $d_{ij}$) as $n$ equal factors, $D_i=(1-2p)^n$; the
heterogeneous formula is the column norms verbatim.
\end{proof}
The $(1-2p)^n$ branch is exponentially small in $n$ at fixed $p$, so the floor is a fixed-$n$, small-$p$
guarantee. For a general known support $G$ the same block structure holds with the damping product on each
measured row restricted to the generators of $G$ (a singleton row at qubit $i$ carries $d_i$ and the incident
pair rates $d_{ij}$ with $\{i,j\}\in G$), so the uniform-rate exponent is the size of that incidence star
rather than $n$; and since passing to a sub-family removes damping factors ($d_a\in(0,1)$ at rates below
$1/2$) and columns from both blocks, the
complete-family $c_B$ is a lower bound on the floor of every known-support sub-family. The exponent $n$ is
the complete-family incidence count, not a general penalty.

\paragraph{Finite-sample consequence.} The extended bound (\Cref{app:nonest}) is non-asymptotic, which makes
the conditioning-to-error link an inequality rather than a heuristic: for any finite-variance, locally
unbiased estimator from $N$ samples of each measured setting, the worst estimable direction obeys
$N\cdot\mathrm{MSE}\ge 1/\sigma_{\min}^{+}(\FZ)$, with $\sigma_{\min}^{+}$ the smallest nonzero singular value
of $\FZ$; at a uniform reference $\sigma_{\min}^{+}(\FZ)=\sigmin^2$ exactly, and near-uniform up to the
explicit factors of \Cref{lem:sandwich} below, so the
bound is $\sigmin^{-2}$ in
parity coordinates. Ill-conditioning is therefore provably costly: below coverage
$\sigmin(J)=O(\|\etav\|)$ (\Cref{thm:p5}(5)) forces $N\cdot\mathrm{MSE}=\Omega(\eta^{-2})$ for
finite-variance, locally unbiased estimators, the divergence of
\Cref{thm:threshold} as a finite-$N$ bound and not only an asymptotic rate; the code's floor caps its
worst-direction parity-coordinate bound at $c_B^{-2}=\max\{(1-2p)^{-2n},\lambda_2^{-1}\}$ at $\eta=0$, and at
$4c_B^{-2}$ on the
radius $\eta_0$.

\begin{lemma}[Probability-weight sandwich]\label{lem:sandwich}
Let $p_x$ run over the outcome probabilities of the measured settings at the reference, all positive, with
$p_{\min}=\min_xp_x$ and $p_{\max}=\max_xp_x$. Then
\[ \tfrac{1}{p_{\max}}\,J_p^\top J_p\ \preceq\ \FZ\ \preceq\ \tfrac{1}{p_{\min}}\,J_p^\top J_p, \]
so
\[ \tfrac{1}{2^np_{\max}}\,\sigmin(\JZ)^2\ \le\ \lambda_{\min}(\FZ)\ \le\ \tfrac{1}{2^np_{\min}}\,\sigmin(\JZ)^2, \]
and the same bounds hold for the smallest nonzero eigenvalues, since all three matrices share the kernel
$\ker J_p$.
\end{lemma}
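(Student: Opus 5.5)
The first inequality is the standard two-sided Loewner bound on a weighted Gram matrix. Everything else follows from Courant--Fischer and an accounting of the Walsh transform relating $J_p$ and $\JZ$.

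\emph{Matrix sandwich.} Write $\FZ=J_p^\top\mathrm{diag}(1/p_x)J_p$. Since every $p_x>0$, the weights satisfy
\[
\tfrac{1}{p_{\max}}I\preceq\mathrm{diag}(1/p_x)\preceq\tfrac{1}{p_{\min}}I .
\]
Conjugating by $J_p$ preserves the Loewner order, because $v^\top J_p^\top D J_p v=(J_pv)^\top D(J_pv)$. This gives the first display directly. With several measured settings stacked, $J_p$ is the stacked histogram Jacobian and the same argument applies row-block by row-block.

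\emph{Passing to $\JZ$.} Each setting's histogram is related to its parity (expectation) vector by $E=Wp$. Here $W$ is the $2^n\times2^n$ Walsh--Hadamard matrix with $\pm1$ entries, so $W^\top W=2^nI$. Hence $\JZ=WJ_p$ and
\[
\JZ^\top\JZ=2^n J_p^\top J_p ,
\]
which gives $J_p^\top J_p=2^{-n}\JZ^\top\JZ$. Substituting into the sandwich and taking $\lambda_{\min}$, which is monotone under $\preceq$ by Courant--Fischer, yields
\[
\tfrac{1}{2^np_{\max}}\sigmin(\JZ)^2\le\lambda_{\min}(\FZ)\le\tfrac{1}{2^np_{\min}}\sigmin(\JZ)^2 .
\]
For stacked settings the same orthogonality holds blockwise, with $W$ block-diagonal and each block scaled by $2^n$.

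\emph{Nonzero eigenvalues.} By \Cref{lem:kerF}, the matrices $\FZ$, $J_p^\top J_p$ and $\JZ^\top\JZ$ share the kernel $\ker J_p$. Restrict all three quadratic forms to the common orthogonal complement $\ker J_p^\perp$. On this subspace each form is positive definite, and its minimal Rayleigh quotient is exactly the smallest nonzero eigenvalue. The Loewner inequalities restrict to this subspace, so the same Courant--Fischer step gives the bounds for $\sigma^+_{\min}$.

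\emph{Main obstacle.} The only delicate point is the normalization of $W$, which must be fixed so that the factor $2^n$ appears correctly rather than $1$ or $4^n$. I would check it at the uniform reference $p_x=2^{-n}$. There $p_{\min}=p_{\max}$ collapses both bounds to $\lambda_{\min}(\FZ)=\sigmin^2$, consistent with the statement in the finite-sample paragraph.
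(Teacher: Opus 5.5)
Your proposal is correct and follows the paper's proof step for step. Both arguments conjugate the weight bounds $\tfrac{1}{p_{\max}}I\preceq\mathrm{diag}(1/p_x)\preceq\tfrac{1}{p_{\min}}I$ by $J_p$, use $W^\top W=2^nI$ per setting to get $J_p^\top J_p=2^{-n}\JZ^\top\JZ$, and finish with Loewner monotonicity of $\lambda_{\min}$. Your restriction of the three quadratic forms to $(\ker J_p)^\perp$ spells out the smallest-nonzero-eigenvalue clause, which the paper only asserts from the shared kernel.
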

\begin{proof}
$\FZ=J_p^\top\mathrm{diag}(1/p_x)J_p$ with $p_{\min}\le p_x\le p_{\max}$ gives the operator sandwich;
$\JZ=WJ_p$ with $W^\top W=2^nI$ per setting gives $J_p^\top J_p=2^{-n}\JZ^\top\JZ$; and $A\preceq B$ implies
$\lambda_{\min}(A)\le\lambda_{\min}(B)$.
\end{proof}
The parity conditioning therefore controls the Fisher conditioning with explicit constants whenever the
outcome probabilities are positive: with $N$ samples of each measured setting, the worst direction obeys
$N\cdot\mathrm{MSE}\ge1/\lambda_{\min}(\FZ)\ge2^np_{\min}/\sigmin(\JZ)^2$. Both constants equal $1$ at a
uniform reference and are $\approx1$ near it, giving $N\cdot\mathrm{MSE}\ge\sigmin^{-2}$ in parity
coordinates up to those factors.

\emph{Coordinates.} $c_B$ and $\sigmin$ are computed on the parity Jacobian $J_Z$. The unweighted
probability Jacobian $J_p$ ($J_Z=WJ_p$, $W^\top W=2^nI$) has singular values smaller by $2^{n/2}$, but it is
not the object that sets sample complexity: that is the statistical Fisher
$F_Z=J_p^\top\mathrm{diag}(1/p_x)J_p$, which \Cref{lem:sandwich} ties to the parity value with explicit
constants; at a near-uniform reference the probability weighting cancels the Walsh factor and the
Fisher-metric $\sigmin$ equals the parity $\sigmin$ up to the factors
$(2^np_{\max})^{-1/2},(2^np_{\min})^{-1/2}\approx1$. At
the peaked $|0^n\rangle$ reference of the sweep the two differ in absolute scale, but the parity $\sigmin$
still tracks the Fisher-metric conditioning: across the full-rank design
population the parity $\sigmin$ and the exact-Fisher $\sigmin$ ($\sqrt{\lambda_{\min}(F_Z)}$) rank the designs
almost identically (Spearman $\rho=0.984$ at $n=3$, $0.993$ at $n=4$), and the exact-Fisher $\sigmin$ predicts
finite-sample RMSE comparably ($\rho=-0.65$ vs $-0.63$ at $n=3$, $-0.85$ vs $-0.85$ at $n=4$).
The design ordering that $c_B$ and the parity $\sigmin$ induce thus closely matches the ordering under the
exact Fisher metric. The absolute $c_B$ values are reported in parity coordinates. Checked against the exact
density-matrix computation: $\sigmin$ slope $\approx0$ above threshold and $\approx1$ below ($n=3,4$; measured $0.98$); $\rH=0$ with
$\sigmin=0.72,0.65,0.58,0.53,0.47,0.42$ at $n=3,\dots,8$, $|\Bstar|=2,3,3,3,3,4$.

\Cref{tab:coords} reports the separating code's conditioning floor as $n$ grows: the parity $\sigmin$ tracks
the floor $c_B=(1-2p)^n$ and decays exponentially in $n$, with the unweighted probability-Jacobian $\sigmin$ (the parity
value divided by $2^{n/2}$) listed alongside. The worst-direction matched-shot cost is shown directly and
coordinate-independently by the finite-sample RMSE comparison of \Cref{fig:cond}; we do not convert the
cross-$n$ floor to a single reference-independent shot count.

\begin{table}[h]
\centering
\caption{Separating-code conditioning floor as a function of $n$ (complete family, uniform $p=0.05$). The parity $\sigmin$ tracks the floor
$c_B=(1-2p)^n$; the unweighted probability-Jacobian $\sigmin$ (parity divided by $2^{n/2}$) is listed for
reference and is not the sample-complexity object. The floor decays exponentially in $n$, a fixed-$n$ guarantee.}
\label{tab:coords}
\resizebox{\columnwidth}{!}{%
\begin{tabular}{lcccccc}
\toprule
$n$ & 3 & 4 & 5 & 6 & 7 & 8 \\
\midrule
parity $\sigmin$ ($\approx c_B=(1-2p)^n$) & $0.72$ & $0.65$ & $0.58$ & $0.53$ & $0.47$ & $0.42$ \\
unweighted prob.-Jacobian $\sigmin$ & $0.25$ & $0.16$ & $0.10$ & $0.066$ & $0.042$ & $0.026$ \\
\bottomrule
\end{tabular}}
\end{table}

\paragraph{Conditioning across the design population.} \Cref{fig:sweep} confirms
\Cref{thm:threshold} at the population level. Among all random product-Pauli designs that reach full rank at
the matched budget (so coverage is held fixed: every design identifies the noise), the finite-sample RMSE is
governed by $\sigmin$: a monotone anti-correlation (Spearman $\rho=-0.63$, bootstrap $95\%$ CI $[-0.73,-0.50]$, permutation $p<10^{-4}$, at $n=3$ over $250$ designs; $\rho=-0.85$, CI $[-0.91,-0.77]$, $p<10^{-4}$, at $n=4$ over $120$), with the worst-conditioned-quintile designs paying $3.1$--$4.1\times$ the RMSE of the
best-conditioned quintile at equal budget and shots. Designs are selected by their computed $\sigmin$ and $\rH$,
not by the fitted estimate, and the budget is matched across designs at a total of $8192$ shots per design,
split equally across that design's settings, aggregated over $24$ independent seeds per design (the same
budget as \Cref{fig:sweep}); the
bias--variance decomposition of \Cref{tab:biasvar} uses the same $8192$-shot total over $40$ seeds. Code and
data reproducing every finite-sample result are in the repository of the Data Availability statement.
The greedy $D$-optimal comparator adds settings one at a time from the product-Pauli candidate pool, each
maximizing the log pseudo-determinant of $J^\top J$ (the $D$-optimality objective) at the same $n=3$
reference ($\eta\!=\!0.1$, $p\!=\!0.05$), up to the matched budget $L=\lceil\log_2(n+1)\rceil$; it is a
single greedy pass with no restarts and reaches $\sigmin=0.097$ at $n=3$.

\begin{remark}[Explicit radius and SPAM robustness]\label{rem:t5}
$\|J_M(\etav,p)-J_M(0,p)\|_2\le L_M\|\etav\|_2$ with $L_M=m\sqrt{5R_M}$ ($m=n(n+1)/2$, $R_M=(|B|+1)2^n$),
so $\sigmin(\etav)\ge c_B-L_M\|\etav\|$ and the explicit radius is $\eta_0=c_B/(2L_M)$. Symmetric readout
error at rate $\varepsilon_r$ scales row $P$ by $(1-2\varepsilon_r)^{w(P)}$, so
$\sigmin(J_r)\ge(1-2\varepsilon_r)^{w_{\max}}\sigmin(J)$; an input mixture
$\rho_0\to(1-\varepsilon_s)\rho_0+\varepsilon_s\rho_{\mathrm{err}}$ gives
$\sigmin\ge\sigmin(J)-(\sigmin(J)+K_M)\varepsilon_s$. These constants are conservative but explicit. At
$p=0.05$ and $n=3$ the guaranteed radius is $\eta_0\approx0.0055$, below the smallest angle we inject
($\approx0.025$): $\eta_0$ is a worst-case Lipschitz guarantee, not a prediction of where conditioning fails,
and over the tested range the observed degradation is far milder.
\end{remark}

% ======================================================================
\section{Proof of \texorpdfstring{\Cref{thm:separation}}{Theorem 5}, the conjugation reduction, and input invariance}\label{app:separation}

The $X$-type model is no loss of generality for the transverse class, and the veil is a property of the input
state and data map, not of the canonical $|0^n\rangle$ reference.

\paragraph{Conjugation reduction (used in \Cref{sec:setup}).} A pairwise-commuting transverse family (each
generator a Pauli $P_g$ with $X$-support $a_g$ and $Y$-set $b_g\subseteq a_g$) of distinct nonempty
weight-1 and weight-2 supports admits a global gauge $s\in\mathbb{F}_2^n$ with $b_g=a_g\wedge s$: distinct
weight-1 and weight-2 supports overlap in $\le1$ qubit, and symplectic commutation on the overlap forces the
$Y$-pattern to agree. The local Clifford $C_s=\prod_{s_i=1}S_i$ satisfies $C_sP_gC_s^\dagger=\varepsilon_g
X_{a_g}$ with $\varepsilon_g=(-1)^{|b_g|}$, fixes every $Z$-diagonal input and $Z^n$, and maps product
settings by $X\!\leftrightarrow\!Y$ on the $s$-qubits. The data maps satisfy $J_{\mathrm{tr}}=\Pi J_X D$ and
$F_{\mathrm{tr}}=D^\top F_X D$ with $D=\mathrm{diag}(\varepsilon,1)$, so rank, kernel dimension, singular
values and $c_B$ are identical, and the exchange basis is $v_{\mathrm{tr}}(u)=(\varepsilon\odot u,\tfrac12
d\odot c(u))$. Hence all theorems transfer verbatim with $\Bstar_s:=\sigma_s(\Bstar)$, justifying the
$X$-type WLOG. (Weight $3$ admits non-gaugeable commuting families, e.g.\ $\{Y_1X_2X_3,X_1Y_2\}$; for these
the $Z$-histogram veil is still support-determined by an orbit-gauge intertwiner, but the full design theory
is not claimed.)

\paragraph{Universal $Z$-anchor.} For an arbitrary Pauli family on a $Z$-diagonal input, at $\eta=0$ all
coherent columns of the $Z$-histogram Jacobian vanish (off-diagonal insertion never returns to the diagonal),
and the stochastic block equals the same-support $X$-type block (flips act as abelian XOR mixtures), so
$\rH(Z)|_{\eta=0}=m+\mathrm{corank}\,J_{\mathrm{stoch}}$ universally.

\begin{proof}[Proof of \Cref{thm:separation}]
For a weight-1 and weight-2 transverse class under a full product measurement, identifiability is equivalent to the
$X$-supports being nonzero and pairwise distinct: equal or empty supports collide in the incidence matrix
$B$ and leave a stochastic-block null direction, while distinct nonzero supports make $B$ full column rank,
and the coverage construction of \Cref{app:separatrix} supplies the coherent block. A pure longitudinal
$Z$-channel fixes every $Z$-diagonal state for all $(\etav,p)$, so every column of the $|0^n\rangle$ Jacobian
vanishes and no measurement (product, entangled, or adaptive) identifies it; conjugating by $H^{\otimes n}$
(equivalently, the input $|{+}^n\rangle$) maps it to the transverse theory, which is identifiable. Hence the
obstruction is the input.
\end{proof}

\paragraph{Input invariance (used in \Cref{sec:input,sec:exp}).} Writing $M(\theta)$ for the Pauli-transfer
matrix, each Jacobian entry $[\JZ]_{T,k}=\sum_P[\partial_{\theta_k}M(\theta)]_{Z_T,P}\,r(\rho_0)_P$ is
bilinear: linear in $r(\rho_0)$ with trigonometric-polynomial coefficients. The rank is therefore constant
off a proper algebraic locus, so there is a generic value $\rH^\star(n,K)$ attained for
Zariski-almost-every input, real or complex, pure or mixed. For $K\in\{1,2\}$ the generic rank saturates the
outcome bound, $\rH^\star(n,K)=\max(0,\,2|G(n,K)|-(2^n-1))$; in particular $\rH^\star(3,2)=5$, exceptional
only at $K=1$. The veil is thus a property of the data map, not of the canonical $|0^n\rangle$.

% ======================================================================
\section{Twirling and the recovered coherent information}\label{app:twirl}

Twirling and the separating code address different problems: twirling removes the coherent parameter as an
identifiable degree of freedom, whereas the code recovers it.

\begin{proposition}[Recovery versus suppression]\label{prop:twirl}
Support-Pauli twirling each generator channel maps $(\eta_S,p_S)$ to a single effective rate
$q_S=p_S+\sin^2(\eta_S/2)-2p_S\sin^2(\eta_S/2)$, so the twirled channel is
$\prod_S\mathrm{Flip}_{S,q_S}$ and for any measurement $M$,
$\mathrm{rank}\,J_M(\theta)\le|G|$, $r_H^{\mathrm{twirl}}(M)\ge|G|$: the independent coherent parameter is
not identifiable after twirling. Without twirling, the code of \Cref{thm:separatrix} gives $\rH=0$.
\end{proposition}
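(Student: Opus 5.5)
The plan is to compute the twirled single-generator channel in closed form, show that the twirled product channel depends on $\theta=(\etav,p)$ only through the $|G|$ numbers $q_S$, and then bound the rank of every Jacobian by the chain rule. The last sentence of the statement is just a citation of \Cref{thm:separatrix}.

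\emph{Step 1 (one generator).} Fix $S\in G$ and write $R_S=\cos(\eta_S/2)\,I-i\sin(\eta_S/2)\,X_S$. Then
\begin{equation*}
\mathrm{Ad}_{R_S}(\rho)=\cos^2(\eta_S/2)\,\rho+\sin^2(\eta_S/2)\,X_S\rho X_S+\tfrac{i}{2}\sin\eta_S\,[\rho,X_S].
\end{equation*}
Support-Pauli twirling averages $P^\dagger\mathcal{E}(P\rho P^\dagger)P$ over the Paulis $P$ on $\mathrm{supp}(S)$. Under conjugation by $P$, the cross term $[\rho,X_S]$ picks up the sign $\pm1$ according to whether $P$ commutes or anticommutes with $X_S$. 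Exactly half of the support Paulis anticommute with $X_S$ (for example, $Z$ on one qubit of $S$ does), so the cross term averages to zero. The diagonal Pauli terms are invariant under the twirl, so $\mathrm{Flip}_{S,p_S}$ passes through unchanged. The twirled factor is therefore the composition of two $X_S$-flips, with probabilities $s=\sin^2(\eta_S/2)$ and $p_S$. Composing two flips gives a single flip with probability $p_S(1-s)+(1-p_S)s=q_S$, which is the stated formula.

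\emph{Step 2 (the product).} The generators commute, and every flip is diagonal in the Pauli--Liouville basis. So twirling each factor of the ordered product in the setup gives $\prod_S\mathrm{Flip}_{S,q_S}$, independent of the composition order. This is the point where care is needed. The twirl must be applied per generator (interleaved), and a Pauli on $\mathrm{supp}(S)$ may fail to commute with other generators' flips or rotations. I would handle this by twirling each factor before composing, as in the statement, and then use commutativity of the resulting Pauli flips. If instead a global twirl is intended, I would check that the twirl group over all supports still kills every cross term. Every unwanted term contains an odd number of some $X_S$ on one side, so an anticommuting Pauli on that support removes it.

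\emph{Step 3 (rank bound).} For any measurement $M$, the data map factors as $\theta\mapsto q(\theta)\in\mathbb{R}^{|G|}\mapsto p^{M}_{q}$. The chain rule gives $J_M(\theta)=J^{(q)}_M\,\partial q/\partial\theta$, where $\partial q/\partial\theta$ is $|G|\times2|G|$. Hence $\mathrm{rank}\,J_M\le|G|$ and $r_H^{\mathrm{twirl}}(M)=2|G|-\mathrm{rank}\,J_M\ge|G|$. By \Cref{lem:kerF} the same holds for the Fisher kernel. The kernel explicitly contains, for each $S$, the direction $(\delta\eta_S,\delta p_S)=(\partial_{p}q_S,-\partial_{\eta}q_S)$. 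These $|G|$ directions are independent and each has a nonzero coherent component, since $\partial_p q_S=1-2\sin^2(\eta_S/2)=\cos\eta_S\ne0$ at small angle. So no independent coherent parameter is identifiable, and by \Cref{thm:nonest} each $\eta_S$ sits on the $+\infty$ branch. Finally, without twirling, the separating code of \Cref{thm:separatrix} gives $\rH=0$.
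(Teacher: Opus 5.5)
Your proposal is correct and follows the paper's proof essentially step for step. The support twirl kills the cross term of $\mathrm{Ad}_{R_S}$, leaving an $X_S$-flip at rate $\sin^2(\eta_S/2)$ that composes with $\mathrm{Flip}_{S,p_S}$ to give $q_S$; the data map then factors through $q(\theta)$, so $J_M=J_{M,q}\,Dq$ has rank at most $|G|$; and your kernel direction $(\cos\eta_S,-\tfrac12 d_S\sin\eta_S)$ is $\cos\eta_S$ times the paper's $v_S=e_{\eta_S}-\tfrac12 d_S\tan\eta_S\,e_{p_S}$. You can drop the contingency about a global twirl, since the statement specifies the per-generator twirl you actually use, and that contingency is wrong as sketched: a term such as $X_{\{1\}}X_{\{2\}}\,\rho\,X_{\{1,3\}}X_{\{2,3\}}=X_1X_2\,\rho\,X_1X_2$, coming from distinct generator subsets with the same $X$-mask, is Pauli-diagonal and survives any twirl, so a global twirl need not reduce to $\prod_S\mathrm{Flip}_{S,q_S}$.
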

\begin{proof}
The support twirl projects $\mathrm{Ad}_{R_S}$ onto its Pauli-diagonal part, deleting the partner map
$P\mapsto iX_SP$ and keeping the multiplier $\cos\eta$ on anticommuting $P$; this equals the Pauli channel
with rate $\sin^2(\eta_S/2)$, which composes with $\mathrm{Flip}_{S,p_S}$ by XOR to give $q_S$. The outcome
distribution depends on $\theta$ only through $q(\theta)$, so $J_M=J_{M,q}\,Dq$ with $Dq\in\mathbb{R}^{|G|
\times2|G|}$, bounding the rank; the per-generator direction $v_S=e_{\eta_S}-\tfrac12 d_S\tan\eta_S\,e_{p_S}$
satisfies $Dq\,v_S=0$. The separating-code recovery claim is \Cref{thm:separatrix}.
\end{proof}
After twirling, $\eta_S$ enters the channel only through the effective rate $q_S$: it survives, but is no
longer an independent identifiable degree of freedom. Twirling is a
suppression route (randomized compiling~\cite{wallman2016noise}, output a Pauli channel); the separating
code is a recovery route
(deterministic basis changes, the coherent channel kept). They solve different problems.

% ======================================================================
\section{Hardware consistency check: protocol and per-replicate results}\label{app:hardware}

This is a consistency check on injected errors at small $n$: it corroborates the predicted conditioning
ordering on a conservative prescreened subset.

\paragraph{Protocol.} We inject known coherent over-rotations and stochastic flips on $n=3$ qubits and
compare two estimators of the injected angles: a single fixed-basis ($Z$) least-squares fit, and the
separating-code fit using $Z^3$ plus the $\lceil\log_2 4\rceil=2$ settings of \Cref{thm:separatrix}. Two
IBM Heron superconducting processors, labeled A and B, were used, each on a linear chain of four qubits.
The chains passed a \emph{device-quality} prescreen applied before the estimator comparison: single-setting
recovery $|\mathrm{intercept}|\le0.02$\,rad and slope in $[0.85,1.15]$, evaluated on an $8$-circuit probe
($Z^3$ and the two code settings at two injected-angle scales, with two calibration circuits). By retaining chains whose single-basis slope is already near-unit, this criterion excludes the
chains with the largest single-basis bias. Two earlier
backends were screened out for coherent-leakage drift before this comparison. On the two
retained backends we obtained three replicates on Heron A and two on Heron B, each sweeping four injected-angle scales ($\eta_{\mathrm{scale}}\in\{0.05,0.1,0.15,
0.2\}$ with fixed generic per-generator multipliers: $0.9170$, $0.8829$, $1.0973$ on the three
single-qubit generators and $0.7049$, $1.2311$, $0.7276$ on the three two-qubit generators) at $8192$ shots
per circuit.

\paragraph{Results.} Here the \emph{bias} of a fit is the median over the $n=3$ coherent generators of the
magnitude error $\big||\hat\eta_S|-|\eta^{\mathrm{inj}}_S|\big|$, and the \emph{bias ratio} at a given scale
is the single-basis bias divided by the separating-code bias. Both fits use one fixed initialization ($0.05$\,rad per coherent angle, $0.02$ per stochastic rate), chosen
independently of the injected values; at the smallest scale that initialization is comparable in magnitude to
the injected angles, so the per-replicate value is the median of the ratio over the three larger scales
$\eta_{\mathrm{scale}}\in\{0.1,0.15,0.2\}$. Excluding the smallest scale does not inflate the result: the
pooled median over all four scales is higher. The separating-code bias (the denominator, the
``Code err'' column of \Cref{tab:hw}) stays in $0.013$--$0.018$\,rad, bounded away from zero, so the single-basis
bias (the numerator) is correspondingly $\approx0.04$--$0.09$\,rad: the reported ratios reflect a
substantial single-basis bias over a denominator bounded away from zero. The gap also persists at a matched total budget of $\approx8192$ shots per strategy
(the code's three settings subsampled to $2730$ each, $8190$ in all, against the single basis' $8192$ on
$Z^3$): the single fixed basis stays $2.8$--$4.0\times$ more biased (median $3.5\times$, over $15$ subsample
draws), so the code's larger shot count does not explain the gap. \Cref{fig:hw} reports, per replicate: \emph{(a)} the ratio of the single-basis fit's
bias to the separating-code fit's bias: $2.97$--$5.05$ (median $3.96$), so the single basis is $3$--$5\times$
more biased; and \emph{(b)} the separating-code fit's injected-vs-recovered-angle slope: $0.88$--$1.05$,
near-unit recovery. The code's recovery error stays $\le0.019$\,rad on all replicates. The ordering
predicted by \Cref{thm:threshold}, that a single fixed basis is far more weakly conditioned than the
code, holds on every replicate, while the bias-ratio magnitude varies with backend and replicate.

\begin{table}[h]
\caption{Per-replicate hardware results (five eligible replicates on two prescreened IBM Heron processors
A and B, $n=3$, injected known errors). Prescreen intercept (rad) and slope are the single-setting
device-quality prescreen; ``bias ratio'' is the single-basis fit's bias divided by the separating-code fit's bias;
``code slope'' and ``code err'' are the separating-code fit's injected-vs-recovered slope and recovery error.}
\label{tab:hw}
\centering
\resizebox{\columnwidth}{!}{%
\begin{tabular}{llcccc}
\hline
Backend & Rep & Prescreen (ic, slope) & Bias ratio & Code slope & Code err (rad)\\
\hline
A & 1 & $0.016$, $0.99$ & $2.97$ & $0.99$ & $0.018$\\
A & 2 & $0.005$, $1.07$ & $5.05$ & $1.05$ & $0.018$\\
A & 3 & $0.011$, $0.98$ & $4.57$ & $0.98$ & $0.013$\\
B & 1 & $0.012$, $0.96$ & $3.96$ & $0.95$ & $0.017$\\
B & 2 & $0.013$, $0.90$ & $3.41$ & $0.88$ & $0.015$\\
\hline
\end{tabular}}
\end{table}

The single-basis fit is more biased than the code on every one of the five prescreened replicates
(\Cref{tab:hw}). The ratio also exceeded one on every excluded (drifting) chain (raw single-basis-to-code
bias ratios spanning $1.8$--$7.9$), so the ordering holds on the excluded chains as well.

\begin{figure*}[t]
\centering
\includegraphics[width=\textwidth]{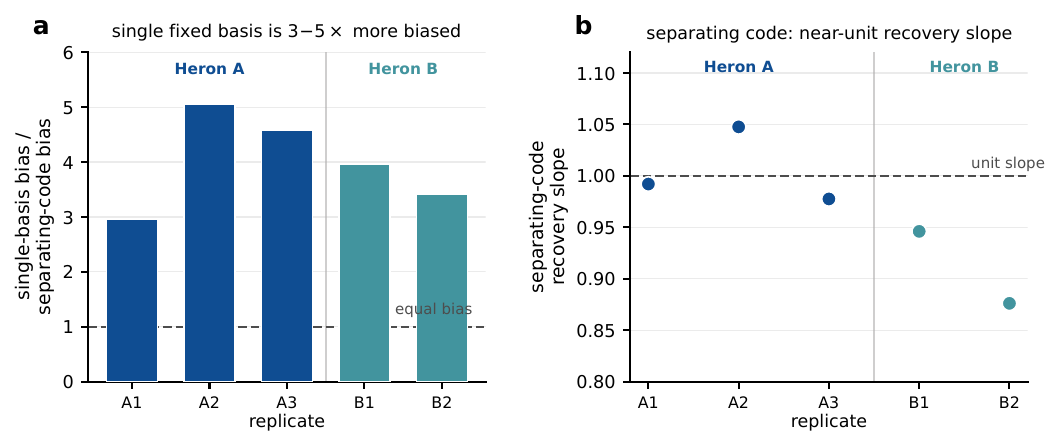}
\caption{Hardware consistency check on two IBM Heron processors ($n=3$, injected known errors, five eligible
replicates; Heron A: $3$ reps, Heron B: $2$ reps). \textbf{(a)} The single fixed-basis fit is $3$--$5\times$
more biased than the separating-code fit (ratio $>1$ above the equal-bias line; range $2.97$--$5.05$).
\textbf{(b)} The separating-code fit recovers the injected angles with near-unit slope (range
$0.88$--$1.05$).}
\label{fig:hw}
\end{figure*}

% ======================================================================
\section{Robustness to support misspecification}\label{app:misspec}
The exact theory assumes the generator support $G$ is known. We evaluate sensitivity to that assumption on the
separating code $\Bstar\cup Z^n$ at $n=3$ (reference $\eta=0.1$, $p=0.05$, a matched $8192$ shots per design
split equally across settings, $24$ seeds), comparing a correctly specified support against two one-generator
perturbations: \emph{missing} a true weight-$2$ generator (the noise contains it, the fit omits it) and a
\emph{spurious} generator (the noise lacks it, the fit includes it). Angle-recovery RMSE is measured on the
generators common to truth and fit, and each misspecified fit is compared against a correct-support fit scored
over those \emph{same} common generators under the same data-generating truth (a matched baseline). Relative to
that baseline, the common-angle RMSE is $1.01\times$ when a true generator is missing and $1.00\times$ when a
spurious one is added; in the spurious case the extra generator takes a spurious fitted angle of $0.012$\,rad rather
than biasing the recovered real angles. Recovery of the remaining angles is thus insensitive to this single
weight-$2$ support perturbation at the tested $(n{=}3,\eta,p)$ reference: a one-generator support error need
not coincide with a failure to recover the angles that are specified.

\section{Non-quantum instances}\label{app:ml}

The coverage-versus-conditioning structure is not special to quantum noise: the same singular-Fisher
obstruction and measurement-design cure appear in two classical statistical-learning settings, recorded here
as a single remark for context. They carry no quantum content and are not used in establishing the main results.

\begin{remark}[Two classical instances of the singular-Fisher obstruction]\label{rem:ml}
\emph{(i) Controlled active learning.} Logistic regression on $\mathbb{R}^d$ whose query pool spans only
a proper subspace $V\subsetneq\mathbb{R}^d$. A weight direction $w_\perp\perp V$ that the pool never excites
lies exactly in the averaged-Fisher kernel: two weights $w$ and $w+t\,w_\perp$ induce identical label
distributions on the pool for all $t$, so the target error along $w_\perp$ is flat in the sample size $N$.
A single designed query with a component along $w_\perp$ restores identifiability ($\sim1/\sqrt N$); at equal
measurement energy, a covering but ill-conditioned query design pays an order of magnitude more finite-sample
cost than the well-conditioned one, coverage versus conditioning, with no quantum content.

\emph{(ii) Overparameterized random features (\Cref{fig:c7}).} A frozen random-feature representation
$\phi(x)=\tanh(xW+b)$ of real images (the $8\times8$ digits), dimension $D=150$, with an overparameterized
labeled pool of $n_0=60<D$ examples and label noise $0.3$, fit by ridge regression ($\lambda=0.1$). Because
$n_0<D$, the pool Fisher information $\Phi_{\mathrm{pool}}^\top\Phi_{\mathrm{pool}}/n_0$ has an exact
$D-n_0=90$-dimensional null (the elementary underdetermination bound): the pool likelihood is flat along that
null, so no amount of same-size pool data identifies the predictor's component there, the veil. Ridge
($\lambda=0.1$) resolves the indeterminacy by setting that component to zero, a generically wrong guess that
surfaces as irreducible held-out test error (excess MSE relative to the noiseless teacher). The cure is active acquisition of \emph{real} held-out examples; among three strategies at equal
budget, \emph{conditioned} (acquire examples with the largest null-space leverage $\|N^\top\phi\|$),
\emph{coverage} (random acquisition), and \emph{lowest-leverage} (smallest leverage), the conditioned
strategy's error stays flat at the first step ($0.240\!\to\!0.243$) then falls toward $0.02$, whereas random
and lowest-leverage acquisition \emph{raise} the error sharply ($0.32$, $0.55$) before recovering; conditioned
$<$ coverage $<$ lowest-leverage at every \emph{positive} budget (the three tie at the pool-only point
$0.240$). The
rank deficiency arises from the elementary $n_0<D$ dimension count alone. Reported over $30$ seeds with $95\%$ bootstrap intervals.
\end{remark}

\begin{figure}[t]
\centering
\includegraphics[width=\linewidth]{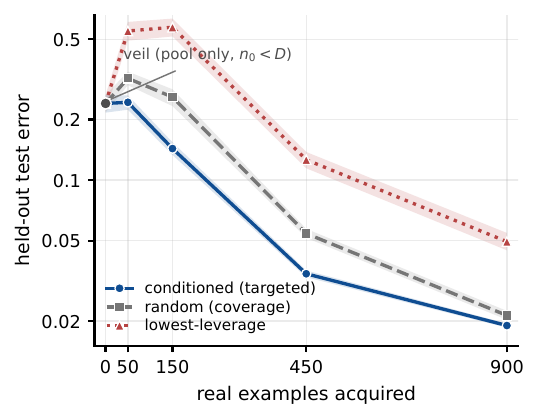}
\caption{A classical analogue of the obstruction in overparameterized random features ($D=150$, pool
$n_0=60<D$, label noise $0.3$, ridge $\lambda=0.1$). The pool Fisher information has an exact
$D-n_0=90$-dimensional null, so the predictor's component there is unidentifiable from same-size pool data.
Active acquisition of real held-out examples conditioned on the largest null-space leverage stays flat at the
first step ($0.240\!\to\!0.243$) then falls toward $0.02$, while random and lowest-leverage acquisition raise
the error sharply ($0.32$, $0.55$) before recovering. Reported over $30$ seeds with $95\%$ bootstrap
intervals.}
\label{fig:c7}
\end{figure}